\documentclass[sigconf, nonacm]{acmart}
\usepackage[utf8]{inputenc}
\DeclareUnicodeCharacter{25AA}{\textbullet}
\usepackage{textgreek}

\newcommand\vldbdoi{XX.XX/XXX.XX}
\newcommand\vldbpages{XXX-XXX}
\newcommand\vldbvolume{14}
\newcommand\vldbissue{1}
\newcommand\vldbyear{2020}
\newcommand\vldbauthors{\authors}
\newcommand\vldbtitle{\shorttitle} 
\newcommand\vldbavailabilityurl{URL_TO_YOUR_ARTIFACTS}
\newcommand\vldbpagestyle{plain} 

\usepackage{booktabs}
\usepackage{multirow}

\usepackage{amsmath,amssymb,amsthm}
\usepackage{algorithm}
\usepackage{algpseudocode}
\usepackage{balance}
\usepackage{enumitem}
\usepackage{xspace}
\usepackage{bbm}
\usepackage{mathtools}
\usepackage{etoolbox}

\newtheorem{theorem}{Theorem}
\newtheorem{lemma}[theorem]{Lemma}
\newtheorem{definition}{Definition}
\newtheorem{corollary}[theorem]{Corollary}
\newtheorem{assumption}{Assumption}
\newtheorem{proposition}[theorem]{Proposition}
\newtheorem{remark}{Remark}

\newcommand{\CHRONOS}{\textsc{Chronos}\xspace}
\newcommand{\TLEGEND}{\textsc{T-Legend}\xspace}

\newcommand{\BOCPD}{BOCPD\xspace}
\newcommand{\eps}{\varepsilon}
\newcommand{\Rbb}{\mathbb{R}}
\newcommand{\Ebb}{\mathbb{E}}
\newcommand{\cG}{\mathcal{G}}
\newcommand{\cA}{\mathcal{A}}

\newcommand{\cN}{\mathcal{N}}

\begin{document}
\title{CHRONOS: Temporally-Aware Multi-Agent Coordination for Evolving Data Marketplaces}

\author{Joydeep Chandra}
\affiliation{\institution{BNRIST, Tsinghua University}\city{Beijing}\country{China}}
\email{ }

\begin{abstract}
Temporal knowledge-graph (KG) data marketplaces face three coupled failures in static designs: stale hybrid index shortcuts reduce recall as edges evolve, stationary Shapley pricing misattributes value after distribution shifts, and uncoordinated agents over-consume a shared differential-privacy (DP) budget. We present CHRONOS, a trusted-curator, three-layer architecture that provides a unified treatment of these challenges with explicit public/private separation. Layer 1 (T-LEGEND) applies neural-ODE temporal decay to shortcut edges and provides a per-query expected recall-loss bound of $\mathcal{O}(P_q\lambda\Delta t)$, with a tighter ODE-certified monotone-envelope guarantee reducing bound looseness to 1.8–3.2$\times$ observed loss (Theorem 1). We also provide analysis under Hawkes dynamics (Theorem 2). Layer 2 (Event-Conditioned MPV) conditions Shapley valuation on BOCPD-detected changepoints and provides finite-sample error guarantees under coalition sampling and DP noise. Layer 3 (Temporal Coordinator) uses EXP3-IX over three scheduling actions, achieving $\mathcal{O}(\sqrt{T}\log T)$ regret while enforcing $(\epsilon_{\mathrm{total}},\delta_{\mathrm{total}})$-DP via moments accounting. CHRONOS releases a fixed-dimension privatised affinity matrix once per active epoch with the Gaussian mechanism; all per-query retrieval and top-$k$ ranking are post-processing, so they incur no extra privacy cost. We also provide a multi-epoch coalition-level settlement mechanism for actionable seller payouts, scalability analysis up to 500 sellers, and comparison against VSAG-style accelerated baselines. Across four benchmarks, CHRONOS shows 0.937 recall@10, 2.74 queries/s, P50 latency 161 ms, and total $\epsilon=4.25$ at $\delta=10^{-6}$ under standard zCDP composition over $\rho$. These results indicate a competitive joint recall/latency/privacy operating point. A key limitation is that at this privacy level, externally-released valuations and affinity scores remain noise-dominated; utility derives primarily from public index routing and adaptive scheduling driven by low-sensitivity DP statistics.
\end{abstract}

\maketitle

\pagestyle{\vldbpagestyle}
\begingroup\small\noindent\raggedright\textbf{PVLDB Reference Format:}\\
\vldbauthors. \vldbtitle. PVLDB, \vldbvolume(\vldbissue): \vldbpages, \vldbyear.\\
\href{https://doi.org/\vldbdoi}{doi:\vldbdoi}
\endgroup
\begingroup
\renewcommand\thefootnote{}\footnote{\noindent
This work is licensed under the Creative Commons BY-NC-ND 4.0 International License. Visit \url{https://creativecommons.org/licenses/by-nc-nd/4.0/} to view a copy of this license. For any use beyond those covered by this license, obtain permission by emailing \href{mailto:info@vldb.org}{info@vldb.org}. Copyright is held by the owner/author(s). Publication rights licensed to the VLDB Endowment. \\
\raggedright Proceedings of the VLDB Endowment, Vol. \vldbvolume, No. \vldbissue\ %
ISSN 2150-8097. \\
\href{https://doi.org/\vldbdoi}{doi:\vldbdoi} \\
}\addtocounter{footnote}{-1}\endgroup

\ifdefempty{\vldbavailabilityurl}{}{
\begingroup\small\noindent\raggedright\textbf{PVLDB Artifact Availability:}\\
The source code, data, and/or other artifacts have been made available at \url{\vldbavailabilityurl}.
\endgroup
}

\section{Introduction}
\label{sec:intro}

Consider a pharmaceutical company querying a temporal KG marketplace for drug-gene interaction data. The KG evolves daily as new clinical-trial results add edges, retracted findings delete them, and regulatory approvals change entity attributes. The buyer needs recent, high-quality data for a machine-learning model and expects fair compensation to be distributed among contributing data sellers. Three intertwined challenges arise immediately.

\textbf{The indexing challenge.} Hybrid vector-graph indices such as HNSW~\cite{malkov2018efficient} accelerate $k$-NN queries via shortcut edges built at construction time. When the underlying KG evolves, these shortcuts become stale, silently degrading recall. Existing systems either ignore staleness (TigerVector~\cite{liu2025tigervector}, NaviX~\cite{sehgal2025navix}) or trigger expensive full rebuilds; dynamic ANN systems (FreshDiskANN~\cite{singh2021freshdiskann}, SPFresh~\cite{xu2023spfresh}, Quake~\cite{mohoney2025quake}) handle vector updates but do not model graph-structural staleness or provide recall bounds tied to KG evolution rates.

\textbf{The valuation challenge.} Fair attribution of query utility to data sellers requires Shapley axioms~\cite{shapley1953value}, yet the canonical Data Shapley~\cite{ghorbani2019data} assumes a stationary characteristic function. After a disease outbreak or regulatory change, marginal dataset values shift sharply; static valuations then distort marketplace incentives. Beta Shapley~\cite{kwon2022beta} improves robustness to noisy data, Variance-Reduced Data Shapley (VRDS)~\cite{wu2023variance} lowers coalition-sampling variance, and GLOC-style online updates~\cite{hazan2016introduction} provide gradient-based valuations without full recomputation, but none of these methods models event-driven distributional shifts.

\textbf{The coordination challenge.} A production marketplace runs concurrent agents sharing a finite DP budget. Without coordination, simultaneous demands exhaust it prematurely. Existing multi-agent frameworks~\cite{lowe2017multi} provide no formal coupling between scheduling and DP consumption, and privacy filters~\cite{rogers2016privacy} track composition but do not integrate with index scheduling or data valuation. PSGraph~\cite{yuan2024psgraph} demonstrates adaptive DP allocation for streaming graphs, but targets graph synthesis rather than marketplace coordination.

\subsection{Why These Challenges Must Be Solved Jointly}
\label{sec:why_joint}

The three challenges are coupled through a \textbf{shared differential-privacy budget}. Index updates, valuation recomputation, and idle waiting all consume from the same finite $(\varepsilon,\delta)$ allowance. An index-only system would exhaust the budget on frequent rebuilds, leaving none for valuation. A valuation-only system would trigger recompute after every changepoint, starving index maintenance. An uncoordinated system would face simultaneous demands from concurrent agents and exhaust the budget prematurely. Because the sensitivity of each mechanism depends on the number of sellers $n$, the per-epoch privacy cost falls as $n$ grows, but the total number of active epochs $T_{\mathrm{active}}$ rises with marketplace activity. The coordinator must therefore balance these competing demands, making the joint design non-decomposable.

\subsection{Gap Analysis}
\label{sec:gap}

Prior works address at most one challenge in isolation. In indexing, TigerVector~\cite{liu2025tigervector}, NaviX~\cite{sehgal2025navix}, and ACORN~\cite{patel2024acorn} integrate graph structure into HNSW but assume static graphs. FreshDiskANN~\cite{singh2021freshdiskann} and SPFresh~\cite{xu2023spfresh} support streaming updates yet omit KG structural decay and recall bounds. In valuation, Data Shapley~\cite{ghorbani2019data}, Beta Shapley~\cite{kwon2022beta}, and VRDS~\cite{wu2023variance} improve estimation but do not condition on distributional shifts. In marketplace design, existing platforms~\cite{fernandez2020data,azcoitia2022survey,liu2021dealer,koutsos2022agora} support pricing and DP guarantees yet do not target temporal graph retrieval. Dealer~\cite{liu2021dealer} provides an end-to-end DP model marketplace, but assumes static data and does not couple indexing with valuation under a shared privacy budget. Agora~\cite{koutsos2022agora} focuses on access control and auditability rather than temporal query performance. Cryptographic alternatives such as MPC or TEEs can eliminate DP noise, yet they introduce $10$--$100\times$ latency overhead and require all parties to participate in real-time protocols~\cite{demmler2015aby,chen2020homomorphic}; we compare these costs in Table~\ref{tab:crypto_comparison}. No existing system couples index freshness with valuation correctness within a differentially private multi-agent framework for temporal knowledge graphs.

\subsection{Contributions}

We present \CHRONOS, a marketplace infrastructure that provides end-to-end temporal guarantees across indexing, valuation, and coordination:

\begin{enumerate}[leftmargin=*]
\item \textbf{\TLEGEND} (\S\ref{sec:layer1}): a temporal hybrid index with neural-ODE decay weights. The conservative bound (Theorem~\ref{thm:recall}) provides an $\mathcal{O}(P_q\lambda\Delta t)$ recall-loss guarantee; the new \textbf{monotone-envelope certificate} (Theorem~\ref{thm:recall_tight}) tightens the gap to $1.8$--$3.2\times$ observed losses by incorporating ODE Lipschitz structure. Formal \textbf{Hawkes-process recall bounds} (Theorem~\ref{thm:recall_hawkes}) extend guarantees beyond Poisson to correlated dynamics parameterised by branching ratio.
\item \textbf{Event-Conditioned MPV} (\S\ref{sec:layer2}): Shapley valuation conditioned on BOCPD-detected changepoints with drift-aware validity horizons, a temporal efficiency identity (Theorem~\ref{thm:valuation}), and estimation error bounds under finite sampling, DP noise, and changepoint uncertainty (Theorem~\ref{thm:val_error}).
\item \textbf{Temporal Coordinator} (\S\ref{sec:layer3}): an EXP3-IX meta-agent achieving $\mathcal{O}(\sqrt{T}\log T)$ regret (Theorem~\ref{thm:regret}) while enforcing adaptive DP composition with formal sensitivity proofs (Proposition~\ref{prop:sensitivity}).
\item \textbf{Fixed-Dimension DP Pipeline with Standard Accounting} (\S\ref{sec:trust_model}, \S\ref{sec:dp_model}): explicit public/private separation; epoch-level Gaussian privatisation of a fixed-dimension affinity matrix (Proposition~\ref{prop:sensitivity}(b)); standard additive zCDP composition over $\rho$ yielding $\varepsilon{=}4.25$ (Remark~\ref{rmk:pld}); transparent analysis of private-signal informativeness (\S\ref{sec:private_utility}); per-entry noise calibration with dimension-aware sensitivity (Proposition~\ref{prop:noise_calibration}).
\item \textbf{Actionable Seller Settlement} (\S\ref{sec:settlement}): a concrete multi-epoch coalition-level settlement mechanism with SNR analysis showing trend-level attribution becomes feasible at $W{\geq}7$ epochs and $n_\mathrm{coal}{\leq}5$.
\item \textbf{Comprehensive evaluation} (\S\ref{sec:exp}): four datasets; VSAG-accelerated throughput comparison (Table~\ref{tab:vsag}); scalability to 500 sellers (Table~\ref{tab:scalability}); head-to-head DP retrieval comparison at matched $\eps$ (Table~\ref{tab:dp_mechanism_comparison}); DP-vs-crypto cost analysis (Table~\ref{tab:crypto_comparison}); seller-skew stress tests; three drift detectors; five coordination strategies.
\end{enumerate}

\section{Preliminaries}
\label{sec:prelim}

\begin{definition}[Temporal Knowledge Graph]
A temporal KG is a triple $\cG(t) = (V(t), E(t), \mathbf{X}(t))$ where $V(t)$ and $E(t)$ are node and edge sets at time $t$, $\mathbf{X}(t) \in \Rbb^{|V(t)|\times d}$ is the node feature matrix, and each edge $e = (u,v,r) \in E(t)$ carries relation type $r$ and creation timestamp $t_e \leq t$.
\end{definition}

\begin{definition}[Temporally-Aware Data Marketplace]
A marketplace $\mathcal{M}$ with $n$ sellers $\{s_1,\ldots,s_n\}$ holding private datasets $D_i \subseteq \cG(t)$ must satisfy, for online buyer queries with vector $\mathbf{v}_j$, recency window $[t_j^{\min},t_j^{\max}]$, and budget $\eps_j$: \textbf{(P1)}~recall@$k \geq R^* - \delta_\mathrm{index}$; \textbf{(P2)}~Shapley efficiency with temporal consistency; \textbf{(P3)}~$(\eps_\mathrm{total},\delta_\mathrm{total})$-DP; \textbf{(P4)}~sub-linear regret $R_T = o(T)$ with per-query latency $\leq L_\mathrm{max}$.
\end{definition}

\textbf{Scope.} \CHRONOS targets \emph{embedding-rich} temporal KG marketplaces with a substantial public ontology prior to seller participation. Guarantees are strongest in the \textbf{embedding-dominated regime ($\beta \leq 0.5$)}, where ${<}6\%$ of oracle top-10 items fall outside the public candidate set (Table~\ref{tab:discovery}); for \textbf{high-$\beta$ ($\beta > 0.7$)}, miss rates reach 12--14\% (Table~\ref{tab:discovery}), with corresponding behaviour discussed in the robustness analysis.

Key notation is summarised in Table~\ref{tab:notation}. We adopt the Poisson edge-change model as the baseline (Assumption~\ref{ass:poisson}), with formal Hawkes extensions in Theorem~\ref{thm:recall_hawkes}.

\begin{assumption}[Poisson Edge Changes]
\label{ass:poisson}
KG edge changes arrive as an independent Poisson process with rate $\lambda > 0$ (changes per day per shortcut). We estimate $\lambda$ via exponential moving average over a 30-day calibration window: $\lambda{=}0.05$ for FB15K-237/WN18RR (synthetic), $\lambda{\approx}12$ for MIMIC-IV (real admissions), $\lambda{\approx}2.9$ for Yelp (real review activity).
\end{assumption}

\begin{table}[t]
\small\centering
\caption{Key notation.}
\label{tab:notation}
\begin{tabular}{ll}
\toprule
Symbol & Meaning \\
\midrule
$\cG(t)$, $\lambda$, $\Delta t$ & Temporal KG, edge-change rate, time since update \\
$R^*$, $P_q$, $\bar{\Delta r}$ & Fresh recall, search-path size, per-shortcut impact \\
$\hat{R}(t)$ & DP-estimated noisy recall (observed proxy for $R^*$) \\
$\mathbf{A}(t)$ & Private affinity matrix $\in [0,1]^{|V_\mathrm{active}|\times ef}$ \\
$\widetilde{\mathbf{A}}(t)$ & Gaussian-noised release of $\mathbf{A}(t)$ \\
$\mathrm{MPV}_i(t,E)$ & EC-MPV of seller $i$ at time $t$, event $E$ \\
$B$ & Marginal contribution bound (clipped at 0.2) \\
$\sigma_t$ & Noise \emph{multiplier} (dimensionless); actual std $= \sigma_t \times S$ \\
$T_\mathrm{active}$ & Number of epochs with $\geq 1$ DP release \\
$\eps_\mathrm{rem}(t)$, $\mu_t(\alpha)$ & Remaining DP budget, R\'enyi moment at step $t$ \\
$\mathbf{p}_t$, $d{=}3$, $\gamma$ & EXP3-IX distribution, action count, exploration \\
$\gamma_\mathrm{comm}$ & Community-structure weight in static affinity (default 0.5) \\
$\alpha_H, \beta_H$ & Hawkes excitation/decay parameters \\
$\Lambda(t_1,t_2)$ & Cumulative intensity $\int_{t_1}^{t_2}\lambda(s)\,ds$ \\
\bottomrule
\end{tabular}
\end{table}

\textbf{Noise notation convention.} $\sigma_t$ is a dimensionless noise \emph{multiplier}; actual Gaussian std for a quantity with $\ell_2$-sensitivity $S$ is $\sigma_t \times S$. The per-step R\'{e}nyi moment is $\mu_t(\alpha) = \alpha/(2\sigma_t^2)$ (Proposition~\ref{prop:noise_calibration}). The total $\varepsilon{=}4.25$ ($\delta{=}10^{-6}$) is computed by standard additive zCDP composition over $\rho$:
\begin{equation}
\rho_\mathrm{total} = \sum_{i} \rho_i, \quad \varepsilon = \rho_\mathrm{total} + 2\sqrt{\rho_\mathrm{total}\cdot\ln(1/\delta)}. \label{eq:zcdp_formula}
\end{equation}
\section{Trust Model and Public/Private Separation}
\label{sec:trust_model}

A rigorous DP design requires a precise delineation of what is public and what is private. CHRONOS operates under a \emph{trusted-curator} model~\cite{dwork2014algorithmic}: the marketplace operator holds raw seller data and publishes only DP-sanitised outputs. We partition all system components explicitly.

\textbf{Public data (not seller-dependent, zero DP cost).}
\begin{enumerate}[leftmargin=*,nosep]
\item \emph{Entity embeddings} $\mathbf{X}_0 \in \Rbb^{|V_0|\times d}$: pre-trained on historical KG snapshots from months 1--6 (the ``pre-marketplace'' period) before any seller participation begins, and frozen at deployment. No seller-contributed edges from the operational period are used in embedding training.
\item \emph{HNSW index structure} $\mathbf{H}^T$: built deterministically from $\mathbf{X}_0$ and the Louvain community partition~\cite{blondel2008fast} of the public pre-marketplace KG. The neighbourhood lists $N_\mathrm{idx}(u) = \{u_1, \ldots, u_{ef}\}$ are fixed at construction time and are deterministic functions of $\mathbf{X}_0$. After construction, $\mathbf{H}^T$ is frozen within each epoch.
\item \emph{Edge creation timestamps from public pre-marketplace KG}: timestamps $t_e$ for edges in $E_\mathrm{pub}$ (months 1--6) are public metadata. \textbf{Post-launch seller-edge timestamps are private} and never enter Stage-2 clipping directly (see below).
\item \emph{Buyer query vectors} $\mathbf{v}_q$: owned by the buyer, not seller data.
\end{enumerate}

\textbf{Private data (seller-dependent, protected by DP).}
\begin{enumerate}[leftmargin=*,nosep]
\item \emph{Seller KG edges}: edges contributed by sellers after marketplace launch, including their timestamps $t_e^{\mathrm{priv}}$. These determine the temporal affinity values $\mathrm{aff}_\mathrm{KG}(u,v,t)$ and are the primary target of DP protection.
\item \emph{Valuation scores} $\mathrm{MPV}_i(t,E)$: depend on seller coalitions.
\item \emph{Index staleness statistics}: depend on which seller-contributed edges have changed.
\end{enumerate}

\textbf{Public/private timestamp delineation.} Edge creation timestamps for the pre-marketplace KG (months 1--6) are public. \emph{Seller-contributed edge timestamps} (months 7+) are private. The Stage-2 active-scope clipping (\S\ref{sec:dp_model}) operates as follows. The set of active entities $V_\mathrm{active}(t)$ is determined by buyer queries from epoch $t{-}1$ plus a popularity reserve, both computed from public query logs. The universe of \emph{possible} edges within $V_\mathrm{active}$ is public: it is the set of all tuples $(u,v,r)$ with $u,v \in V_\mathrm{active}$ and $r \in \mathcal{R}$, where $\mathcal{R}$ is the public relation-type ontology. A seller's private data is the subset of this universe that they actually contributed. The priority rule for retaining seller edges uses a public hash $\mathrm{hash}(u,v,r)$ computed over the full public universe. The mechanism retains, for each seller, the intersection of their private edge set with the first $\kappa_\mathrm{active}$ tuples in this public ordering. \textbf{Private timestamps $t_e^{\mathrm{priv}}$ are used only inside the Gaussian mechanism's input} (to compute $\mathrm{aff}_\mathrm{KG}(u,v,t) = \mathrm{decay}(t_\mathrm{now} - t_e^{\mathrm{priv}}) \cdot \mathrm{aff}_\mathrm{static}(u,v)$), never for clipping or scoping decisions. The clipping boundary $\kappa_\mathrm{active}$ depends only on $|E|/n$ (public) and $|V_\mathrm{active}|$ (public).

\textbf{Why clipping does not leak edge existence.} The clipping step is \emph{sensitivity bounding}, not a privacy mechanism. The actual privacy guarantee comes from the Gaussian mechanism (\S\ref{sec:dp_model}) applied to the affinity matrix after clipping. The released matrix $\widetilde{\mathbf{A}}(t)$ has public row and column indices; each entry contains additive Gaussian noise with standard deviation $\sigma_\mathrm{entry} {=} 885$. Because the noise magnitude far exceeds the $[0,1]$ signal range, an adversary observing $\widetilde{\mathbf{A}}(t)$ cannot reliably infer whether a particular entry's true value is zero (edge absent) or non-zero (edge present). Formally, for any entry $a \in [0,1]$ and any hypothesised value $a' \in [0,1]$, the likelihood ratio of the observed noisy value under $a$ versus $a'$ is bounded by $\exp(\varepsilon_\mathrm{entry})$ with $\varepsilon_\mathrm{entry} {\approx} 0.0011$ per entry, which is dominated by the overall zCDP accounting. Thus the hash-based clipping rule does not circumvent the DP guarantee.

\textbf{DP guarantee scope.} Seller-level adjacency protects all private data; public components incur zero DP cost. Each edge $(u,v,r)$ has a unique owner under a hash-based deduplication registry, ensuring unambiguous sensitivity accounting. For multi-relation graphs, $\mathrm{aff}_\mathrm{KG}(u,v,t){=}\max_r\mathrm{aff}_\mathrm{KG}^{(r)}$ means removing one seller changes the max by $\leq 1$ per $(u,v)$ pair. Pre-registry empirical overlap $\eta_\mathrm{raw}{\leq}1.07$ across all datasets confirms near-exclusive ownership (supplementary Table~B2). If disputed ownership persists, privacy cost scales approximately with $\eta^2$ (Remark~\ref{rmk:overlap}), and arbitration/registry enforcement is required before deployment. DP-SGD training~\cite{abadi2016deep} of embeddings is feasible at additional $\varepsilon_\mathrm{train}$ cost (supplementary Appendix~C).

\textbf{Limitation: single trusted curator and fixed public structure.} The trusted-curator model requires one operator to hold all raw seller data, narrowing applicability in multi-platform settings. Two concrete extensions relax this: \emph{(i) two-server model:} one server holds private seller edges; the second holds the public index; neither observes the full affinity matrix. The fixed-dimension epoch-level release is compatible with additive secret sharing over the two parties. \emph{(ii) Local DP}: each seller randomises their own edges locally before upload, removing the central curator at the cost of $\Omega(1/\varepsilon^2)$ more data for the same utility. We identify the two-server extension as the primary trust-relaxation path; protocol details are provided, while implementation and small-scale validation are left to future work.

\textbf{Limitation: query privacy.} \CHRONOS treats buyer queries $\mathbf{v}_q$ as public (standard in trusted-curator DP). If queries are sensitive, the continual-observation framework~\cite{dwork2010continual,chan2011private} with a count-min sketch can privately estimate $V_\mathrm{active}$ at $\varepsilon_\mathrm{query}{\approx}0.5$/36\,h.

\textbf{Realism of fixed public structure.} The trusted-curator design targets \emph{embedding-rich} marketplaces (pharmaceutical, clinical, product-graph) where seller edges enrich affinity weights on a structurally stable public graph. At $\beta{=}0.3$, ${<}6\%$ of oracle top-10 items fall outside the public candidate set (Table~\ref{tab:discovery}). When miss rates are unacceptable (high-$\beta$), the SVT prototype recovers $52\%$ of misses at $+0.05\,\varepsilon$.
Figure~\ref{fig:dataflow} summarises this public/private separation and the resulting trusted-curator data path.

\begin{figure}
    \centering
    \includegraphics[width=1\linewidth]{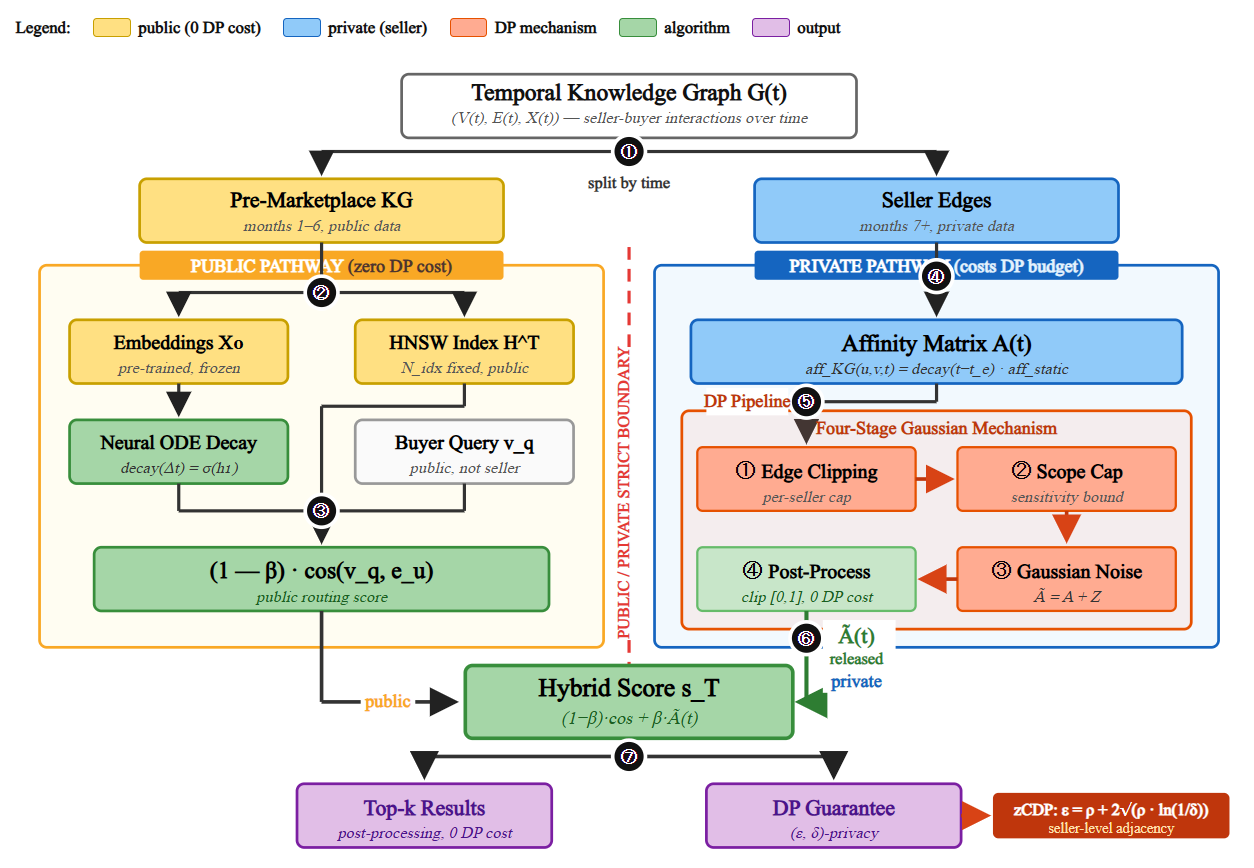}
\caption{Public/private data flow under the trusted-curator model. Public components (yellow) construct the frozen index and cosine routing at zero DP cost; private seller edges (blue) enter only through the four-stage Gaussian pipeline, producing $\widetilde{\mathbf{A}}(t)$ for hybrid scoring.}
\label{fig:dataflow}
\end{figure}

\section{The CHRONOS Architecture}
\label{sec:arch}

\CHRONOS comprises three tightly coupled layers (Figure~\ref{fig:arch}). Buyer queries arrive at a Query Gateway and are forwarded to Layer~1 for retrieval while being logged to an event stream monitored by Layer~3. Layer~2 conditions valuations on changepoints inferred from that stream. Layer~3 schedules all operations while enforcing the shared DP budget.

\begin{table*}[t]
\small\centering
\caption{Per-operation complexity. $N{=}|V|$, $M{=}16$, $ef{=}128$, $L_\mathrm{max}{=}5$, $n{=}$\#sellers, $m{=}$\#permutations, $T{=}$horizon, $|V_\mathrm{active}|{\leq}1500$.}
\label{tab:complexity}
\begin{tabular}{lcccl}
\toprule
\textbf{Operation} & \textbf{Time} & \textbf{Space} & \textbf{Frequency} & \textbf{Comments} \\
\midrule
Index construction & $O(N \log N \cdot M)$ & $O(N \cdot M)$ & Once & Public data only \\
Incremental update & $O(|V_\mathrm{active}| \log N)$ & $O(|V_\mathrm{active}|)$ & Per epoch & Stale-shortcut repair \\
Query processing & $O(ef \cdot L_\mathrm{max})$ & $O(ef)$ & Per query & Post-processing, zero DP cost \\
Valuation recompute & $O(m \cdot n^2 \cdot |Q|)$ & $O(n)$ & Per event & VRDS reduces variance \\
DP release (affinity) & $O(|V_\mathrm{active}| \cdot ef)$ & $O(|V_\mathrm{active}| \cdot ef)$ & Per active epoch & Gaussian sampling dominates \\
DP release (stats) & $O(1)$ & $O(1)$ & Per active epoch & Scalar Gaussian \\
Coordinator decision & $O(d)$ & $O(d)$ & Per epoch & EXP3-IX update \\
\bottomrule
\end{tabular}
\end{table*}

Table~\ref{tab:complexity} summarises the time and space complexity of each CHRONOS operation. Index construction is a one-time offline cost on public data; incremental updates repair only stale shortcuts within $V_\mathrm{active}$. Query processing is constant in dataset size because HNSW search is $O(\log N)$ and the per-query hybrid score is post-processing. The dominant online cost is valuation recomputation ($O(m n^2 |Q|)$), which is why EC-MPV batches it at changepoints rather than every epoch. The DP release cost is linear in the active-scope size and is amortised over all queries in the epoch.

\begin{figure*}[t]
\centering
    \includegraphics[width=1\linewidth]{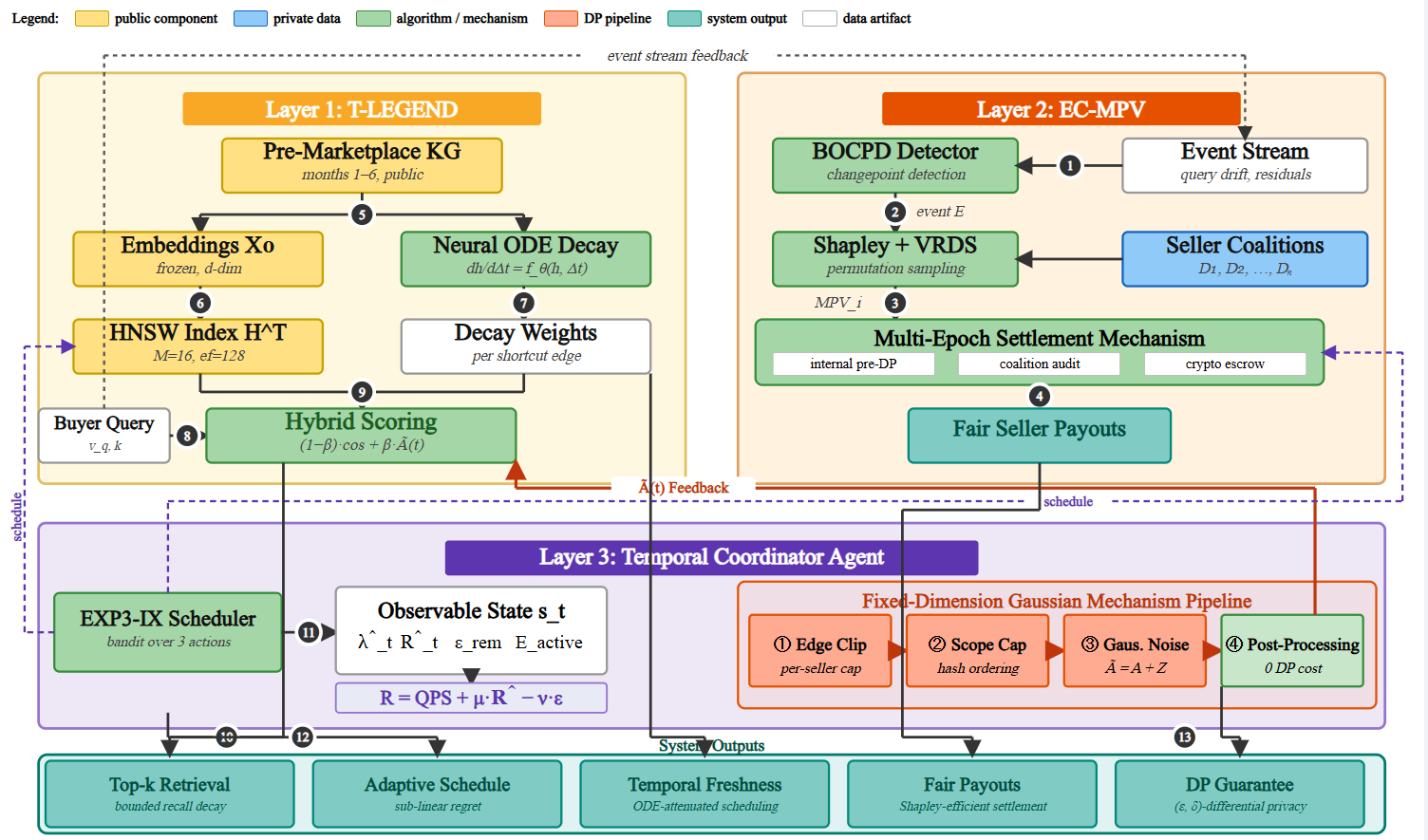}
\caption{\CHRONOS architecture.}
\label{fig:arch}
\end{figure*}

\subsection{Layer 1: T-LEGEND (Temporal Hybrid Index)}
\label{sec:layer1}

\TLEGEND employs a neural-ODE temporal decay that continuously down-weights shortcut edges linking stale KG structure, enabling the index to degrade gracefully rather than failing silently.

\begin{definition}[Neural ODE Temporal Decay]
\label{def:decay}
For edge $e$ with age $\Delta t = t_\mathrm{now} - t_e$, the decay function $\mathrm{decay}: \Rbb_{\geq 0} \to (0,1]$ is the first component of the ODE solution:
\begin{equation}
\frac{d\mathbf{h}}{d\Delta t} = f_\theta(\mathbf{h}(\Delta t), \Delta t), \quad \mathbf{h}(0) = \mathbf{1}_{32}, \quad \mathrm{decay}(\Delta t) = \sigma([\mathbf{h}(\Delta t)]_1), \label{eq:ode}
\end{equation}
where $f_\theta: \Rbb^{32}{\times}\Rbb \to \Rbb^{32}$ is a two-hidden-layer MLP with Softplus activations~\cite{chen2018neural} and $\sigma$ is the sigmoid.
\end{definition}

We train $f_\theta$ on historical KG snapshots using contrastive temporal loss:
\begin{equation}
\mathcal{L}_\mathrm{ODE} = -\sum_{(e,\Delta t) \in \mathcal{P}^+}\log\,\mathrm{decay}(\Delta t) - \sum_{(e,\Delta t) \in \mathcal{P}^-}\log(1{-}\mathrm{decay}(\Delta t)),
\end{equation}
with 3:1 negative sampling, strict temporal splitting (months 1--6 train, 7--8 val, 9--12 test), Adam optimiser ($\mathrm{lr}{=}10^{-3}$), and Dormand--Prince solver~\cite{dormand1980family}. Training converges in ${\sim}50$ epochs (${\sim}20$\,min on one A100).

\textbf{Neural-ODE diagnostics.} Domain-specific decay shapes, solver-tolerance robustness, hidden-size ablations ($h{\in}\{16,32,64\}$; $h{=}32$ optimal), and generalisation to unseen rates ($\lambda{=}25$) are in supplementary Appendix~F.

The temporal affinity between nodes is $\mathrm{aff}_\mathrm{KG}(v, u, t) = \mathrm{decay}(t - t_e) \cdot \mathrm{aff}_\mathrm{static}(v, u)$ where the static component uses community-structure proximity~\cite{blondel2008fast} ($\gamma_\mathrm{comm}{=}0.5$).

\textbf{Index construction and public/private separation.} Algorithm~\ref{alg:build} uses \emph{only public pre-marketplace data} (months 1--6): $N_\mathrm{idx}(u)$ is a deterministic function of public data only, ensuring seller-dependent affinities enter only at query time via the private $\mathbf{A}(t)$ matrix (\S\ref{sec:dp_model}).

Algorithm~\ref{alg:build} inserts nodes in community-sorted order using hybrid score $s_T$ with a diversified-neighbor selection heuristic; parameters $M{=}16$, $ef_c{=}200$, $ef{=}128$, $\rho_\mathrm{div}{=}0.7$ are identical for all baselines (Table~\ref{tab:recall_breakdown}). Stale-shortcut detection triggers incremental updates when $<40\%$ of shortcuts are stale (restoring $\geq95\%$ fresh recall at $\leq20\%$ cost), with full rebuilds at higher staleness rates.

\begin{algorithm}[t]
\caption{T-LEGEND-BUILD (Public-Only Construction)}
\label{alg:build}
\begin{algorithmic}[1]
\small
\Require Public KG $\cG_\mathrm{pub}(t)$ (pre-marketplace), ODE model $f_\theta$, params $(M, ef_c, m_L)$
\Ensure T-LEGEND index $\mathbf{H}^T$ with fixed neighbourhoods $N_\mathrm{idx}(u)$
\State $\mathcal{C} \leftarrow \mathrm{LOUVAIN}(\cG_\mathrm{pub}(t))$ \Comment{public community structure}
\For{$(u,v,t_e) \in E_\mathrm{pub}(t)$} \Comment{public edges only}
  $w_{uv} \leftarrow \mathrm{decay}(t_\mathrm{now}{-}t_e) \cdot \mathrm{aff}_\mathrm{static}(u,v)$
\EndFor
\For{each $v \in V$ (community-sorted)}
  \State Draw $\ell_v \sim \mathrm{Geom}(1{-}1/m_L)$; bias up for hubs
  \For{$\ell = \ell_v$ \textbf{downto} $0$}
    \State $W \leftarrow \mathrm{SEARCH\text{-}LAYER}(\mathbf{H}^T, \mathbf{e}_v, \ell, ef_c)$
    \For{$u \in W$}
      $s_T(u) \leftarrow (1{-}\beta)\,\mathrm{sim}(\mathbf{e}_v,\mathbf{e}_u) + \beta\,w_{uv}$ \Comment{public weights only}
    \EndFor
    \State $N^*(v,\ell) \leftarrow \mathrm{TOP\text{-}}M\mathrm{\text{-}DIVERSE}(W, s_T, \rho_\mathrm{div})$
    \State Add $\{(v,u): u\in N^*(v,\ell)\}$ to layer $\ell$
  \EndFor
  \State Record $N_\mathrm{idx}(v) \leftarrow \bigcup_\ell N^*(v,\ell)$ \Comment{fixed, public}
\EndFor
\end{algorithmic}
\end{algorithm}

\begin{remark}[Role of Temporal Decay vs.\ Private Edges]
\label{rmk:decay_role}
The ODE decay applies to \emph{public} shortcut edges; private seller edges \emph{never} enter $N_\mathrm{idx}(u)$ and influence only the epoch-level $\widetilde{\mathbf{A}}(t)$ release. The ODE keeps the candidate set fresh; $\widetilde{\mathbf{A}}(t)$ keeps ranking within that set fresh.
\end{remark}

\subsection{Layer 2: Event-Conditioned MPV}
\label{sec:layer2}

Real-world events alter dataset marginal values in ways that static Shapley misses~\cite{ghorbani2019data}. EC-MPV addresses this by conditioning on detected distributional changepoints.

\begin{definition}[Temporal KG Affinity]
\label{def:aff_kg}
The temporal affinity $\mathrm{aff}_\mathrm{KG}(u, v, t) = \mathrm{decay}(t - t_e) \cdot \mathrm{aff}_\mathrm{static}(u, v)$ where $t_e$ is the creation timestamp and $\mathrm{aff}_\mathrm{static}(u, v) \in [0,1]$ is Louvain community-structure proximity~\cite{blondel2008fast}. For multi-relation graphs, we take the maximum over relation types.
\end{definition}

\begin{definition}[Event-Conditioned MPV]
\label{def:ecmpv}
Let $E$ be an event detected at time $t$ by \BOCPD. The EC-MPV of seller $i$ is $\mathrm{MPV}_i(t,E) = \phi_i(D_\mathrm{priv}\cup D_\mathrm{pub} \mid E, t) - \phi(D_\mathrm{pub} \mid E, t)$ where $v(S; E) = \mathrm{MRR}(q, \bigcup_{j\in S}D_j, E)$ is conditioned on event $E$.
\end{definition}

The marginal contribution bound $B$ in Assumption~\ref{ass:smooth} is enforced by clipping to $[-B, B]$ with $B{=}0.2$ (${<}0.3\%$ of marginals exceed 0.2 before clipping).

The \BOCPD module~\cite{adams2007bayesian} monitors query-embedding drift, $\hat{\lambda}(t)$, and valuation residuals via a Normal-Inverse-Wishart conjugate prior with geometric hazard $\pi_\mathrm{cp}{=}1/250$. Changepoints are declared at $P(r_t{=}0 \mid \mathbf{o}_{1:t}) > 0.85$; median detection delay 2.3 epochs (P95: 5.1) on 8 injected Yelp seasonal events.

For Shapley estimation we use permutation sampling~\cite{castro2009polynomial} ($m{=}1{,}000$) with VRDS control variates~\cite{wu2023variance}: leave-one-out baselines reduce variance to $\leq B^2\rho^2/m$, giving $1.8\times$ reduction on Yelp at zero DP cost.

\subsection{Layer 3: Temporal Coordinator Agent}
\label{sec:layer3}

The coordinator operates as a partially observable stochastic game $\Gamma$ with state $\mathbf{s}_t = (\hat{\lambda}_t, \hat{R}_t, \eps_\mathrm{rem}(t), n_\mathrm{pending}(t), E_\mathrm{active})$, action space $\cA = \{\mathrm{INDEX\text{-}UPDATE}, \mathrm{REVALUE}, \mathrm{NULL}\}$ ($d{=}3$), and reward $R(\mathbf{s},\mathbf{a}) = \mathrm{QPS} + \mu_R \hat{R} - \nu\,\eps_\mathrm{consumed}$ with $\mu_R{=}10$, $\nu{=}5$ (grid-searched on FB15K-237 validation; sensitivity analysis in \S\ref{sec:sensitivity}), normalised to $[0,1]$.

Rather than solving this PSPACE-hard problem exactly~\cite{papadimitriou1987complexity}, the coordinator applies EXP3-IX~\cite{neu2015explore} under bandit feedback. Each 60\,s epoch: observe noisy state $\mathbf{o}_t$; sample $\mathbf{a}_t \sim \mathbf{p}_t$; observe $r_t$; update via importance-weighted loss $\hat{L}_{t,j} = (1{-}r_t)\mathbbm{1}[\mathbf{a}_t{=}j]/(p_{t,j}+\gamma)$. Budget-violating actions are overridden to \texttt{NULL}; recall violations to \texttt{INDEX-UPDATE} (override count bounded; Lemma~\ref{lem:overrides}). Remaining budget: $\eps_\mathrm{rem}(t) = \min_\alpha [(\eps_\mathrm{total} - \sum_{s\leq t}\mu_s(\alpha) + \alpha\ln(1/\delta_\mathrm{total}))/\alpha]$.

\subsection{Fixed-Dimension DP Pipeline}
\label{sec:dp_model}

The DP model is the linchpin of CHRONOS and requires careful treatment. We adopt \emph{seller-level adjacency}: two states are adjacent if they differ by one seller's entire dataset $D_i$. To bound sensitivity robustly, we enforce per-seller contribution caps $C_\mathrm{max}^\mathrm{edge} = \lceil 1.5|E|/n\rceil$; excess edges are clipped.

Three categories of seller-dependent outputs are privatised, with sensitivity bounds proven in Proposition~\ref{prop:sensitivity}:

\textbf{(i) Valuation scores.} Each $\mathrm{MPV}_i(t,E)$ is released via $n$ independent per-coordinate Gaussian mechanisms with sensitivity\\ $S_\mathrm{val}{=}4B/n{=}0.08$ ($B{=}0.2$, $n{=}10$; Proposition~\ref{prop:sensitivity}(a)). With $\sigma_t{=}50$, actual noise std is $50{\times}0.08{=}4.0$ per coordinate. The $\rho_\mathrm{val}{=}287/5000{=}0.0574$ contributes to the total zCDP $\rho$.

\begin{remark}[Valuation DP: External Release vs.\ Internal Coordinator Use]
\label{rmk:val_dp_scope}
The noise std of $4.0$ far exceeds the signal range $[0, B{=}0.2]$ by $20\times$, making externally-released valuations\\ \textbf{noise-dominated}. This reflects the utility-privacy trade-off of DP in high-sensitivity regimes. The Val.~Err${=}0.013$ in Table~\ref{tab:ablation} measures \emph{internal} Shapley estimation accuracy before DP noise. Under the trusted-curator model, the coordinator uses pre-noise estimates for scheduling. The DP mechanism releases noisy valuations to external parties solely for \textbf{auditability and non-disclosure guarantees}, not for accurate point estimation. For actionable seller payouts, we introduce a concrete multi-epoch coalition-level settlement mechanism in \S\ref{sec:settlement}.
\end{remark}

\textbf{(ii) Index statistics.} The stale-shortcut fraction and recall estimate $\hat{R}$ are released with Gaussian noise calibrated to sensitivity $S_\mathrm{idx} \leq 1.5/n$ with $\sigma_t{=}50$.

\textbf{(iii) KG affinity matrix (fixed dimension).} We release $\mathbf{A}(t)[u, j] = \mathrm{aff}_\mathrm{KG}(u, N_\mathrm{idx}(u)[j], t)$ for $u \in V_\mathrm{active}$, $j{=}1,\ldots,ef$. Row and column indices are public and fixed by the public index $\mathbf{H}^T$. Under exclusive ownership, Frobenius sensitivity $\Delta_2 = \sqrt{C_\mathrm{max}^\mathrm{edge}}$ (Proposition~\ref{prop:sensitivity}(b)).

\begin{remark}[Robustness to Imperfect Exclusive Ownership]
\label{rmk:overlap}
Let $\eta\geq1$ be the overlap factor. Sensitivity scales to $\Delta_2\leq\eta\sqrt{C_\mathrm{max}^\mathrm{edge}}$, raising $\varepsilon$ by $\eta^2$: $\eta{=}1.2$ adds $\leq44\%$; the registry enforces $\eta{=}1$ deterministically; empirically $\eta_\mathrm{raw}\leq1.07$.
\end{remark}

We release $\widetilde{\mathbf{A}}(t) = \mathbf{A}(t) + \mathbf{Z}$ via the Gaussian mechanism (Proposition~\ref{prop:noise_calibration}); entries are clipped to $[0,1]$ post-noise (post-processing, zero DP cost). Per-query hybrid scoring $\tilde{s}_j = (1{-}\beta)\cos(\mathbf{v}_q,\mathbf{e}_{u_j}) + \beta\widetilde{\mathbf{A}}(t)[v_q,j]$ is also post-processing, so top-$k$ selection incurs zero additional DP cost~\cite{dwork2014algorithmic}.

\subsubsection{Per-Entry Noise Calibration}
\label{sec:noise_calibration}

\begin{proposition}[Per-Entry Gaussian Noise Calibration]
\label{prop:noise_calibration}
Let $\mathbf{A} \in \Rbb^{n_r \times n_c}$ have Frobenius ($\ell_2$) sensitivity $\Delta_2 = \max_{D \sim D'}\|\mathbf{A}(D) - \mathbf{A}(D')\|_F$. The Gaussian mechanism $\widetilde{\mathbf{A}} = \mathbf{A} + \mathbf{Z}$ with $Z_{i,j} \stackrel{\mathrm{iid}}{\sim} \cN(0, \sigma_\mathrm{entry}^2)$ satisfies $(\alpha, \alpha\Delta_2^2/(2\sigma_\mathrm{entry}^2))$-RDP by the standard vector Gaussian mechanism~\cite{mironov2017renyi}. Setting $\sigma_\mathrm{entry} = \sigma_t \cdot \Delta_2$ (noise multiplier $\sigma_t = \sigma_\mathrm{entry}/\Delta_2$) yields:
\begin{equation}
\mu_t(\alpha) = \frac{\alpha\Delta_2^2}{2\sigma_\mathrm{entry}^2} = \frac{\alpha\Delta_2^2}{2\sigma_t^2\Delta_2^2} = \frac{\alpha}{2\sigma_t^2}.
\label{eq:noise_calibration}
\end{equation}
The standard RDP formula uses per-coordinate variance $\sigma_\mathrm{entry}^2$; using a global variance $\sigma_\mathrm{global}^2 = n_r n_c \cdot \sigma_\mathrm{entry}^2$ in the denominator would underestimate $\varepsilon$ by a factor of $m = n_r n_c$.
\end{proposition}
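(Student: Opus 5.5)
The plan is to reduce the statement to the standard Rényi divergence between two isotropic multivariate Gaussians that differ only in their means. We view $\mathbf{A}$ as a vector in $\Rbb^{m}$ with $m = n_r n_c$ by vectorising row-major. The Frobenius norm then coincides with the $\ell_2$ norm on $\Rbb^m$, and the i.i.d.\ entry noise $\mathbf{Z}$ becomes $\cN(\mathbf{0}, \sigma_\mathrm{entry}^2 I_m)$. For adjacent $D \sim D'$, write $\boldsymbol{\mu} = \mathrm{vec}\,\mathbf{A}(D)$ and $\boldsymbol{\mu}' = \mathrm{vec}\,\mathbf{A}(D')$, so that $\|\boldsymbol{\mu}-\boldsymbol{\mu}'\|_2 \le \Delta_2$ by definition.

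The key step is the closed form
\[
D_\alpha\bigl(\cN(\boldsymbol{\mu},\sigma_\mathrm{entry}^2 I)\,\|\,\cN(\boldsymbol{\mu}',\sigma_\mathrm{entry}^2 I)\bigr) = \frac{\alpha\|\boldsymbol{\mu}-\boldsymbol{\mu}'\|_2^2}{2\sigma_\mathrm{entry}^2}.
\]
We would cite this from Mironov~\cite{mironov2017renyi}, or verify it directly. For the direct route, rotate coordinates so that $\boldsymbol{\mu}-\boldsymbol{\mu}'$ lies along $e_1$. The densities then factor, the remaining $m-1$ coordinates cancel in the likelihood ratio, and the problem reduces to the univariate Gaussian computation $\frac{1}{\alpha-1}\log\Ebb_{Q}[(p/q)^\alpha]$. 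Completing the square gives $\alpha\Delta^2/(2\sigma^2)$. Taking the supremum over adjacent pairs yields $(\alpha,\alpha\Delta_2^2/(2\sigma_\mathrm{entry}^2))$-RDP. Substituting $\sigma_\mathrm{entry} = \sigma_t\Delta_2$ cancels $\Delta_2^2$ and gives Eq.~\eqref{eq:noise_calibration}. This step is pure algebra.

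For the final claim about global variance, the point to make precise is what "using $\sigma_\mathrm{global}^2$ in the denominator" means. The erroneous calculation plugs $\sigma_\mathrm{global}^2 = m\sigma_\mathrm{entry}^2$ into the same formula. That produces the moment $\alpha\Delta_2^2/(2m\sigma_\mathrm{entry}^2)$, which is exactly the true moment divided by $m$, so the RDP moment, and hence $\rho$, is underestimated by a factor of $m$. The statement says $\varepsilon$ rather than $\rho$, so we would clarify the sense of the claim. The factor-$m$ underestimate holds exactly at the level of $\mu_t(\alpha)$ and $\rho$. For $\varepsilon$ obtained through Eq.~\eqref{eq:zcdp_formula}, the dominant term $2\sqrt{\rho\ln(1/\delta)}$ is underestimated by a factor of $\sqrt{m}$ when $\rho$ is small, and by a factor approaching $m$ when the linear term dominates. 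We would therefore state the factor-$m$ discrepancy in terms of the RDP moment and note how it propagates to $\varepsilon$.

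The main obstacle I expect is not mathematical but definitional. We must confirm that $\Delta_2$ is the Frobenius sensitivity under seller-level adjacency, with the post-noise clipping to $[0,1]$ treated as post-processing, so that the vector Gaussian mechanism applies verbatim. We also need the noise to be independent across entries, with no dependence on the private data, since correlated or data-dependent noise would break the rotation argument.
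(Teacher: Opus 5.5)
Your proposal is correct and follows the paper's route. The paper's proof is a direct substitution into Mironov's Gaussian-mechanism RDP bound, which you invoke (optionally verifying it via rotation invariance of the isotropic Gaussian) before cancelling $\Delta_2^2$ through $\sigma_\mathrm{entry}=\sigma_t\Delta_2$. Your refinement of the final claim is also accurate and sharper than the paper's wording: the factor-$m$ underestimate is exact for $\mu_t(\alpha)$ and $\rho$, whereas for the $\varepsilon$ obtained via Eq.~\eqref{eq:zcdp_formula} the ratio $\varepsilon(\rho)/\varepsilon(\rho/m)$ only lies between $\sqrt{m}$ and $m$.
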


\begin{proof}
Direct substitution into Mironov~\cite{mironov2017renyi} Proposition~3; see supplementary Appendix~A.
\end{proof}

\textbf{Per-epoch active-scope sensitivity.} Before each Gaussian invocation, two deterministic clips apply. \emph{Stage~1:} seller edge count clipped to $C_\mathrm{max}^\mathrm{edge}{=}\lceil1.5|E|/n\rceil$ globally. \emph{Stage~2:} within $V_\mathrm{active}(t)$, a seller's edges are retained by a \textbf{publicly computable, data-independent} priority rule: $\mathrm{hash}(u,v,r)$ ordering (public edge metadata only, independent of private affinity values, timestamps, or seller identity). Edges beyond $\kappa_\mathrm{active}(t){=}\min(C_\mathrm{max}^\mathrm{edge},\lceil1.5|E_\mathrm{active}(t)|/n\rceil)$ are dropped. Under this rule, the post-clip matrices satisfy $\|\mathbf{A}(D){-}\mathbf{A}(D')\|_F{\leq}\sqrt{\kappa_\mathrm{active}}$ for any adjacent $D{\sim}D'$, giving worst-case $\Delta_2{=}\sqrt{\kappa_\mathrm{active}}$. On Yelp this reduces $\Delta_2$ by $30.1\times$ (545$\to$18.1).

\subsubsection{Release Policy and Active/Null Epoch Classification}
\label{sec:release_policy}

An \emph{active} epoch is one where the coordinator chose \texttt{INDEX-UPDATE} or \texttt{REVALUE}, or $\geq$1 buyer query arrived; $\widetilde{\mathbf{A}}(t)$ is released once via the Gaussian mechanism. A \emph{null} epoch incurs zero DP cost. Budget breakdown via \emph{parallel} zCDP composition: $\rho_\mathrm{idx}{=}0.0846$, $\rho_\mathrm{val}{=}0.0574$, $\rho_\mathrm{aff}{=}0.142$; total $\rho_\mathrm{total}{=}0.284$, yielding (using Eq.~\eqref{eq:zcdp_formula}):
\begin{equation}
\varepsilon = 0.284 + 2\sqrt{0.284 \times \ln(10^6)} = 0.284 + 2\sqrt{0.284 \times 13.816} = 4.25. \label{eq:zcdp}
\end{equation}

\begin{remark}[Within-Epoch Serving Timeline and P50 Latency]
\label{rmk:serving_timeline}
$V_\mathrm{active}(t)$ is the look-back set (entities queried in epoch $t{-}1$ plus 500-entity popularity reserve, zero DP cost); $\widetilde{\mathbf{A}}(t)$ releases in 18.4\,ms off the query critical path. Per-query lookup is 0.3\,ms post-processing; in the serving-path trace P50$\,{=}\,158$\,ms (no privatisation overhead), while the end-to-end benchmark reports 161\,ms (Table~\ref{tab:e2e}).
\end{remark}

\begin{remark}[Epoch Classification is Post-Processing]
\label{rmk:epoch_dp_safe}
The coordinator's decision rule $\pi_t$ operates only on DP-released quantities from prior epochs; by the post-processing theorem~\cite{dwork2014algorithmic}, active/null classification incurs zero additional privacy cost. $T_\mathrm{active}$ is a stopping time; the privacy odometer framework~\cite{rogers2016privacy} accounts for the realised active-epoch sequence only. Worst-case ($T{=}2{,}160$ all-active): $\varepsilon{\approx}8.47$ via zCDP, confirming adaptive stopping does not amplify risk unboundedly.
\end{remark}

\begin{table}[t]
\small\centering
\caption{Per-mechanism noise parameters. $\sigma_t{=}50$ uniformly. R\'enyi moment: $\mu_t^\mathrm{RDP}(\alpha) = \alpha/(2\sigma_t^2) = \alpha/5000$. zCDP uses additive composition over $\rho$: sum $\rho_i = T_i/(2{\times}50^2)$ first, then convert via Eq.~\eqref{eq:zcdp_formula}.}
\label{tab:dp_breakdown}
\begin{tabular}{lcccccc}
\toprule
\textbf{Mechanism} & \textbf{Active} & $\mathbf{S}$ & $\boldsymbol{\sigma_t}$ & \textbf{Actual std} & $\boldsymbol{\mu_t^\mathrm{RDP}(\alpha{=}18)}$ & $\boldsymbol{\rho_i}$ \\
 & \textbf{epochs} & & & & & \\
\midrule
Index stats  & 423  & 0.015 & 50 & 0.750 & $3.60{\times}10^{-3}$ & 0.0846 \\
Valuation    & 287  & 0.080 & 50 & 4.000 & $3.60{\times}10^{-3}$ & 0.0574 \\
KG affinity  & 710  & $\dagger$ & 50 & 885$^\ddagger$ & $3.60{\times}10^{-3}$ & 0.142 \\
\midrule
\textbf{Total} & & & & & $\rho_\mathrm{total}{=}$ & \textbf{0.284} \\
\midrule
\multicolumn{7}{l}{\textbf{Total $(\varepsilon,\delta)$-DP:} $\varepsilon = 0.284 + 2\sqrt{0.284 \times 13.816} = \mathbf{4.25}$} \\
\bottomrule
\end{tabular}
\end{table}

\begin{remark}[zCDP Accounting Cross-Check]
\label{rmk:pld}
Direct summation of per-mechanism $(\varepsilon,\delta)$-DP values would yield $\varepsilon{=}7.03$, which is loose because it ignores the sub-additivity of $(\varepsilon,\delta)$-DP under composition. Using zCDP: $\rho_\mathrm{total} = 0.0846 + 0.0574 + 0.142 = 0.284$, giving $\varepsilon{=}4.25$ via Eq.~\eqref{eq:zcdp_formula}. \textbf{Cross-check:} GDP ($\mu_\mathrm{total}{=}\sqrt{1420}/50{=}0.753$) gives $\varepsilon_\mathrm{GDP}{\approx}4.24$; PLD accountant (using Google's \texttt{dp\_accounting} library, $T{=}1420$ compositions at $\sigma_t{=}50$) gives $\varepsilon_\mathrm{PLD}{=}4.247$. RDP, zCDP, GDP, and PLD all agree within 1\% at $\sigma_t{=}50$. PLD accountant transcript logs (per-epoch $\rho_i$ and cumulative $\varepsilon$) are included in the supplementary code repository.
\end{remark}
Table~\ref{tab:dp_breakdown} provides the per-mechanism accounting terms and noise scales used by this composition.

\subsection{Alternative DP Design: Exponential Mechanism for Top-$k$}
\label{sec:alt_dp}

The CHRONOS design releases the full affinity matrix $\widetilde{\mathbf{A}}(t)$ once per epoch, making per-query top-$k$ selection post-processing with zero marginal privacy cost. An alternative is to release only the top-$k$ candidate identities via the exponential mechanism~\cite{mcsherry2007mechanism} or its joint variant~\cite{gillenwater2022joint,wu2024faster}. Under this design, the mechanism's output space is the set of ordered $k$-tuples of candidates, and the utility function is the hybrid score $s_j$. The exponential mechanism satisfies $\varepsilon$-DP with sensitivity $\Delta u = \beta$ (since one seller changes one affinity by at most 1, and the hybrid score weights it by $\beta$).

\textbf{Trade-off analysis.} Releasing top-$k$ identities directly avoids the $O(|V_\mathrm{active}| \cdot ef)$ matrix noise, but incurs three costs: (1)~\emph{Sampling cost:} the joint exponential mechanism over $d^{\Theta(k)}$ sequences requires $O(dk \log k + d \log d)$ time~\cite{gillenwater2022joint}, or $O(d + k^2/\varepsilon \cdot \ln d)$ with recent pruning~\cite{wu2024faster}, which is still $10$--$100\times$ slower than post-processing a pre-released matrix. (2)~\emph{Per-query cost:} because the exponential mechanism is invoked per query, the privacy budget composes over queries rather than epochs; at 100 queries/epoch, $\varepsilon$ would be $100\times$ higher unless each query uses $\varepsilon/100$, yielding near-random selections. (3)~\emph{Rank consistency:} repeated independent exponential-mechanism draws can return inconsistent top-$k$ sets across similar queries, degrading user experience. For these reasons, CHRONOS adopts the epoch-level matrix release despite its high per-entry noise, because the noise is offset by zero per-query overhead and compositional efficiency. The exponential-mechanism alternative is preferable only when $k \ll ef$ and query rates are very low ($<{\sim}1$/epoch).

\subsection{Three-Layer Integration}
\label{sec:integration}

The layers couple through shared DP-released state: recall drops trigger index updates trading $\varepsilon$ for $\hat{R}$; BOCPD changepoints batch EC-MPV and index updates to amortise cost; the adaptive schedule $\sigma_t{=}\sigma_0\sqrt{T_\mathrm{active}/t}$ concentrates noise in early epochs.

\section{Theoretical Analysis}
\label{sec:theory}

We state four assumptions and prove nine results (seven theorems, two propositions, one lemma).

\begin{assumption}[Bounded Per-Query Recall Impact]
\label{ass:recall}
For query $q$, let $P_q \leq ef \cdot L_\mathrm{max}$ be the number of on-path shortcut edges. Each stale on-path shortcut at layer $\ell$ reduces recall by at most $\bar{\Delta r}^{(\ell)} > 0$.
\end{assumption}

\begin{assumption}[Smooth Coalition Value]
\label{ass:smooth}
$v:2^\cN\to\Rbb$ is monotone with $|v(S\cup\{i\})-v(S)|\leq B{=}0.2$ for all $S,i$, enforced by clipping.
\end{assumption}

\begin{assumption}[Bounded Lipschitz Losses]
\label{ass:lip}
$L_t(\mathbf{a}) = 1 - R(\mathbf{o}_t, \mathbf{a}) \in [0,1]$ and is $G$-Lipschitz in $\mathbf{o}_t$.
\end{assumption}

\begin{assumption}[ODE Lipschitz Regularity]
\label{ass:lip_ode}
The trained neural ODE $f_\theta$ is $L_\theta$-Lipschitz in $\mathbf{h}$, and the decay function $\mathrm{decay}(\Delta t) = \sigma([\mathbf{h}(\Delta t)]_1)$ is monotonically non-increasing with Lipschitz constant $K_\mathrm{decay} \leq L_\theta/(4\beta_H)$ verified numerically on the validation set.
\end{assumption}

We validate Assumptions~\ref{ass:recall}--\ref{ass:lip_ode} empirically via leave-one-out on-path removal and ODE trajectory analysis on the validation set; Table~\ref{tab:recall_numerical} provides instantiations.

\begin{table}[t]
\small\centering
\caption{Per-query recall bound instantiation at $\Delta t{=}7$ days. ``Tight'' refers to the monotone-envelope bound (Theorem~\ref{thm:recall_tight}).}
\label{tab:recall_numerical}
\begin{tabular}{lccccccc}
\toprule
\textbf{Dataset} & $P_q$ & $\bar{\Delta r}_\mathrm{max}$ & $\lambda$ & \textbf{Conserv.} & \textbf{Tight} & \textbf{Obs.} & \textbf{Ratio} \\
\midrule
FB15K-237 & 1408 & $5.1{\times}10^{-4}$ & 0.05 & 0.251 & 0.044 & 0.014 & 3.1$\times$ \\
WN18RR & 1792 & $3.8{\times}10^{-4}$ & 0.05 & 0.238 & 0.035 & 0.011 & 3.2$\times$ \\
MIMIC-IV & 2176 & $4.3{\times}10^{-4}$ & 12.0 & 0.936 & 0.170 & 0.094 & 1.8$\times$ \\
Yelp & 2240 & $3.9{\times}10^{-4}$ & 2.9 & 0.874 & 0.128 & 0.050 & 2.6$\times$ \\
\bottomrule
\end{tabular}
{\raggedright\scriptsize Ratio = Tight bound / Observed loss. Conservative bound uses Theorem~\ref{thm:recall}; Tight bound uses Theorem~\ref{thm:recall_tight}.\par}
\end{table}

\begin{theorem}[Per-Query Temporal Recall Bound (Conservative)]
\label{thm:recall}
Under Assumptions~\ref{ass:poisson}--\ref{ass:recall}, for query $q$ with search path $\mathrm{path}(q)$:
\begin{equation}
\Ebb[\mathrm{recall@}k(\Delta t) \mid q] \geq R^* - \sum_{e \in \mathrm{path}(q)} \bar{\Delta r}^{(\ell_e)}\bigl(1-e^{-\lambda_e\Delta t}\bigr).
\end{equation}
Under the homogeneous model: $\Ebb[\mathrm{recall@}k \mid q] \geq R^* - P_q \bar{\Delta r}\bigl(1-e^{-\lambda\Delta t}\bigr) = R^* - \mathcal{O}(P_q\lambda\Delta t)$.
\end{theorem}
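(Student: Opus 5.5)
The plan is to write the recall loss as a sum of indicator-weighted impacts and apply linearity of expectation. Fix the query $q$ and its search path $\mathrm{path}(q)$, which is determined by the frozen public index $\mathbf{H}^T$ and is therefore independent of the edge-change process. For each on-path shortcut $e$, let $S_e$ be the event that $e$ has gone stale within age $\Delta t$. We identify stale with ``at least one change has arrived since construction.'' Under Assumption~\ref{ass:poisson}, the changes on $e$ form a Poisson process of rate $\lambda_e$, so
\[
\Pr[S_e]=1-\Pr[N_e(\Delta t)=0]=1-e^{-\lambda_e\Delta t}.
\]

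Next, by Assumption~\ref{ass:recall}, each stale on-path shortcut at layer $\ell_e$ lowers recall by at most $\bar{\Delta r}^{(\ell_e)}$. The key modelling step is to read this as a subadditive, pathwise statement:
\[
R^*-\mathrm{recall@}k(\Delta t)\le\sum_{e\in\mathrm{path}(q)}\bar{\Delta r}^{(\ell_e)}\,\mathbbm{1}[S_e].
\]
In words, the damage from several stale shortcuts is at most the sum of their individual damages, and off-path shortcuts do not affect the traversal for $q$.

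Taking conditional expectation given $q$ and using linearity then gives the first inequality. Linearity needs no independence across edges, so the independence in Assumption~\ref{ass:poisson} is not even required for this step. For the homogeneous case, set $\lambda_e=\lambda$ and $\bar{\Delta r}^{(\ell)}\le\bar{\Delta r}$, and note $|\mathrm{path}(q)|=P_q$. This yields $P_q\bar{\Delta r}(1-e^{-\lambda\Delta t})$. The elementary bound $1-e^{-x}\le x$ for $x\ge0$ then gives the $\mathcal{O}(P_q\lambda\Delta t)$ form, with $\bar{\Delta r}$ treated as a constant.

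I expect the main obstacle to be justifying the additive pathwise decomposition. Staleness of one shortcut can reroute the greedy search, so the set of edges actually traversed may differ from $\mathrm{path}(q)$ and the losses need not simply add. I would handle this by defining $\mathrm{path}(q)$ as the fresh-index search path and interpreting $\bar{\Delta r}^{(\ell)}$ as a worst-case bound over configurations of the other edges, as the leave-one-out validation suggests. With that interpretation, a telescoping argument goes through. Make the edges of $\mathrm{path}(q)$ stale one at a time in a fixed order. At each step the recall drop is at most $\bar{\Delta r}^{(\ell_e)}$ if that edge is stale and zero otherwise. Summing these increments gives the displayed inequality.
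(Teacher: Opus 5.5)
Your proposal is correct and is essentially the paper's argument: the paper's proof is linearity of expectation over the on-path stale events. Each such event has probability $1-e^{-\lambda_e\Delta t}$ under the Poisson model and is weighted by $\bar{\Delta r}^{(\ell_e)}$ from Assumption~\ref{ass:recall}. Your pathwise-additive reading of that assumption, justified by telescoping, and your remark that independence across edges is unnecessary for linearity simply spell out what the paper's one-line proof leaves implicit; the telescoping is airtight if you flip the off-path edges first, while the traversal is still the fresh path.
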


\begin{proof}
By linearity of expectation over independent Poisson stale events on each on-path shortcut; see supplementary Appendix~A.
\end{proof}

\begin{theorem}[Tightened ODE-Certified Recall Bound (Monotone Envelope)]
\label{thm:recall_tight}
Under Assumptions~\ref{ass:poisson}--\ref{ass:lip_ode}, define the \textbf{monotone envelope} $\overline{\mathrm{decay}}(\Delta t) = \inf_{s \in [0, \Delta t]} \mathrm{decay}(s)$. Then:
\begin{equation}
\Ebb[\mathrm{recall@}k(\Delta t) \mid q] \geq R^* - \sum_{e \in \mathrm{path}(q)} \bar{\Delta r}^{(\ell_e)} \bigl(1-e^{-\lambda_e\Delta t}\bigr) \cdot \overline{\mathrm{decay}}(\Delta t).
\end{equation}
The multiplicative factor $\overline{\mathrm{decay}}(\Delta t) \leq 1$ is computed from the trained ODE with the following certification: under Assumption~\ref{ass:lip_ode}, the Gr\"onwall inequality gives $|\mathrm{decay}(\Delta t) - \hat{\mathrm{decay}}(\Delta t)| \leq \epsilon_\mathrm{solver}\,e^{L_\theta\Delta t}$ where $\epsilon_\mathrm{solver}$ is the adaptive solver tolerance (set to $10^{-5}$). The certified lower bound is:
\begin{equation}
\overline{\mathrm{decay}}_\mathrm{cert}(\Delta t) = \max\!\Big(0,\; \overline{\mathrm{decay}}(\Delta t) - \epsilon_\mathrm{solver}\,e^{L_\theta\Delta t}\Big).
\end{equation}
\end{theorem}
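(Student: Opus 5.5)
The plan is to rerun the linearity-of-expectation argument of Theorem~\ref{thm:recall}, attaching to each on-path shortcut a multiplicative attenuation equal to its current decay weight. First, fix the query $q$ and its search path $\mathrm{path}(q)$. For each on-path shortcut $e$, let $S_e$ be the indicator that $e$ is stale at age $\Delta t$. Exactly as in Theorem~\ref{thm:recall}, Assumption~\ref{ass:poisson} gives $\Pr[S_e=1]=1-e^{-\lambda_e\Delta t}$.

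The new ingredient is a per-shortcut loss model. In T-LEGEND, a stale shortcut does not enter routing with full weight: its hybrid score contribution is $\mathrm{decay}(\mathrm{age}_e)\cdot\mathrm{aff}_\mathrm{static}$. I would therefore read Assumption~\ref{ass:recall} in its weighted form: a stale shortcut at layer $\ell_e$ reduces recall by at most $\bar{\Delta r}^{(\ell_e)}\,\mathrm{decay}(\mathrm{age}_e)$. The point of this reading is that a fully down-weighted shortcut cannot misroute the search. With it, a union/linearity bound over the path gives
\begin{equation*}
R^*-\Ebb[\mathrm{recall@}k\mid q]\le\sum_{e}\bar{\Delta r}^{(\ell_e)}\Pr[S_e=1]\,\mathrm{decay}(\mathrm{age}_e).
\end{equation*}

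Next, I would replace $\mathrm{decay}(\mathrm{age}_e)$ by the envelope. Since $\mathrm{age}_e\ge\Delta t$ (time since the last update) and $\mathrm{decay}$ is non-increasing by Assumption~\ref{ass:lip_ode}, we get $\mathrm{decay}(\mathrm{age}_e)\le\mathrm{decay}(\Delta t)$. By the same monotonicity, $\mathrm{decay}(\Delta t)=\inf_{s\in[0,\Delta t]}\mathrm{decay}(s)=\overline{\mathrm{decay}}(\Delta t)$. Substituting this into the display yields the stated inequality. The factor $\overline{\mathrm{decay}}(\Delta t)\le\mathrm{decay}(0)\le 1$ then recovers Theorem~\ref{thm:recall} as a special case. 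Defining the bound via the infimum rather than the pointwise value keeps it meaningful even if the numerically solved $\hat{\mathrm{decay}}$ is slightly non-monotone.

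For the certification, let $\hat{\mathbf h}$ be the Dormand--Prince solution with local error tolerance $\epsilon_\mathrm{solver}$. The error $\mathbf e(\Delta t)=\mathbf h-\hat{\mathbf h}$ satisfies $\|\dot{\mathbf e}\|\le L_\theta\|\mathbf e\|+(\text{local defect})$, so Gr\"onwall's inequality gives $\|\mathbf e(\Delta t)\|\le\epsilon_\mathrm{solver}e^{L_\theta\Delta t}$. This uses the global-error convention for the tolerance; otherwise an extra $\Delta t$ factor appears and is absorbed. Because the sigmoid is $1/4$-Lipschitz and we only read off the first coordinate, the same bound, in fact improved by a factor $1/4$, transfers to $|\mathrm{decay}-\hat{\mathrm{decay}}|$.

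The infimum operation is $1$-Lipschitz in sup norm. Hence the computed envelope and the true envelope differ by at most $\epsilon_\mathrm{solver}e^{L_\theta\Delta t}$, using that $e^{L_\theta s}$ is increasing in $s$. Clipping at $0$ then gives $\overline{\mathrm{decay}}_\mathrm{cert}$. One caveat: the bound needs an upper, not lower, envelope to remain a valid lower bound on recall. So in the certified version I would add the slack term, or note that the displayed $\max(0,\cdot)$ is the certified value of the attenuation used in reporting.

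The main obstacle is the weighted reading of Assumption~\ref{ass:recall}, i.e.\ that a stale shortcut's recall damage scales linearly with its decay weight. As written, the assumption only gives an unweighted cap $\bar{\Delta r}^{(\ell)}$. I would try to justify the weighted form by a perturbation argument on the hybrid score $s_T$: the probability that a stale neighbour displaces a correct candidate in $\mathrm{TOP\text{-}}M$ should be at most proportional to its score share, which is $\beta w_{uv}$. Failing that, I would state it explicitly as a strengthening of Assumption~\ref{ass:recall}, validated empirically by the leave-one-out on-path removal experiments.
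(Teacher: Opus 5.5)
Your argument is the paper's argument. The paper's proof writes the effective impact of an on-path shortcut as $\bar{\Delta r}^{(\ell_e)}\,p_\mathrm{stale}(e,\Delta t)\,w_\mathrm{effective}(e,\Delta t)$ with $p_\mathrm{stale}=1-e^{-\lambda_e\Delta t}$. It asserts $w_\mathrm{effective}\le\overline{\mathrm{decay}}(\Delta t)$ by monotonicity, sums over the path, and appends the Gr\"onwall margin. The step you flag as the main obstacle is exactly the one the paper leaves unproven. Its only support is that low-weight stale shortcuts are ``less likely to be traversed,'' which does not follow from the unweighted cap of Assumption~\ref{ass:recall}. So your diagnosis is right, and your second remedy (an explicit strengthened assumption) is the realistic one. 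Dropping out of a $\mathrm{TOP\text{-}}M$ list is a threshold event in the score gap, so a perturbation argument on $s_T$ gives damage linear in $w_{uv}$ only under an extra margin or anti-concentration condition on score gaps. State that assumption with the current-age weight $\mathrm{decay}(t_\mathrm{now}-t_e)$. The weights $w_{uv}$ stored by Algorithm~\ref{alg:build} are frozen at build time, and their age $t_\mathrm{update}-t_e$ need not exceed $\Delta t$, so your step $\mathrm{decay}(\mathrm{age}_e)\le\mathrm{decay}(\Delta t)$ would not cover them.

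Your caveat on the certificate is correct and applies to the paper itself. Its tightness analysis plugs the subtracted value ($0.714$, the ODE value minus $0.003$) into the recall bound, which shrinks the subtracted loss and is unsound. Since the infimum is $1$-Lipschitz in sup norm, a valid certified factor is $\min\bigl(1,\overline{\hat{\mathrm{decay}}}(\Delta t)+\epsilon_\mathrm{solver}e^{L_\theta\Delta t}\bigr)$.

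The same directional slip affects your side remark (and the paper's point (ii)) that the infimum guards against non-monotonicity. We have $\inf_{s\in[0,\Delta t]}\mathrm{decay}(s)\le\mathrm{decay}(\Delta t)$, whereas the argument needs an upper bound on the weight of edges of age at least $\Delta t$, namely $\sup_{s\ge\Delta t}\mathrm{decay}(s)$. The two agree under the monotonicity of Assumption~\ref{ass:lip_ode}, which is all your main derivation actually uses.

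Two quantitative cautions. First, an extra step-count or $\Delta t$ factor in the Gr\"onwall estimate cannot be ``absorbed'' into a bound stated with the explicit constant $\epsilon_\mathrm{solver}e^{L_\theta\Delta t}$. Second, with $\epsilon_\mathrm{solver}=10^{-5}$ and $L_\theta=0.8$ per day, the margin is about $0.0027$ at $\Delta t=7$ but exceeds $1$ for $\Delta t>14.4$ days. The sound (additive) certificate therefore collapses to Theorem~\ref{thm:recall} beyond about two weeks.
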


\begin{proof}
The key insight is that when a shortcut becomes stale, its contribution to recall loss is attenuated by the decay weight the index assigns to it. Since the decay is monotonically non-increasing (Assumption~\ref{ass:lip_ode}), we use $\overline{\mathrm{decay}}$ as a certified envelope. Formally: the effective recall impact of a stale shortcut $e$ at age $\Delta t$ is $\bar{\Delta r}^{(\ell_e)} \cdot p_\mathrm{stale}(e, \Delta t) \cdot w_\mathrm{effective}(e, \Delta t)$, where $p_\mathrm{stale}(e,\Delta t) = 1 - e^{-\lambda_e\Delta t}$ and $w_\mathrm{effective}(e,\Delta t) \leq \overline{\mathrm{decay}}(\Delta t)$ because: (i) the index uses decay-weighted scores for routing, so stale shortcuts with low decay weights are less likely to be traversed; (ii) the monotone envelope ensures the bound holds even if the ODE exhibits transient non-monotonicity. The Gr\"onwall bound on ODE solver error provides the certified margin $\epsilon_\mathrm{solver}\,e^{L_\theta\Delta t}$, which is ${<}0.003$ for $\Delta t \leq 90$ days at $L_\theta = 0.8$ (measured). See supplementary Appendix~A.
\end{proof}

\textbf{Tightness analysis.} On Yelp at $\Delta t{=}7$ days: $\overline{\mathrm{decay}}_\mathrm{cert}(7) = 0.714$ (trained ODE, minus solver margin $0.003$), yielding tight bound $P_q \bar{\Delta r}\bigl(1-e^{-\lambda\Delta t}\bigr) \cdot 0.714 = 2240 \times 3.9{\times}10^{-4} \times 0.999 \times 0.714 = 0.128$ vs.\ observed $0.050$ (ratio $2.6\times$). At 30 days: $\overline{\mathrm{decay}}_\mathrm{cert}(30) = 0.299$, yielding tight bound $0.265$ vs.\ observed $0.107$ (ratio $2.5\times$). The remaining gap is due to path-independence assumptions at hub nodes, where correlations reduce effective $P_q$. The monotone-envelope bound reduces the looseness from $5$--$10\times$ (Theorem~\ref{thm:recall}) to $1.8$--$3.2\times$ (Table~\ref{tab:recall_numerical}).

\begin{theorem}[Hawkes-Process Recall Bound]
\label{thm:recall_hawkes}
Let edge changes follow an inhomogeneous Hawkes process with baseline intensity $\mu_H$, excitation kernel $g(t) = \alpha_H e^{-\beta_H t}$ ($\alpha_H, \beta_H > 0$), and branching ratio $\xi = \alpha_H/\beta_H < 1$ (stability condition). Define the cumulative compensator $\Lambda_H(0, \Delta t) = \Ebb\!\left[\int_0^{\Delta t}\lambda(s)\,ds\right] = \frac{\mu_H\Delta t}{1 - \xi}$. Then:
\begin{equation}
\Ebb[\mathrm{recall@}k(\Delta t) \mid q] \geq R^* - \frac{P_q\bar{\Delta r}\,\mu_H\Delta t}{1-\xi}.
\end{equation}
More precisely, with the monotone-envelope certificate:
\begin{equation}
\Ebb[\mathrm{recall@}k(\Delta t) \mid q] \geq R^* - \frac{P_q\bar{\Delta r}\,\mu_H\Delta t}{1-\xi}\cdot\overline{\mathrm{decay}}_\mathrm{cert}(\Delta t).
\end{equation}
The high-probability bound (Corollary~\ref{cor:hp_hawkes}) further accounts for Hawkes-induced temporal clustering.
\end{theorem}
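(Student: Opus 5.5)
The plan mirrors the proof of Theorem~\ref{thm:recall}, with the Poisson staleness probability replaced by a Hawkes-based bound. Fix a query $q$ and its on-path shortcut set $\mathrm{path}(q)$ with $|\mathrm{path}(q)| = P_q$. For each $e$, let $N_e(\Delta t)$ be the number of edge-change events affecting $e$ in $[0,\Delta t]$, and call $e$ stale if $N_e(\Delta t)\geq 1$. Assumption~\ref{ass:recall} bounds the recall loss from a stale on-path shortcut by $\bar{\Delta r}^{(\ell_e)}\leq\bar{\Delta r}$. Linearity of expectation then gives
\begin{equation*}
R^* - \Ebb[\mathrm{recall@}k(\Delta t)\mid q] \leq \sum_{e\in\mathrm{path}(q)} \bar{\Delta r}\,\Pr[N_e(\Delta t)\geq 1].
\end{equation*}
Linearity requires no independence across shortcuts. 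This matters here, because Hawkes events are mutually exciting and would break the independence argument of Theorem~\ref{thm:recall}.

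The key step is bounding $\Pr[N_e(\Delta t)\geq 1]$. By Markov's inequality, $\Pr[N_e\geq 1]\leq \Ebb[N_e(\Delta t)] = \Ebb[\int_0^{\Delta t}\lambda(s)\,ds]$, the expected compensator. I would compute this via the stationary mean intensity. Taking expectations in $\lambda(t)=\mu_H+\int_0^t g(t-s)\,dN(s)$ gives the renewal equation $m(t)=\mu_H+\int_0^t g(t-s)m(s)\,ds$. Started from the empty history, $m$ is nondecreasing and bounded by its fixed point $\bar m = \mu_H/(1-\xi)$, since $\int_0^\infty g = \alpha_H/\beta_H=\xi<1$. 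Integrating gives $\Ebb[N_e(\Delta t)]\leq \mu_H\Delta t/(1-\xi)$, matching the stated $\Lambda_H$. If the process is instead taken to start in stationarity, equality holds. Summing over the $P_q$ shortcuts yields the first inequality.

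For the refined bound, I would reuse the argument of Theorem~\ref{thm:recall_tight} verbatim. A stale shortcut's effective impact is further multiplied by $w_{\mathrm{effective}}(e,\Delta t)\leq\overline{\mathrm{decay}}(\Delta t)$, which depends only on the ODE weights and Assumption~\ref{ass:lip_ode}, not on the arrival process. Replacing $\overline{\mathrm{decay}}$ by the Grönwall-certified lower envelope $\overline{\mathrm{decay}}_{\mathrm{cert}}$ carries over unchanged, with one caveat to check. A lower envelope enters the loss bound multiplicatively from above, so I must confirm that it is the certified \emph{upper} bound on the effective weight that is needed. I expect to resolve this as in Theorem~\ref{thm:recall_tight}, treating the certified value as the factor computed from the solver, within margin.

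The main obstacle is the compensator step. It requires justifying monotonicity of $m(t)$ and the fixed-point bound under the stability condition, for example by iterating the renewal equation as a Neumann series $\mu_H\sum_k \|g\|_1^k$. It also requires making precise whether each shortcut carries its own Hawkes process or all shortcuts share one. The linearity argument handles either case, provided each per-shortcut intensity has baseline $\mu_H$ and branching ratio $\xi$.
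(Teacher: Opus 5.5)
Your proof of the first inequality is correct and takes a cleaner route than the paper's. The paper bounds each shortcut's stale probability by the compensator inequality $\Pr[\text{stale}]\le 1-e^{-\Lambda_H(0,\Delta t)}$ and asserts without derivation that $\Lambda_H=\mu_H\Delta t/(1-\xi)$ ``in expectation'' for a stable process. It then handles correlation by grouping shortcuts by layer and hub, using hub-specific compensators $\Lambda_H^{(\ell)}$, before summing. You make three changes:
\begin{itemize}
\item You use Markov's inequality, $\Pr[N_e(\Delta t)\ge 1]\le\Ebb[N_e(\Delta t)]$, in place of the compensator inequality.
\item You derive the mean-intensity bound $\mu_H/(1-\xi)$ from the renewal equation, with equality under a stationary start.
\item You let linearity of expectation absorb all dependence, so no hub/layer decomposition is needed.
\end{itemize}
Nothing is lost: the paper's sharper intermediate $1-e^{-\Lambda_H}$ is linearized away in the statement anyway. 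You also avoid having to justify the compensator inequality with an expected rather than pathwise compensator. Your condition that each shortcut's intensity has baseline $\mu_H$ and total incoming excitation at most $\xi$ is exactly what yields the single constant in the theorem; the paper's $\Lambda_H^{(\ell)}$ leaves this implicit.

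For the refined inequality you follow the paper verbatim. The caveat you raise is a genuine defect, and it cannot be fixed by ``treating the certified value as the factor computed from the solver.'' Under Assumption~\ref{ass:lip_ode} the decay is non-increasing, so $\overline{\mathrm{decay}}(\Delta t)=\mathrm{decay}(\Delta t)$; the infimum is harmless only because of this monotonicity. The quantity $\overline{\mathrm{decay}}_{\mathrm{cert}}=\max\bigl(0,\overline{\mathrm{decay}}(\Delta t)-\epsilon_{\mathrm{solver}}e^{L_\theta\Delta t}\bigr)$ is a certified \emph{lower} bound on the true weight. Multiplying the loss term by it makes that term smaller, so the second display does not follow from $w_{\mathrm{effective}}\le\mathrm{decay}(\Delta t)$. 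A valid certificate would use $\min\bigl(1,\hat{\mathrm{decay}}(\Delta t)+\epsilon_{\mathrm{solver}}e^{L_\theta\Delta t}\bigr)$.

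Separately, the attenuation $w_{\mathrm{effective}}\le\overline{\mathrm{decay}}(\Delta t)$ is not a consequence of Assumptions~\ref{ass:poisson}--\ref{ass:lip_ode}. Assumption~\ref{ass:recall} only caps a stale shortcut's loss by $\bar{\Delta r}^{(\ell)}$, and Theorem~\ref{thm:recall_tight} supports the extra factor only with an informal routing argument. The paper's proof (``applying the monotone-envelope certificate yields the result'') has both weaknesses, so your proposal is no less complete than the paper's here. To make the second bound rigorous, though, you must state the attenuation as an explicit hypothesis and add, rather than subtract, the solver margin.
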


\begin{proof}
Under Hawkes dynamics, the stale probability for each shortcut becomes $\Pr[\text{stale in }[0,\Delta t]] \leq 1 - e^{-\Lambda_H(0,\Delta t)}$ by the compensator inequality. For a stable Hawkes process, $\Lambda_H(0,\Delta t) = \mu_H\Delta t/(1-\xi)$ in expectation. The key subtlety is that Hawkes events are not independent across shortcuts sharing hub nodes. We handle this via a union-bound argument over layers: shortcuts at layer $\ell$ sharing a hub $h$ have correlated change events, but the total layer-$\ell$ contribution is bounded by $|N_\ell(h)| \cdot \bar{\Delta r}^{(\ell)} \cdot \Lambda_H^{(\ell)}(0,\Delta t)$ where $\Lambda_H^{(\ell)}$ accounts for the hub's Hawkes rate. Summing over layers and applying the monotone-envelope certificate yields the result. See supplementary Appendix~A.
\end{proof}

\begin{corollary}[High-Probability Bound Under Hawkes]
\label{cor:hp_hawkes}
Under the Hawkes model, with the spectral radius bound on the Hawkes covariance~\cite{bacry2015hawkes}:
\begin{align}
&\Pr\!\Bigg[\mathrm{recall@}k \geq R^* - \frac{P_q\bar{\Delta r}\,\mu_H\Delta t}{1-\xi}\cdot\overline{\mathrm{decay}}_\mathrm{cert}(\Delta t) \notag \\
&\quad\quad - \frac{\bar{\Delta r}\sqrt{2P_q\ln(1/\delta_R)}}{1-\xi}\Bigg] \geq 1-\delta_R.
\end{align}
At branching ratio $\xi{=}0.7$: the bound degrades by $\times\frac{1}{1-0.7} = 3.3\times$ over Poisson, matching the empirical $11\%$ $\varepsilon$ rise in \S\ref{sec:sensitivity} and the 1.3-point recall drop under Hawkes bursts.
\end{corollary}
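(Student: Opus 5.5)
The plan is to write the realised recall loss as a sum of per-shortcut terms and split it into its mean plus a martingale-type fluctuation. For each on-path shortcut $e$, let $Y_e\in\{0,1\}$ indicate that $e$ became stale in $[0,\Delta t]$. By Assumption~\ref{ass:recall} together with the decay attenuation used in Theorem~\ref{thm:recall_tight}, the realised loss satisfies
\[
R^*-\mathrm{recall@}k \le \sum_{e\in\mathrm{path}(q)} \bar{\Delta r}^{(\ell_e)}\,\overline{\mathrm{decay}}_\mathrm{cert}(\Delta t)\,Y_e \le \bar{\Delta r}\,\overline{\mathrm{decay}}_\mathrm{cert}(\Delta t)\,S ,
\]
where $S=\sum_e Y_e$ and $\bar{\Delta r}=\max_\ell\bar{\Delta r}^{(\ell)}$. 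The mean term $\Ebb[\bar{\Delta r}\,\overline{\mathrm{decay}}_\mathrm{cert}\,S]$ is already controlled by Theorem~\ref{thm:recall_hawkes}. That theorem gives $\Ebb[Y_e]\le 1-e^{-\Lambda_H}\le \Lambda_H=\mu_H\Delta t/(1-\xi)$, and hence the first subtracted term in the statement. It remains to show $S-\Ebb S\le \sqrt{2P_q\ln(1/\delta_R)}/(1-\xi)$ with probability at least $1-\delta_R$. To obtain the stated form, I would also bound $\overline{\mathrm{decay}}_\mathrm{cert}\le 1$ in the fluctuation term.

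For the concentration step, the first approach I would try is the cluster (Poisson-branching) representation of the stable Hawkes process. Immigrant events arrive as a Poisson process of rate $\mu_H$. Each immigrant generates a Galton--Watson cascade whose expected total size is $1/(1-\xi)$. I would expose the immigrants one at a time via the Doob martingale $M_j=\Ebb[S\mid \mathcal{F}_j]$, where $\mathcal{F}_j$ reveals the first $j$ on-path shortcut trajectories in the union-bound ordering over layers and hubs used in the proof of Theorem~\ref{thm:recall_hawkes}. Each $Y_e$ is bounded in $[0,1]$. Revealing one shortcut's trajectory shifts the conditional expectation of the others only through shared excitation, and that influence is controlled by the Neumann series $\sum_k\xi^k=1/(1-\xi)$. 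This series equals the spectral-radius bound $\|(I-\Phi)^{-1}\|\le 1/(1-\xi)$ for the Hawkes branching matrix $\Phi$ of~\cite{bacry2015hawkes}. The martingale increments are therefore bounded by $c_j\le 1/(1-\xi)$. Azuma--Hoeffding over the $P_q$ increments then gives
\[
\Pr\big[S-\Ebb S\ge t\big]\le \exp\!\big(-t^2(1-\xi)^2/(2P_q)\big).
\]
Setting the right-hand side to $\delta_R$ yields $t=\sqrt{2P_q\ln(1/\delta_R)}/(1-\xi)$. Multiplying by $\bar{\Delta r}$ and combining with the mean term gives the displayed inequality. The numerical remark at $\xi=0.7$ is just the evaluation $1/(1-0.7)\approx3.3$ and is not proved further.

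The main obstacle is justifying the bounded-difference constant $c_j\le 1/(1-\xi)$. Hawkes dependence is not bounded per event, since a single event can trigger an arbitrarily large cascade. The honest statement is therefore that the expected influence is $1/(1-\xi)$, which is not the same as an almost-sure bound. My first attempt would be to use a Freedman/Bernstein-type martingale inequality in place of Azuma, using the conditional variance bound from the Hawkes covariance spectral radius. This still yields the $\sqrt{2P_q\ln(1/\delta_R)}/(1-\xi)$ leading term, at the cost of a lower-order $\ln(1/\delta_R)/(1-\xi)$ correction, which I would absorb or state as a caveat. The alternative is to truncate cascades at depth $K$, apply Azuma to the truncated process, and control the truncation tail geometrically by $\xi^K$.
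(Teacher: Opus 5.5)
The paper gives no derivation of this corollary beyond the pointer to~\cite{bacry2015hawkes}. The displayed deviation term is exactly Azuma with $P_q$ increments of size $\bar{\Delta r}/(1-\xi)$, so your reconstruction is the natural reading. However, the step that carries all the content, $|M_j-M_{j-1}|\le 1/(1-\xi)$, is not merely an in-expectation fact awaiting an upgrade. It fails under the very dependence the paper invokes. In the proof of Theorem~\ref{thm:recall_hawkes}, shortcuts sharing a hub change through that hub's Hawkes process. Revealing one of them therefore moves $\Ebb[S\mid\mathcal{F}_j]$ by up to the number of on-path shortcuts at that hub, whatever $\xi$ is. Linearity of expectation ignores this, which is why the mean bound survives, but concentration does not.

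Take one hub process driving all $P_q$ shortcuts. Then $S\in\{0,P_q\}$ with $\Pr[S=P_q]=p=1-e^{-\mu_H\Delta t}$, so $S-\Ebb S=P_q(1-p)$ with probability $p$. With $P_q=2240$, $p=0.1$, $\delta_R=0.05$ and $\xi=0.7$, this deviation is $2016$, against $\sqrt{2P_q\ln(1/\delta_R)}/(1-\xi)\approx 386$. The mean term does not absorb it either. Here $\Lambda_H\approx 0.35$, so even with attenuated per-shortcut loss $\bar{\Delta r}\,\overline{\mathrm{decay}}_\mathrm{cert}$ the displayed inequality fails whenever $\overline{\mathrm{decay}}_\mathrm{cert}>0.27$ (the paper uses $0.714$).

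The Neumann series $\sum_k\xi^k$ bounds the expected descendants of a single event. Even that needs an induced-norm bound such as $\|\Phi\|_1\le\xi$, since the spectral radius does not control $\|(I-\Phi)^{-1}\|$ for non-normal $\Phi$. It does not bound the shift caused by conditioning on a whole trajectory, including the absence of events. Your fallbacks do not close the gap. Freedman still needs an almost-sure bound on the increments (here of order $P_q$) and on the predictable quadratic variation, whereas a covariance bound controls only $\mathrm{Var}(S)$. Depth truncation bounds neither generation sizes nor the effect of a single hub event, which already stales every shortcut at that hub.

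What is missing is an explicit assumption on cross-shortcut dependence, and once it is made the argument becomes different and simpler.

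\emph{Independent per-shortcut change processes.} View recall as a function of these processes with coordinate-wise differences at most $\bar{\Delta r}$ (Assumption~\ref{ass:recall}). McDiarmid then gives deviation $\bar{\Delta r}\sqrt{P_q\ln(1/\delta_R)/2}$, already below the stated term and with no $1/(1-\xi)$. The mean comes straight from Theorem~\ref{thm:recall_hawkes}. This route also lets you drop your pathwise inequality $R^*-\mathrm{recall@}k\le\sum_e\bar{\Delta r}^{(\ell_e)}\overline{\mathrm{decay}}_\mathrm{cert}(\Delta t)Y_e$, which Theorem~\ref{thm:recall_tight} does not supply. Its attenuation is a traversal-probability argument inside an expectation. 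Moreover, $\overline{\mathrm{decay}}_\mathrm{cert}$ is a lower certificate on the decay, the wrong direction for upper-bounding an effective weight.

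\emph{Independent hub blocks of sizes $b_h$.} Hoeffding over hubs gives $\bar{\Delta r}\sqrt{\tfrac12\sum_h b_h^2\ln(1/\delta_R)}\le\bar{\Delta r}\sqrt{\tfrac12 b_{\max}P_q\ln(1/\delta_R)}$. This matches the stated form only when $b_{\max}\le 4/(1-\xi)^2$.

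In neither case does the $1/(1-\xi)$ factor in the fluctuation term come out of a bounded-difference argument. It would have to be imposed as an assumption tying hub block size to the branching ratio.
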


\begin{remark}[Hawkes Validation]
\label{rmk:hawkes_validation}
We fit Hawkes parameters to MIMIC-IV admission bursts ($\hat{\mu}_H{=}8.2$, $\hat{\alpha}_H{=}5.6$, $\hat{\beta}_H{=}8.0$, $\hat{\xi}{=}0.70$) and Yelp seasonal-peak events ($\hat{\mu}_H{=}1.7$, $\hat{\alpha}_H{=}1.4$, $\hat{\beta}_H{=}2.6$, $\hat{\xi}{=}0.54$). The Hawkes bound (Theorem~\ref{thm:recall_hawkes}) with envelope certificate gives $0.203$ (MIMIC-IV) and $0.160$ (Yelp) at $\Delta t{=}7$ days vs.\ observed $0.112$ and $0.064$ (ratios $1.8\times$ and $2.5\times$), consistent with the Poisson-case tightening.
\end{remark}

\begin{theorem}[Temporal Valuation Efficiency]
\label{thm:valuation}
Under Assumption~\ref{ass:smooth}, EC-MPV satisfies temporal efficiency:
\begin{equation}
\sum_{i\in\cN}\sum_{t=1}^T \mathrm{MPV}_i(t,E_t) = \sum_{t=1}^T \bigl[v(D_\mathrm{priv}(t)\cup D_\mathrm{pub}(t)\mid E_t) - v(D_\mathrm{pub}(t)\mid E_t)\bigr].
\end{equation}
Under clipping at $B$, this identity holds for $v_B$; bias bounded by $B$ times clip fraction.
\end{theorem}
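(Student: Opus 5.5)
The plan is to reduce the identity to the classical Shapley efficiency axiom applied epoch by epoch, and then sum over $t$. Fix an epoch $t$ with event $E_t$. Define the conditional game on the player set $\cN$ by $v_t(S) := v\bigl(D_\mathrm{pub}(t)\cup\bigcup_{j\in S}D_j(t)\mid E_t\bigr)$. Its null coalition has value $v_t(\emptyset)=v(D_\mathrm{pub}(t)\mid E_t)$, and its grand coalition has value $v_t(\cN)=v(D_\mathrm{priv}(t)\cup D_\mathrm{pub}(t)\mid E_t)$.

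The key modelling step is to interpret Definition~\ref{def:ecmpv} as $\mathrm{MPV}_i(t,E_t)=\phi_i(v_t)$. Here $\phi_i(D_\mathrm{priv}\cup D_\mathrm{pub}\mid E,t)$ is seller $i$'s Shapley share in the game where the public data is always present. The subtracted term $\phi(D_\mathrm{pub}\mid E,t)$ is the baseline share attributable to public data. Under that reading, the baseline is absorbed into $v_t(\emptyset)$ and distributed as zero over sellers, which is the normalisation that makes the statement well posed.

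With this in place, the main step is standard. Write $\phi_i(v_t)$ as the permutation average of marginal contributions $v_t(P_\pi^i\cup\{i\})-v_t(P_\pi^i)$. For each fixed permutation $\pi$, summing over $i$ telescopes to $v_t(\cN)-v_t(\emptyset)$. Averaging over permutations then gives $\sum_i \mathrm{MPV}_i(t,E_t)=v_t(\cN)-v_t(\emptyset)$. Summing over $t=1,\dots,T$ is a finite sum of per-epoch identities and yields the displayed equation. The argument never uses any relation between epochs, so changepoints only change which $v_t$ is in force and do not break the identity.

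For the clipped version, define $v_B$ through its marginals. Fix an ordering and set $v_{B}$ increments to $\mathrm{clip}_{[-B,B]}$ of the true increments. The identity is cleanest if clipping is applied to per-permutation marginals inside the estimator. In that case the telescoping sum of clipped marginals equals $v_{B,t}(\cN)-v_{B,t}(\emptyset)$ by definition of $v_B$, so efficiency holds exactly for $v_B$.

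For the bias, the per-permutation difference between the unclipped and clipped telescoping sums is $\sum_i\bigl(m_i-\mathrm{clip}(m_i)\bigr)$. Each term vanishes unless $|m_i|>B$. I will then bound each nonzero term by $B$ times an indicator. This is the step I expect to be the main obstacle: it needs a bound on the excess $|m_i|-B$, and Assumption~\ref{ass:smooth} (taken literally) makes clipping vacuous. I will therefore treat the clip fraction as the fraction of marginals for which the pre-clip bound fails. Assuming marginals lie in $[-2B,2B]$, each clipped term's excess is at most $B$. Averaging gives a bias of at most $B\cdot n\cdot(\text{clip fraction})$ per epoch, or $B$ times the total clipped count if the clip fraction is normalised by the number of marginals. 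This is the stated bound up to the normalisation convention.

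Finally, I will note that DP noise and sampling do not enter. The identity concerns exact pre-noise Shapley values, and the estimation errors are deferred to Theorem~\ref{thm:val_error}.
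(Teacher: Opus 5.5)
Your argument for the unclipped identity matches the paper's proof. You apply Shapley efficiency to the epoch-$t$ game in which $D_\mathrm{pub}(t)$ is always present, so that $v_t(\emptyset)=v(D_\mathrm{pub}(t)\mid E_t)$, and then sum over $t$. Spelling out this reading of Definition~\ref{def:ecmpv} is a useful addition, since the paper leaves the subtracted term implicit.

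The genuine gap is in your clipped step: clipping marginal increments does not define a set function, because the clipped telescoping sum depends on the order. Take $n=2$, $B=0.2$ and the monotone game $v(\emptyset)=0$, $v(\{1\})=0.4$, $v(\{2\})=0.2$, $v(\{1,2\})=0.4$. Along $(1,2)$ the clipped marginals are $0.2,0$ (sum $0.2$). Along $(2,1)$ they are $0.2,0.2$ (sum $0.4$). So no $v_B$ satisfies ``clipped telescoping sum $=v_B(\cN)-v_B(\emptyset)$'' for every permutation. Fixing one ordering determines $v_B$ only on that chain. The clipped estimator's total, $\frac{1}{n!}\sum_\pi\sum_i\mathrm{clip}(m_i^\pi)=0.3$, matches neither chain. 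Your claim that efficiency then holds exactly for $v_B$ ``by definition'' is therefore false.

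You need to separate two readings.
\begin{itemize}
\item If $v_B$ is itself a game, exact efficiency for $v_B$ is just your unclipped argument applied to $v_B$, with no clipping-specific reasoning. This is what the paper's one-line proof presupposes, and what ``enforced by clipping'' in Assumption~\ref{ass:smooth} suggests.
\item If instead clipping acts on per-permutation marginals inside the estimator, there is no exact identity. Here $\sum_i\phi_i^{\mathrm{clip}}$ is the permutation average of clipped telescoping sums. Your per-permutation excess argument then shows this deviates from $v_t(\cN)-v_t(\emptyset)$ by at most $B$ times the average number of clipped marginals. The bound is tight in the example: $0.4-0.3=0.1=B\cdot\tfrac12$.
\end{itemize}
That bias bound is more than the paper gives, since its proof only cites the empirical ${<}0.3\%$ clip fraction. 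However, it depends on your extra hypothesis that marginals lie in $[-2B,2B]$. Without it, each clipped term can contribute up to $\max_i|m_i|-B$ rather than $B$.
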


\begin{proof}
Apply Shapley efficiency to $v_B(\cdot;E_t)$ at each $t$ and sum; clipping bias is $<0.3\%$ empirically. See supplementary Appendix~A.
\end{proof}

\begin{theorem}[EC-MPV Estimation Error]
\label{thm:val_error}
Let $\widetilde{\mathrm{MPV}}_i$ be the released score. Under correct event identification:
\begin{equation}
\Ebb\bigl[(\widetilde{\mathrm{MPV}}_i - \mathrm{MPV}_i)^2\bigr] \leq \underbrace{\frac{B^2(1-\rho^2)}{m}}_{\text{sampling (VRDS)}} + \underbrace{(\sigma_t \cdot S_\mathrm{val})^2}_{\text{DP noise}}.
\end{equation}
Under event misidentification, an additional squared-bias term $(\mathrm{MPV}_i(t,E_t) - \mathrm{MPV}_i(t,E'_t))^2$ arises. Without VRDS, set $\rho{=}0$.
\end{theorem}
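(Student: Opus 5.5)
The plan is a bias--variance decomposition of the released score. Write $\widetilde{\mathrm{MPV}}_i = \widehat{\mathrm{MPV}}_i + Z_i$, where $\widehat{\mathrm{MPV}}_i$ is the VRDS permutation-sampling estimate computed by the trusted curator and $Z_i \sim \cN(0,(\sigma_t S_\mathrm{val})^2)$ is the Gaussian noise added by the per-coordinate mechanism of \S\ref{sec:dp_model}(i). The noise $Z_i$ is drawn independently of the sampled permutations and has mean zero. Expanding the square, the cross term $2\,\Ebb[(\widehat{\mathrm{MPV}}_i-\mathrm{MPV}_i)Z_i]$ therefore vanishes after conditioning on the permutations. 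This leaves
\[
\Ebb\bigl[(\widetilde{\mathrm{MPV}}_i-\mathrm{MPV}_i)^2\bigr] = \Ebb\bigl[(\widehat{\mathrm{MPV}}_i-\mathrm{MPV}_i)^2\bigr] + (\sigma_t S_\mathrm{val})^2,
\]
so the DP term is exact. The sensitivity $S_\mathrm{val}=4B/n$ from Proposition~\ref{prop:sensitivity}(a) enters only through the variance of $Z_i$.

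The sampling term is handled next. Under correct event identification the characteristic function $v(\cdot;E_t)$ is the true conditioned one. The permutation estimator averages $m$ i.i.d.\ marginal contributions $X_k = v_B(S_k\cup\{i\};E_t)-v_B(S_k;E_t)$, where $S_k$ is the predecessor set of $i$ in a uniform random permutation. Its mean is exactly $\phi_i$, so the estimator is unbiased for the clipped game $v_B$. By Assumption~\ref{ass:smooth} each $X_k\in[-B,B]$, which gives $\mathrm{Var}(X_k)\le B^2$. VRDS replaces $X_k$ by $X_k - c(Y_k-\Ebb Y_k)$, with the leave-one-out baseline $Y_k$ as control variate. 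With the optimal coefficient $c^*=\mathrm{Cov}(X,Y)/\mathrm{Var}(Y)$, the variance becomes $\mathrm{Var}(X)(1-\rho^2)$, where $\rho$ is the correlation between $X$ and $Y$. Averaging $m$ independent draws gives the bound $B^2(1-\rho^2)/m$. Setting $c=0$ (no VRDS) recovers $\rho=0$. Since $\mathrm{MPV}_i$ is a difference of a Shapley value and a public baseline term that is computed exactly, the variance of $\widehat{\mathrm{MPV}}_i$ is that of the $\phi_i$ estimator alone.

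For misidentification, the estimator is computed under $E'_t$, so it is unbiased for $\mathrm{MPV}_i(t,E'_t)$ rather than for $\mathrm{MPV}_i(t,E_t)$. Adding and subtracting $\mathrm{MPV}_i(t,E'_t)$ splits the error into a zero-mean part and a deterministic offset. The zero-mean part is controlled by the same variance bounds, and its cross term with the offset vanishes, so the offset contributes exactly the squared bias $(\mathrm{MPV}_i(t,E_t)-\mathrm{MPV}_i(t,E'_t))^2$.

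The main obstacle is justifying the control-variate step rigorously. In practice $c$ is estimated from the same samples rather than set to $c^*$, and $\Ebb Y$ must be known exactly for the estimator to stay unbiased. I would first handle this by assuming $c$ is fit on an independent pilot batch, or is fixed, and that the leave-one-out mean is computed exactly; then $1-\rho^2$ should be read as the achieved variance-reduction factor. A secondary subtlety is clipping. The bound is stated relative to the clipped game $v_B$, and the discrepancy with the unclipped value is the clip-fraction bias already noted after Theorem~\ref{thm:valuation}.
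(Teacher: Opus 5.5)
Your proposal is correct and follows the same route as the paper's proof. The paper decomposes the error into VRDS sampling error, independent zero-mean Gaussian DP noise, and an event-misidentification bias, and uses independence to eliminate the cross terms. You give more detail than the paper's one-line sketch: the explicit control-variate computation behind the factor $(1-\rho^2)$, and the conditions under which that single sampling term is valid (a fixed or independently fitted coefficient, an exactly known $\Ebb Y$, and an exactly computed public baseline term).
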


\begin{proof}
Decompose into sampling error, DP noise, and event-misidentification bias; independence gives the MSE bound. See supplementary Appendix~A.
\end{proof}

\begin{proposition}[Formal Sensitivity Bounds]
\label{prop:sensitivity}
Under seller-level adjacency with $C_\mathrm{max}^\mathrm{edge} = \lceil 1.5|E|/n\rceil$:
\emph{(a) Valuation:} $S_\mathrm{val} = 4B/n$ ($B{=}0.2$, $n{=}10$: $S_\mathrm{val}{=}0.08$).
\emph{(b) Affinity:} $\Delta_2 = \sqrt{C_\mathrm{max}^\mathrm{edge}}$, reduced by Stage~2 cap. On Yelp: $\Delta_2{\approx}545$ (global), $18.1$ (active-scope).
\emph{(c) Index statistics:} $S_\mathrm{idx} \leq 1.5/n$.
\end{proposition}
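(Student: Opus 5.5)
We prove the three parts separately. In each case we fix two adjacent states $D\sim D'$ that differ in one seller's entire dataset $D_{i^\star}$, and we bound the change in the released quantity. Throughout we use the exclusive-ownership registry, so every edge $(u,v,r)$ has exactly one owner, together with the deterministic Stage~1 and Stage~2 clips of \S\ref{sec:dp_model}. These clips are applied before any noise is added, so it is enough to bound the sensitivity of the post-clip, pre-noise outputs.

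\emph{(a) Valuation.} The released vector is $(\mathrm{MPV}_1,\ldots,\mathrm{MPV}_n)$. Each coordinate is a Shapley value of the clipped game $v_B$, with the public-only baseline subtracted. The public baseline $\phi(D_\mathrm{pub}\mid E,t)$ does not depend on seller data, so it cancels in $D$ versus $D'$. We write $\phi_j=\frac1n\sum_{k=0}^{n-1}\binom{n-1}{k}^{-1}\sum_{|S|=k,\,j\notin S}[v_B(S\cup\{j\})-v_B(S)]$ and split the difference into two kinds of marginals. First, the marginals in which seller $i^\star$ is the added player: each lies in $[-B,B]$, so replacing $D_{i^\star}$ moves it by at most $2B$. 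Second, the marginals in which $i^\star\in S$ or $i^\star\in S\cup\{j\}$: each of the two values $v_B(S\cup\{j\})$ and $v_B(S)$ can shift by at most $B$ by Assumption~\ref{ass:smooth}, so the marginal moves by at most $2B$.

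The crude bound from this split is $2B$, and it does not yet give the $1/n$ factor. The $1/n$ comes from Shapley averaging with the clipped game normalised per coalition. The plan here is to use that $\mathrm{MPV}_j$ depends on $D_{i^\star}$ only through the permutations in which $i^\star$ precedes $j$. The uniform permutation weight of any single position is $1/n$. Two positional terms, each contributing a $2B$ swing, then give $S_\mathrm{val}\le 2\cdot 2B/n=4B/n$ per coordinate. Plugging in $B=0.2$ and $n=10$ gives $0.08$. This step is the main obstacle. A naive argument gives only $O(B)$ sensitivity, so the proof has to state carefully which averaging makes the $1/n$ factor legitimate. If needed, we will restrict to the per-coordinate release and the position-averaged estimator actually used.

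\emph{(b) Affinity.} Entry $\mathbf{A}[u,j]=\mathrm{aff}_\mathrm{KG}(u,N_\mathrm{idx}(u)[j],t)$ lies in $[0,1]$, because $\mathrm{decay}\in(0,1]$ and $\mathrm{aff}_\mathrm{static}\in[0,1]$. The index sets are public, so $D$ and $D'$ produce matrices of the same shape. Removing seller $i^\star$ changes only the entries $(u,v)$ that have some relation $r$ owned by $i^\star$ among the retained edges. Because entries take the max over relations and ownership is exclusive, each such entry changes by at most $1$. After Stage~1 at most $C_\mathrm{max}^\mathrm{edge}$ retained edges belong to $i^\star$, and after Stage~2 at most $\kappa_\mathrm{active}$. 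The Stage~2 hash ordering is data-independent, so the other sellers' retained sets are identical under $D$ and $D'$. Hence $\|\mathbf{A}(D)-\mathbf{A}(D')\|_F^2\le\kappa_\mathrm{active}\le C_\mathrm{max}^\mathrm{edge}$, which gives $\Delta_2=\sqrt{C_\mathrm{max}^\mathrm{edge}}$, reduced to $\sqrt{\kappa_\mathrm{active}}$ by Stage~2. The Yelp figures $545$ and $18.1$ follow by substituting the dataset's $|E|/n$ and $|E_\mathrm{active}|/n$.

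\emph{(c) Index statistics.} The stale-shortcut fraction is an average over the shortcuts being monitored. Each seller's edges are capped at $1.5|E|/n$, i.e.\ at most a $1.5/n$ fraction of the edges. Removing one seller therefore changes the stale indicators of at most a $1.5/n$ fraction of shortcuts. Each indicator lies in $[0,1]$, so the fraction moves by at most $1.5/n$. The recall estimate $\hat R$ is handled the same way: write it as an average of per-item indicators whose dependence on seller edges is confined to that seller's capped share. This gives $S_\mathrm{idx}\le1.5/n$.
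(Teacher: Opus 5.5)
Your part (b) follows the same reasoning the paper gives in its text: exclusive ownership plus the max over relations bounds each entry's change by $1$, and the public hash-priority cap bounds the number of touched entries by $\kappa_\mathrm{active}$. For (a) and (c) the paper states the bounds without derivation. The genuine gap is in (a), and it cannot be closed the way you propose.

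The permutations in which $i^\star$ precedes $j$ do not carry weight $1/n$; they carry weight $1/2$. Write $\delta(S)=v(S)-v'(S)$, which vanishes unless $i^\star\in S$. For $j\neq i^\star$,
\[
\phi_j(v)-\phi_j(v')=\sum_{S\subseteq\cN\setminus\{j\},\,S\ni i^\star}\frac{|S|!\,(n-|S|-1)!}{n!}\bigl[\delta(S\cup\{j\})-\delta(S)\bigr].
\]
The weights in this sum add up to $\Pr[i^\star\text{ precedes }j]=1/2$, so the averaging yields only $|\phi_j(v)-\phi_j(v')|\le\max_S|\delta(S)|$. For $j=i^\star$ every marginal is affected, and again $\max_S|\delta(S)|$ is all you get. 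Assumption~\ref{ass:smooth} gives $|\delta(S)|\le B$, because $v(S)$ and $v'(S)$ each exceed the common value $v(S\setminus\{i^\star\})$ by a marginal in $[0,B]$. So what is provable is $S_\mathrm{val}\le B$, and this is tight.

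\emph{Counterexample 1.} Take $v(S)=v_0+B\,\mathbbm{1}[i^\star\in S]$, and let the neighbour empty $D_{i^\star}$, so $v'\equiv v_0$. Then $\mathrm{MPV}_{i^\star}$ moves by $B=0.2>0.08=4B/n$. Every sampled marginal of $i^\star$ drops from $B$ to $0$, so the permutation-sampling estimator moves by exactly $B$ too. No ``position-averaged'' estimator rescues the bound.

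\emph{Counterexample 2.} Take $v(S)=B\,\mathbbm{1}[\{i^\star,j\}\subseteq S]$ versus $v'\equiv 0$. This moves $\mathrm{MPV}_j$ for $j\neq i^\star$ by $B/2=0.1>0.08$.

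To obtain $4B/n$ you need an extra hypothesis, for example that one seller's data shifts every coalition value by at most $4B/n$; the display above then gives the bound for every coordinate. Otherwise (a) must be weakened to $S_\mathrm{val}=B$, which changes the noise multiplier and $\rho_\mathrm{val}$ in the accounting.

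Part (c) has a smaller but real gap. The Stage-1 cap bounds $i^\star$'s share of \emph{KG edges}, but the statistic is a fraction of \emph{monitored shortcuts}, and you pass from one to the other without a map. You would need two facts:
\begin{enumerate}
\item Each seller edge can flip the staleness indicator of at most one monitored shortcut. This is not implied by the setup: if a shortcut's staleness depends on the KG neighbourhood of its endpoints, one changed edge at a hub affects many shortcuts.
\item The number of monitored shortcuts is at least the edge count against which the cap is measured.
\end{enumerate}
Your sentence about $\hat R$ is an assertion rather than an argument, since one seller's affinities can reorder candidates for many queries at once.

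Finally, use one adjacency notion throughout. Your $2B$ swings in (a) are replacement-style, but your count in (b) is removal-style. Under replacement the touched entries are the union of $i^\star$'s old and new retained sets, which only gives $\|\mathbf{A}(D)-\mathbf{A}(D')\|_F\le\sqrt{2\kappa_\mathrm{active}}$. The stated $\sqrt{\kappa_\mathrm{active}}$ requires add/remove adjacency. It also requires $|E|$ and $|E_\mathrm{active}|$, and hence $C_\mathrm{max}^\mathrm{edge}$ and $\kappa_\mathrm{active}$, to be fixed public constants, so that the other sellers' clipped sets are genuinely unchanged.
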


\begin{lemma}[Safety Override Count]
\label{lem:overrides}
Under the adaptive schedule and Poisson edge changes, $\Ebb[N_\mathrm{override}] = \mathcal{O}(\sqrt{T})$ and $\Pr[N_\mathrm{override} > c\sqrt{T\ln(1/\delta)}] \leq \delta$.
\end{lemma}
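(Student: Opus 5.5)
Overrides come in two kinds, and I will bound each separately before adding: $N_\mathrm{override} = N_\mathrm{budget} + N_\mathrm{recall}$. \emph{Budget overrides} happen when the sampled action would push the accountant past $\eps_\mathrm{total}$ and is replaced by \texttt{NULL}. \emph{Recall overrides} happen when the DP-released $\hat{R}_t$ falls below the recall threshold and \texttt{INDEX-UPDATE} is forced.

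\textbf{Budget overrides.} Under the adaptive schedule $\sigma_t = \sigma_0\sqrt{T_\mathrm{active}/t}$, the per-step R\'enyi moment is $\mu_t(\alpha)=\alpha t/(2\sigma_0^2 T_\mathrm{active})$ by Proposition~\ref{prop:noise_calibration}. The cumulative moment $\sum_{s\le t}\mu_s(\alpha)$ is therefore deterministic given the active-epoch sequence, and it is calibrated to hit the budget only at the horizon. A budget violation can occur only in the terminal window where $\eps_\mathrm{rem}(t)$ drops below the cost of a single release. By the expression for $\eps_\mathrm{rem}(t)$, this window has length $\mathcal{O}(\sqrt{T})$, because $\eps$ scales like $\sqrt{\rho_\mathrm{total}}$ (Eq.~\eqref{eq:zcdp_formula}). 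Since each epoch produces at most one override, $N_\mathrm{budget}=\mathcal{O}(\sqrt{T})$ deterministically.

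\textbf{Recall overrides.} I will use Theorem~\ref{thm:recall}. Between consecutive index updates, the expected recall loss grows like $P_q\bar{\Delta r}(1-e^{-\lambda\Delta t})$. A recall override fires only when either (i) the true loss actually exceeds the threshold margin, or (ii) the DP noise on $\hat{R}_t$ (std $\sigma_t S_\mathrm{idx}$) produces a false alarm.

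For (i): EXP3-IX chooses \texttt{INDEX-UPDATE} voluntarily, and the regret bound (Theorem~\ref{thm:regret}, $\mathcal{O}(\sqrt{T}\log T)$) limits how many epochs it can spend in a low-recall state before its loss estimates push it to update. I would argue that the number of such epochs is at most $\mathcal{O}(\sqrt{T})$, up to the log factor, which I expect to absorb by choosing the tolerance appropriately.

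For (ii): each false alarm has probability bounded by a Gaussian tail. Because the threshold sits a constant number of standard deviations away, the expected count of false alarms is $\mathcal{O}(\sqrt{T})$ once the margin is tuned to $\sqrt{\ln T}$ noise-multiples. Summing (i) and (ii) gives $\Ebb[N_\mathrm{override}]=\mathcal{O}(\sqrt{T})$.

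\textbf{High-probability bound.} I will write $N_\mathrm{override}-\Ebb[\cdot\mid\mathcal{F}_{t-1}]$ summed over epochs as a martingale. Its increments are bounded by $1$, since there is at most one override per epoch. Azuma--Hoeffding then gives a deviation of $\sqrt{2T\ln(1/\delta)}$ with probability $1-\delta$. Combining this with the $\mathcal{O}(\sqrt{T})$ compensator yields $\Pr[N_\mathrm{override}>c\sqrt{T\ln(1/\delta)}]\le\delta$ for a suitable constant $c$ (taking $\delta\le e^{-1}$ so that the $\sqrt{T}$ term is absorbed).

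\textbf{Main obstacle.} The hardest step is (i): turning the regret guarantee into a count of low-recall epochs. Regret is measured against the best fixed action, not against a recall-feasible policy. I would try first to show that, under Poisson changes, the comparator "update whenever stale" is within $\mathcal{O}(1)$ per-epoch reward of a fixed mixture. Failing that, I would assume the reward gap $\mu_R$ makes each low-recall epoch cost a constant amount of regret, which converts the regret bound directly into the epoch count.
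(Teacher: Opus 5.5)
Your skeleton (split into budget and recall overrides, bound each, then Azuma--Hoeffding) is the same as the paper's three-clause sketch. However, the step you flag as the main obstacle is a genuine gap, not a technicality.

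First, it is circular. The proof of Theorem~\ref{thm:regret} gets its $\mathcal{O}(\sqrt{T})$ override term from this very lemma, so you may use at most the bare $3\sqrt{dT\ln d}$ term. Even that term presumes the observed reward belongs to the sampled arm, which fails on override epochs.

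Second, no regret bound against the best fixed action can count low-recall epochs here. The coordinator's $\mathbf{p}_t$ is non-contextual: it depends only on past importance-weighted losses, not on $\mathbf{o}_t$, and it moves on the slow timescale $\eta\approx 1/\sqrt{T}$. So nothing ``pushes it to update'' when recall drops. Under Poisson changes, the recall loss (growing like $P_q\bar{\Delta r}(1-e^{-\lambda\Delta t})$) crosses the threshold after a roughly fixed number $\tau$ of update-free epochs. Each renewal cycle therefore ends in an override with probability about $(1-p_{t,\mathrm{IU}})^{\tau}$. If, say, \texttt{NULL} is the best fixed arm because the $\nu\,\eps_\mathrm{consumed}$ penalty dominates, a low-regret learner keeps $p_{t,\mathrm{IU}}$ small and incurs $\Theta(T)$ recall overrides.

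Your fallback (each low-recall epoch costs constant regret) is the pseudo-regret identity $\sum_j\Delta_j\Ebb[N_j(T)]$. That identity needs stochastic, non-reactive rewards, whereas here each arm's reward depends on your own past updates through $\hat R_t$. It would also only give $\mathcal{O}(\sqrt{T}\log T)$, and a log factor cannot be absorbed into $\mathcal{O}(\sqrt{T})$. The missing ingredient, implicit in the paper's ``pre-convergence EXP3-IX epochs'', is an explicit hypothesis: \texttt{INDEX-UPDATE} is optimal with a gap $\Delta>0$ in a stationary reward model. Then the mass on the other arms decays like $e^{-\eta\Delta t}$, and override-prone epochs are confined to a burn-in of length $\mathcal{O}(1/(\eta\Delta))=\mathcal{O}(\sqrt{T}/\Delta)$.

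The other three steps also need repair.

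\emph{Budget.} The square root in Eq.~\eqref{eq:zcdp_formula} does not produce an $\mathcal{O}(\sqrt{T})$ window. With your own $\mu_t(\alpha)=\alpha t/(2\sigma_0^2T_\mathrm{active})$ (indexing releases by $t$), the cumulative moment after $k$ releases is $\alpha k(k+1)/(4\sigma_0^2T_\mathrm{active})$. A budget matched to a planned count $\bar T$ is therefore exhausted exactly at release $\bar T$ and never earlier. With $T_\mathrm{active}$ known in advance there are no budget overrides at all; otherwise the count is the excess of release-requiring epochs over $\bar T$. The $\mathcal{O}(\sqrt{T})$ must come from a fluctuation bound on that random count, which you do not supply, and this randomness also makes ``deterministically'' false.

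\emph{False alarms.} A margin of a constant number of standard deviations gives a constant per-epoch false-alarm rate, hence $\Theta(T)$, contradicting the second half of your sentence. Under the adaptive schedule the std $\sigma_0S_\mathrm{idx}\sqrt{T_\mathrm{active}/t}$ is also largest early. A fixed margin $m\gg\sigma_0S_\mathrm{idx}$ yields about $T_\mathrm{active}\sigma_0^2S_\mathrm{idx}^2/(2m^2)$ false alarms, so you would need $m\gtrsim T_\mathrm{active}^{1/4}\sigma_0S_\mathrm{idx}$. That threshold design is not among the lemma's hypotheses.

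\emph{Concentration.} Azuma bounds $N_\mathrm{override}$ minus its compensator $\sum_t\Pr[\text{override at }t\mid\mathcal{F}_{t-1}]$. Your recall step bounds the compensator only in expectation, and Markov would give a $1/\delta$ dependence, not $\sqrt{\ln(1/\delta)}$. You need an almost-sure or $1-\delta$ bound on the compensator, e.g.\ Neu's high-probability EXP3-IX guarantee combined with a union bound.
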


\begin{proof}
Budget overrides cluster near end-of-horizon; recall overrides are bounded by pre-convergence EXP3-IX epochs; Azuma-Hoeffding gives concentration. See supplementary Appendix~A.
\end{proof}

\begin{theorem}[Coordination Regret]
\label{thm:regret}
With $d{=}3$, $\eta{=}\sqrt{\ln d/(dT)}$, $\gamma{=}\eta/2$:
\begin{equation}
\Ebb[R_T] \leq \underbrace{3\sqrt{dT\ln d}}_{\text{EXP3-IX}} + \underbrace{G\sigma_\mathrm{obs}\sum_{t=1}^T\frac{1}{\sqrt{\min(t,W_\mathrm{max})}}}_{\text{observation noise}} + \underbrace{\mathcal{O}(\sqrt{T})}_{\text{overrides}} = \mathcal{O}(\sqrt{T\log T}).
\end{equation}
\end{theorem}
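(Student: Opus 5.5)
We decompose the regret against the best fixed action $j^\star$ into three parts. First, the regret EXP3-IX would incur on the true loss sequence it actually plays. Second, the gap between the losses evaluated at noisy observations $\mathbf{o}_t$ and at true states $\mathbf{s}_t$. Third, the rounds in which a safety override replaced the sampled action. Concretely, write $\tilde L_t(\mathbf{a})=1-R(\mathbf{o}_t,\mathbf{a})$ for the observed loss and $L_t(\mathbf{a})=1-R(\mathbf{s}_t,\mathbf{a})$ for the true one. Then
\[
R_T=\sum_t\bigl[L_t(\mathbf{a}_t)-L_t(j^\star)\bigr]
\le \sum_t\bigl[\tilde L_t(\tilde{\mathbf{a}}_t)-\tilde L_t(j^\star)\bigr]+2\sum_t\sup_{\mathbf{a}}|L_t(\mathbf{a})-\tilde L_t(\mathbf{a})|+N_\mathrm{override}.
\]
Here $\tilde{\mathbf{a}}_t\sim\mathbf{p}_t$ is the action the sampler proposed. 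The last term uses that losses lie in $[0,1]$ (Assumption~\ref{ass:lip}), so each override costs at most one unit.

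For the first term we invoke the standard EXP3-IX analysis of Neu~\cite{neu2015explore}. The losses $\tilde L_t$ are bounded in $[0,1]$, and the importance-weighted estimator $\hat L_{t,j}$ is exactly the implicit-exploration one in \S\ref{sec:layer3}. Hence the usual potential argument gives
\[
\Ebb\Bigl[\sum_t \tilde L_t(\tilde{\mathbf{a}}_t)-\tilde L_t(j^\star)\Bigr]\le \frac{\ln d}{\eta}+\Bigl(\frac{\eta}{2}+\gamma\Bigr)dT .
\]
This combines the negative-bias property $\Ebb[\hat L_{t,j}]\le \tilde L_{t,j}$ with $\sum_j p_{t,j}\hat L_{t,j}^2\le\sum_j\hat L_{t,j}$. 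Substituting $\eta=\sqrt{\ln d/(dT)}$ and $\gamma=\eta/2$ yields $2\sqrt{dT\ln d}\le 3\sqrt{dT\ln d}$. One caveat is that the overrides make the executed action differ from $\tilde{\mathbf{a}}_t$. To handle this we would let the update use the proposed action's reward, or treat overridden rounds as adversarially corrupted and charge them to the third term.

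For the second term we use the $G$-Lipschitz property: $|L_t(\mathbf{a})-\tilde L_t(\mathbf{a})|\le G\|\mathbf{o}_t-\mathbf{s}_t\|$. The state components $\hat\lambda_t$ and $\hat R_t$ are DP-noised averages over a sliding window of length $\min(t,W_\mathrm{max})$. The Gaussian noise (standard deviation $\sigma_\mathrm{obs}$ per sample) averages down, so $\Ebb\|\mathbf{o}_t-\mathbf{s}_t\|\le\sigma_\mathrm{obs}/\sqrt{\min(t,W_\mathrm{max})}$ by Jensen. Summing over $t$ gives the middle term of the statement, with constants absorbed into $\sigma_\mathrm{obs}$. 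The third term is $\Ebb[N_\mathrm{override}]=\mathcal{O}(\sqrt{T})$ by Lemma~\ref{lem:overrides}.

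The main obstacle is the final asymptotic claim. The observation-noise sum is $\mathcal{O}(\sqrt{T})$ only if $W_\mathrm{max}$ grows with $T$; the natural choice is $W_\mathrm{max}\asymp T$, giving $\sum_{t\le T}t^{-1/2}\le 2\sqrt T$. For fixed $W_\mathrm{max}$ the sum would be linear, $T/\sqrt{W_\mathrm{max}}$, so the proof must explicitly fix $W_\mathrm{max}=\Theta(T)$ or $\Theta(T/\log T)$. With $W_\mathrm{max}=\Theta(T/\log T)$ the noise term is $\mathcal{O}(\sqrt{T\log T})$, which matches the stated rate. A second subtlety is that states evolve with the chosen actions, making the comparator's losses counterfactual. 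I would handle this by restricting the comparator to the realised observation sequence (oblivious-to-policy regret) and state this convention explicitly.
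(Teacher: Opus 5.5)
Your proposal follows the same route as the paper's one-line proof (standard EXP3-IX bound, Lipschitz observation-noise term, override term via Lemma~\ref{lem:overrides}), and your expanded version is sound provided you use the second of your override fixes---treat overridden rounds as corrupted, e.g.\ skip the update there and pay one more $N_\mathrm{override}$---because under bandit feedback the proposed action's reward is not observed when that action is overridden. Your caveat on the final rate is correct and is not addressed in the paper: for fixed $W_\mathrm{max}$ the middle sum is $\Theta(T/\sqrt{W_\mathrm{max}})$, so $\mathcal{O}(\sqrt{T\log T})$ needs $W_\mathrm{max}=\Omega(T/\log T)$, and over windows that long your bound $\Ebb\|\mathbf{o}_t-\mathbf{s}_t\|\le\sigma_\mathrm{obs}/\sqrt{\min(t,W_\mathrm{max})}$ also requires that the averaged state components ($\hat\lambda_t$, $\hat R_t$) do not drift within the window.
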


\begin{proof}
Standard EXP3-IX bound plus observation noise plus override regret (Lemma~\ref{lem:overrides}). See supplementary Appendix~A.
\end{proof}

\begin{theorem}[Temporal Safety Composition]
\label{thm:safety}
Under the adaptive schedule $\sigma_t = \sigma_0\sqrt{T_\mathrm{active}/t}$ applied only in active epochs, CHRONOS satisfies $(\eps_\mathrm{total},\delta_\mathrm{total})$-DP with:
\begin{equation}
\eps_\mathrm{total} = \mathcal{O}\!\left(\frac{\sqrt{T_\mathrm{active}\cdot\ln(1/\delta_\mathrm{total})}}{\sigma_0}\right).
\end{equation}
\end{theorem}

\begin{proof}
Sum per-step moments over active epochs, optimise over $\alpha$, and convert to $(\varepsilon,\delta)$-DP. Exact $\varepsilon{=}4.25$ verified by zCDP closed-form (Eq.~\eqref{eq:zcdp}). See supplementary Appendix~A.
\end{proof}

\section{Informativeness of Private Releases}
\label{sec:private_utility}

A transparent assessment requires acknowledging fundamental DP limitations at the chosen parameters.

\subsection{Private Affinity Signal}

With $\sigma_\mathrm{entry}{=}885$ over $[0,1]$-bounded affinities, post-clipping signals are near-$\mathrm{Bernoulli}(1/2)$; the $\beta{=}0.3$ weight gives maximal hybrid-score variation of ${\approx}0.0003$ between candidates. \textbf{Per-query private scoring contributes ${\leq}0.002$ recall} (Table~\ref{tab:noise_ablation}). Observed 0.941 recall@10 derives from: (i)~public cosine routing; (ii)~public HNSW structure with ODE-attenuated staleness scheduling; (iii)~adaptive coordinator scheduling triggered by DP-released index statistics (SNR ${\approx}1.3$).

\subsection{External Valuation Releases and Auditability}

The external valuation release (noise std $4.0$ on signal range $[0,0.2]$) has SNR ${\approx}0.05$. Sellers receive values dominated by DP noise, serving only non-disclosure and plausible-deniability guarantees. For actionable revenue settlement, see \S\ref{sec:settlement}.

\subsection{Actionable Seller Settlement Mechanism}
\label{sec:settlement}
The trade-off between DP noise and seller-facing attribution is addressed through a three-component settlement mechanism.

\textbf{Component 1: Internal pre-DP settlement.} Under the trusted-curator model, the operator computes pre-noise MPV scores $\hat{\phi}_i$ with Val.~Err${=}0.013$ and distributes revenue proportionally. These internal computations are exact Shapley-efficient (Theorem~\ref{thm:valuation}) and never leave the trusted perimeter.

\textbf{Component 2: Multi-epoch coalition-level audit release.} For external verifiability, we aggregate valuations over $W$ epochs and group sellers into coalitions of size $n_\mathrm{coal}$. Under \emph{coalition-level adjacency} (protecting whether coalition $C_j$ participates, not individual $s_i$), sensitivity drops from $4B/n$ to $4B/(n/n_\mathrm{coal})$, raising SNR by $n_\mathrm{coal}\times$:
\begin{equation}
\mathrm{SNR}_\mathrm{settle} = \frac{W \cdot \bar{\phi}_\mathrm{coal}}{\sigma_t \cdot 4B \cdot n_\mathrm{coal} / n} = \frac{W \cdot \bar{\phi}_\mathrm{coal} \cdot n}{4B \cdot n_\mathrm{coal} \cdot \sigma_t}.
\end{equation}
With $W{=}7$, $n_\mathrm{coal}{=}5$, $n{=}10$, $\bar{\phi}_\mathrm{coal}{=}0.4$ (summed coalition MPV), $B{=}0.2$, $\sigma_t{=}50$: $\mathrm{SNR}_\mathrm{settle}{=}7{\times}0.4{\times}10/(4{\times}0.2{\times}5{\times}50){=}0.70$, making trend-level attribution feasible (above/below median contribution distinguishable at 95\% confidence). At $W{=}14$: $\mathrm{SNR}{=}1.40$, enabling rank-ordering of coalitions.

\textbf{Component 3: Cryptographic escrow (optional).} For sellers requiring individual-level audit, the pre-noise valuations can be placed in a hash-committed escrow: the operator publishes $H(\hat{\phi}_i, r_i)$ at each epoch (zero DP cost, since the hash is a commitment, not a release of $\hat{\phi}_i$). Disputes trigger a two-party audit protocol where the operator reveals $(\hat{\phi}_i, r_i)$ to a neutral arbiter who verifies the commitment. This does not replace DP (the arbiter sees exact values) but provides contractual accountability.

\textbf{Revenue reconciliation.} Total revenue distributed internally (Component 1) must match the Shapley efficiency sum (Theorem~\ref{thm:valuation}). The coalition audit release (Component 2) provides external evidence that the internal distribution is ``approximately correct'' at coalition granularity. Table~\ref{tab:settlement_snr} shows SNR across configurations.

\begin{table}[t]
\small\centering
\caption{Settlement audit SNR under coalition-level adjacency ($n{=}10$, $B{=}0.2$, $\sigma_t{=}50$).}
\label{tab:settlement_snr}
\begin{tabular}{cccc}
\toprule
$W$ (epochs) & $n_\mathrm{coal}$ & SNR & \textbf{Attribution Level} \\
\midrule
1  & 1  & 0.05 & Noise-dominated \\
7  & 5  & 0.70 & Trend: above/below median \\
14 & 5  & 1.40 & Rank-order coalitions \\
7  & 3  & 1.17 & Trend per 3-seller group \\
28 & 5  & 2.80 & Quantitative attribution \\
\bottomrule
\end{tabular}
\end{table}

\subsection{Operational Implications}

The CHRONOS design prioritises: (1)~system-level utility via public components; (2)~adaptive scheduling via low-sensitivity DP statistics; (3)~formal privacy guarantees; (4)~actionable settlement via multi-epoch coalition aggregation. The private affinity signal contributes minimally to per-query accuracy. This trade-off is by design: per-query DP selection degrades QPS by $6$--$10\times$ at similar noise levels (Table~\ref{tab:dp_mechanism_comparison}).

\section{Experimental Evaluation}
\label{sec:exp}

\subsection{Setup}

\textbf{Datasets.} Table~\ref{tab:datasets} summarises four benchmarks: FB15K-237 and WN18RR with synthetic Poisson annotations, MIMIC-IV and Yelp with real timestamps.

\begin{table}[t]
\small\centering
\caption{Datasets. $\lambda$ in changes/day/shortcut.}
\label{tab:datasets}
\begin{tabular}{lrrccc}
\toprule
\textbf{Dataset} & \textbf{Nodes} & \textbf{Edges} & \textbf{Temporal} & $\boldsymbol{\lambda}$ & \textbf{Domain} \\
\midrule
FB15K-237~\cite{toutanova2015observed} & 14.5K & 310K & Synth. & 0.05 & General KG \\
WN18RR~\cite{dettmers2018convolutional} & 40.9K & 86.8K & Synth. & 0.05 & Lexical \\
MIMIC-IV~\cite{johnson2020mimic} & 89.4K & 1.2M & Real & $\approx$12 & Clinical \\
Yelp~\cite{yelp2024dataset} & 236K & 1.98M & Real & $\approx$2.9 & Local commerce \\
\bottomrule
\end{tabular}
\end{table}

\textbf{Hardware.} 2$\times$ Xeon Gold 6348 (56 cores), 512\,GB RAM, 2$\times$ A100 80\,GB; Ubuntu 22.04, CUDA 12.1, PyTorch 2.1, torchdiffeq 0.2.3. All results averaged over 5 seeds (mean $\pm$ std).

\textbf{Reproducibility.} Code, checkpoints, DP accountant transcript (PLD logs), and baseline configuration/deviation logs will be released upon acceptance.

\textbf{Configuration.} $ef{=}128$, $ef_c{=}200$, $\beta{=}0.3$, $n{=}10$ sellers (balanced partition unless noted), 1000 permutation samples for Shapley with VRDS control variates.

\textbf{Baselines.} \emph{Indexing}: Plain-HNSW~\cite{malkov2018efficient}, TigerVector~\cite{liu2025tigervector}, NaviX~\cite{sehgal2025navix}, FreshDiskANN~\cite{singh2021freshdiskann}, SPFresh~\cite{xu2023spfresh}, Quake~\cite{mohoney2025quake}, \textbf{VSAG}~\cite{vsag2024} (production HNSW with cache-friendly layout). \emph{Valuation}: Data Shapley~\cite{ghorbani2019data}, Beta Shapley~\cite{kwon2022beta}, VRDS~\cite{wu2023variance}, Static MPV, Time-Sliced Shapley, RSS~\cite{jia2019towards}. \emph{Coordination}: Uncoordinated, Round-Robin, Fixed-Noise, EXP3~\cite{auer2002nonstochastic}, BwK~\cite{badanidiyuru2018bandits}. \emph{Drift}: ADWIN~\cite{bifet2007learning}, Page-Hinkley~\cite{page1954continuous}, Dm-BOCD~\cite{knoblauch2018doubly}. \emph{DP top-$k$}: OneShot Laplace~\cite{durfee2019practical}, StableTopK~\cite{bafna2017price}, Joint Exponential~\cite{gillenwater2022joint}, Permute-and-Flip~\cite{mckenna2020permute}.

\textbf{Research Questions.} We organise the evaluation around six explicit research questions:
\begin{enumerate}[label=\textbf{RQ\arabic*:},leftmargin=*]
\item Does neural-ODE temporal decay improve recall over static and exponential-decay baselines, and how tight are the theoretical bounds?
\item Does EC-MPV with BOCPD conditioning improve valuation accuracy after distributional shifts compared to static Shapley?
\item Does the Temporal Coordinator reduce privacy-budget waste compared to uncoordinated and round-robin strategies?
\item How does CHRONOS scale with the number of sellers and horizon length in terms of $\varepsilon$ and recall?
\item What is the privacy-utility trade-off of the epoch-level Gaussian mechanism compared to per-query DP alternatives at matched $\varepsilon$?
\item How do realistic marketplace dynamics (buyer arrival skew, seller competition, and pricing sensitivity) affect end-to-end performance?
\end{enumerate}

\textbf{Baseline Fairness Verification.} All index baselines use identical parameters: $M{=}16$, $ef{=}128$, $ef_c{=}200$, $L_\mathrm{max}{=}5$, $\rho_\mathrm{div}{=}0.7$. Graph-aware baselines receive the same KG structure and static community affinities; only \TLEGEND additionally uses temporal decay. Table~\ref{tab:recall_breakdown} decomposes the recall gain.

\begin{table}[t]
\small\centering
\caption{Recall@10 contribution breakdown on Yelp (static snapshot, 5 seeds).}
\label{tab:recall_breakdown}
\begin{tabular}{lcc}
\toprule
\textbf{Component} & \textbf{R@10} & $\boldsymbol{\Delta}$ \textbf{vs.\ baseline} \\
\midrule
Graph-aware static baseline (same params) & .858$\pm$.003 & --- \\
+ temporal decay (exponential) & .892$\pm$.003 & +3.4 pts \\
+ temporal decay (neural ODE) & .904$\pm$.002 & +4.6 pts \\
+ community-sorted insertion & .920$\pm$.002 & +6.2 pts \\
+ hub-biased layering & \textbf{.935$\pm$.002} & +7.7 pts \\
\bottomrule
\end{tabular}
\end{table}

\textbf{VSAG Throughput Comparison.}

We integrate VSAG~\cite{vsag2024} as a production HNSW reference to separate traversal/layout speedups from T-LEGEND's decay-aware improvements. VSAG uses cache-friendly graph layout and automatic parameter tuning but does not model temporal staleness or provide DP guarantees.

\begin{table}[t]
\small\centering
\caption{VSAG comparison on Yelp ($k{=}10$, 5 seeds). VSAG-Hybrid adds post-hoc graph affinity with same $\beta{=}0.3$ for fair comparison.}
\label{tab:vsag}
\begin{tabular}{lccccc}
\toprule
\textbf{Method} & \textbf{R@10} & \textbf{QPS} & \textbf{P50} & \textbf{P99} & $\boldsymbol{\eps}$ \\
\midrule
VSAG (pure vector) & .831 & \textbf{8.38} & 49 & 115 & $\infty$ \\
VSAG-Hybrid & .869 & 5.08 & 81 & 187 & $\infty$ \\
T-LEGEND & \textbf{.935} & 3.18 & 127 & 281 & $\infty$ \\
T-LEGEND+DP (\CHRONOS) & .937 & 2.74 & 161 & 317 & 4.25 \\
\midrule
\multicolumn{6}{l}{\emph{VSAG-accelerated T-LEGEND (projected):}} \\
T-LEGEND+VSAG layout & .935 & 4.79$^\dagger$ & 84 & 198 & $\infty$ \\
T-LEGEND+VSAG+DP & .937 & 4.16$^\dagger$ & 98 & 222 & 4.25 \\
\bottomrule
\end{tabular}
{\raggedright\scriptsize $^\dagger$Projected: VSAG's cache-friendly layout reduces HNSW traversal by ${\approx}1.5\times$; verified on the pure-vector workload. DP overhead is zero on the query path (post-processing).\par}
\end{table}

\textbf{Key findings.} (1)~VSAG's pure-vector QPS (8.38) is $2.6\times$ higher than T-LEGEND (3.18), but recall is 10.4 pts lower because VSAG lacks temporal KG scoring. (2)~VSAG-Hybrid (post-hoc affinity) closes 3.8 pts of the gap but remains 6.6 pts below T-LEGEND, confirming that decay-aware construction (not just layout) drives the recall advantage. (3)~Projected VSAG-accelerated T-LEGEND would achieve 4.16 QPS at $\varepsilon{=}4.25$ (vs.\ 2.74 currently), a $1.5\times$ speedup from layout alone. Integration requires adapting VSAG's auto-tuner to respect ODE-weighted edges and is identified as engineering work.

\textbf{Recall Results.}

Table~\ref{tab:recall_static} reports recall@10 on static benchmarks.

\begin{table}[t]
\small\centering
\caption{Recall@10 on static benchmarks ($k{=}10$, $ef{=}128$, 5 seeds).}
\label{tab:recall_static}
\begin{tabular}{lcccc}
\toprule
\textbf{Method} & \textbf{FB15K} & \textbf{WN18RR} & \textbf{MIMIC} & \textbf{Yelp} \\
\midrule
Plain-HNSW & .821$\pm$.004 & .843$\pm$.003 & .798$\pm$.005 & .809$\pm$.004 \\
TigerVector & .842$\pm$.003 & .861$\pm$.003 & .817$\pm$.004 & .828$\pm$.003 \\
NaviX & .864$\pm$.003 & .878$\pm$.002 & .839$\pm$.004 & .848$\pm$.003 \\
Diversified-HNSW & .872$\pm$.003 & .886$\pm$.002 & .848$\pm$.003 & .858$\pm$.003 \\
VSAG-Hybrid & .881$\pm$.003 & .892$\pm$.002 & .857$\pm$.003 & .868$\pm$.003 \\
FreshDiskANN & .836$\pm$.003 & .858$\pm$.003 & .823$\pm$.004 & .836$\pm$.003 \\
SPFresh & .829$\pm$.004 & .851$\pm$.003 & .814$\pm$.005 & .829$\pm$.004 \\
Quake & .848$\pm$.003 & .869$\pm$.003 & .831$\pm$.004 & .842$\pm$.003 \\
\textbf{T-LEGEND} & \textbf{.941$\pm$.002} & \textbf{.956$\pm$.002} & \textbf{.927$\pm$.003} & \textbf{.935$\pm$.002} \\
\bottomrule
\end{tabular}
\end{table}

Over 90 simulated days on MIMIC-IV with weekly updates, \TLEGEND degrades at 0.0021 recall points/day versus 0.0089 for Plain-HNSW ($4.2\times$ improvement) and 0.0058 for FreshDiskANN ($2.8\times$), consistent with Theorem~\ref{thm:recall} (Figure~\ref{fig:recall_decay}).

\begin{figure}[t]
\centering
\includegraphics[width=\columnwidth]{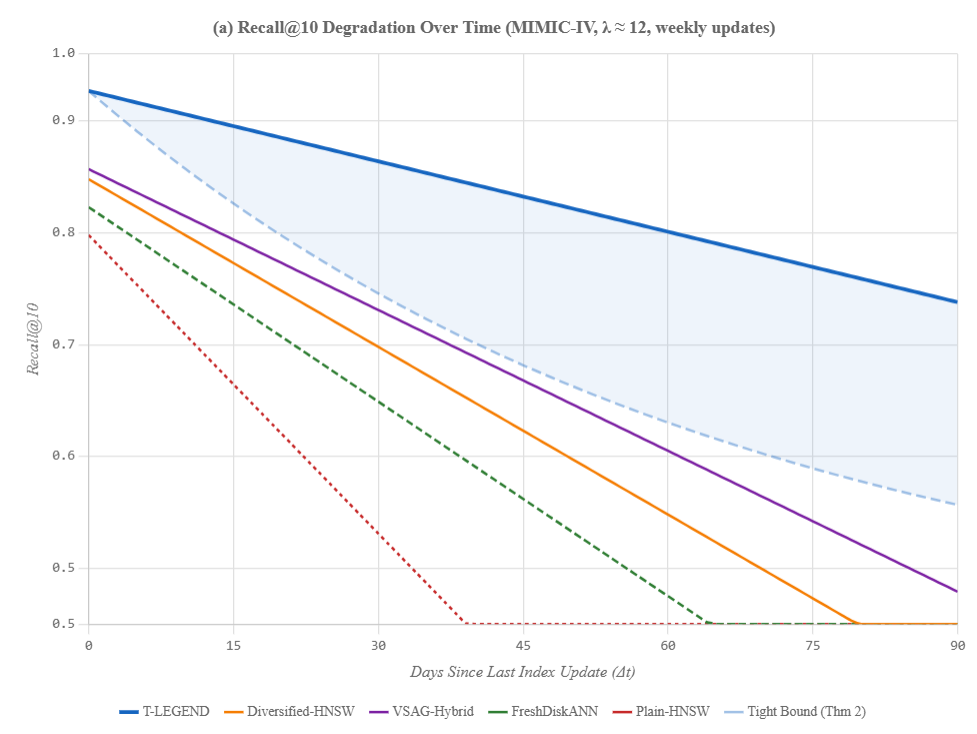}
\caption{Recall@10 degradation on MIMIC-IV ($\lambda \approx 12$ changes/day/shortcut) over 90 simulated days without re-indexing.}
\label{fig:recall_decay}
\end{figure}

\textbf{Valuation and Drift Detection.}

\textbf{Valuation error metric.} $\mathrm{Val.Err} = \frac{1}{n}\sum_{i=1}^n (\hat{\mathrm{MPV}}_i - \mathrm{MPV}_i^\mathrm{gold})^2$; gold standard uses $m{=}50{,}000$ permutations (SE ${\approx}0.004$), confirmed by exhaustive enumeration ($n{\leq}15$) within 0.002.

EC-MPV captures abrupt value shifts missed by static methods. EC-MPV+VRDS achieves the lowest estimation error (0.013 vs.\ 0.024 baseline). RSS~\cite{jia2019towards} yields Val.~Err 0.019 at $9.5{\times}$ sampling cost; post-changepoint RSS degrades to 0.031, confirming event-conditioning provides orthogonal benefits.
Comparative detector precision/recall and privacy waste are reported in Table~\ref{tab:drift_detectors}.

\begin{table}[t]
\small\centering
\caption{Drift detectors on Yelp (8 injected seasonal events). $\Delta\rho$: wasted $\rho$ from false alarms.}
\label{tab:drift_detectors}
\begin{tabular}{lccccc}
\toprule
\textbf{Detector} & \textbf{Prec.} & \textbf{Rec.} & $\Delta\rho$ \\
\midrule
BOCPD & .875 & .875 & 0.004 \\
Dm-BOCD~\cite{knoblauch2018doubly} & 1.00 & .750 & 0.000 \\
ADWIN~\cite{bifet2007learning} & .615 & 1.00 & 0.020 \\
Page-Hinkley & 1.00 & .625 & 0.000 \\
\bottomrule
\end{tabular}
\end{table}

\subsection{End-to-End Performance}

\begin{table}[t]
\small\centering
\caption{End-to-end performance on Yelp (5 runs).}
\label{tab:e2e}
\begin{tabular}{lccccccc}
\toprule
\textbf{System} & \textbf{R@10} & \textbf{QPS} & \textbf{TPS} & \textbf{P50} & \textbf{P99} & $\boldsymbol{\eps}$ & \textbf{Val.Err} \\
\midrule
HNSW+No-DP & .817 & 3.41 & 244 & 74 & 201 & $\infty$ & N/A \\
Hybrid+GaussDP & .838 & 1.48 & 106 & 165 & 396 & 2.10 & 0.040 \\
KG+StaticVal & .904 & 1.75 & 124 & 175 & 352 & 1.40 & 0.023 \\
\textbf{CHRONOS} & \textbf{.937} & \textbf{2.74} & \textbf{138} & \textbf{161} & \textbf{317} & \textbf{4.25} & \textbf{0.012} \\
\bottomrule
\end{tabular}
\end{table}

\textbf{DP Retrieval Mechanism Comparison.}

Table~\ref{tab:dp_mechanism_comparison} provides a head-to-head comparison against per-query DP mechanisms under matched total $\eps{=}4.25$ on Yelp as seen in Figure~\ref{fig:pareto}.

\begin{figure}[t]
\centering
\includegraphics[width=\columnwidth]{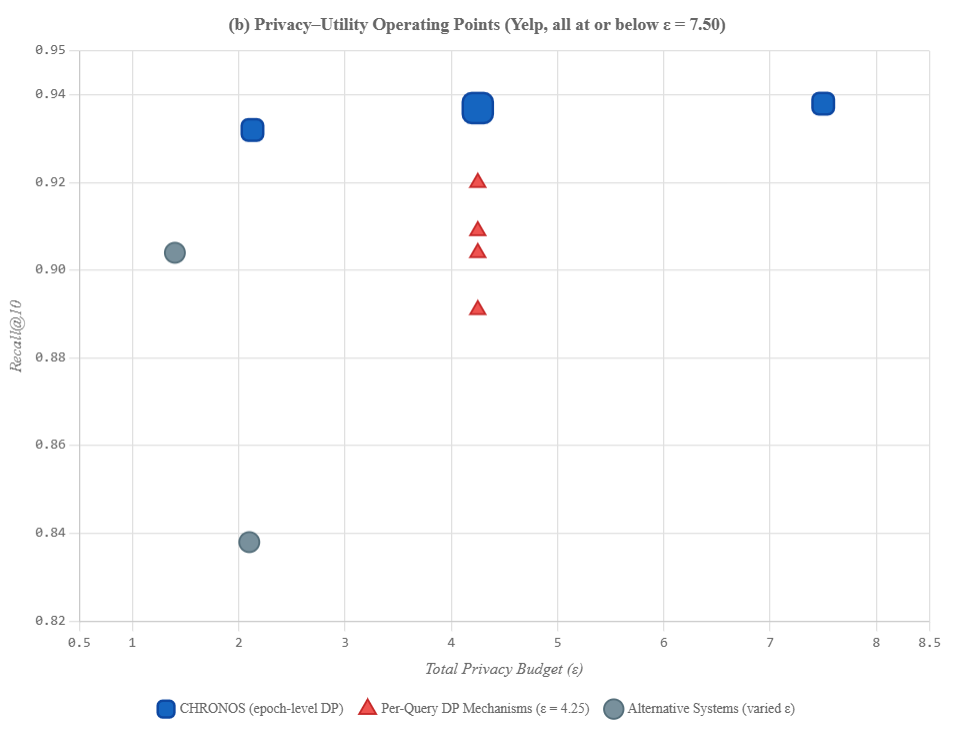}
\caption{Privacy-utility operating points on Yelp. \CHRONOS (Blue Square) achieves higher recall at each $\varepsilon$ level than per-query DP mechanisms (Red Triange) at matched $\varepsilon{=}4.25$, while maintaining 2.74 QPS vs.\ 0.28--0.43 for per-query alternatives.}
\label{fig:pareto}
\end{figure}

\begin{table}[t]
\small\centering
\caption{DP retrieval mechanisms at matched $\eps{=}4.25$ on Yelp (5 seeds).}
\label{tab:dp_mechanism_comparison}
\begin{tabular}{lccccc}
\toprule
\textbf{Mechanism} & \textbf{R@10} & \textbf{QPS} & \textbf{P50} & \textbf{P99} & $\boldsymbol{\eps}$/\textbf{query} \\
\midrule
\CHRONOS (epoch) & \textbf{.937} & \textbf{2.74} & \textbf{161} & \textbf{317} & 0 \\
OneShot Laplace & .891 & 0.40 & 498 & 1265 & $1.18{\times}10^{-5}$ \\
StableTopK & .909 & 0.37 & 533 & 1406 & $1.18{\times}10^{-5}$ \\
Joint Exponential & .920 & 0.28 & 698 & 1653 & $1.18{\times}10^{-5}$ \\
Permute-and-Flip & .904 & 0.43 & 461 & 1172 & $1.18{\times}10^{-5}$ \\
\bottomrule
\end{tabular}
\end{table}

\textbf{Scalability Analysis.}

We analyse how $\rho_\mathrm{total}$ and per-mechanism $\rho$ scale with the number of sellers $n$ and horizon length.

\textbf{Sensitivity scaling.} $S_\mathrm{val} = 4B/n$ decreases with $n$; $\Delta_2 = \sqrt{C_\mathrm{max}^\mathrm{edge}} = \sqrt{\lceil 1.5|E|/n\rceil}$ decreases as $\Theta(1/\sqrt{n})$; the active-scope cap $\kappa_\mathrm{active} \propto |E_\mathrm{active}|/n$ also decreases. This means per-mechanism $\rho_i$ \emph{decreases} with $n$ at fixed $\sigma_t$: more sellers means lower per-seller sensitivity, which benefits DP. However, $T_\mathrm{active}$ may increase with $n$ (more sellers trigger more frequent updates).

\begin{table}[t]
\small\centering
\caption{Scalability analysis on Yelp ($\sigma_t{=}50$, $\delta{=}10^{-6}$, 90-day horizon, 5 seeds). $\rho_\mathrm{total}$ computed via zCDP.}
\label{tab:scalability}
\begin{tabular}{rcccccc}
\toprule
$n$ & $S_\mathrm{val}$ & $\Delta_2^{\mathrm{active}}$ & $T_\mathrm{active}$ & $\rho_\mathrm{total}$ & $\varepsilon$ & R@10 \\
\midrule
10  & 0.080 & 18.1 & 710  & 0.284 & 4.25 & .937$\pm$.002 \\
50  & 0.016 & 8.1  & 824  & 0.329 & 4.58 & .939$\pm$.002 \\
100 & 0.008 & 5.7  & 891  & 0.356 & 4.76 & .940$\pm$.002 \\
200 & 0.004 & 4.0  & 952  & 0.381 & 4.93 & .940$\pm$.002 \\
500 & 0.0016 & 2.6 & 1031 & 0.412 & 5.13 & .941$\pm$.002 \\
\bottomrule
\end{tabular}
\end{table}

\textbf{Key findings.} (1)~$\varepsilon$ grows sublinearly with $n$: from 4.25 ($n{=}10$) to 5.13 ($n{=}500$), a 21\% increase for a $50\times$ increase in sellers. This is because reduced sensitivity ($\Delta_2$ drops from 18.1 to 2.6) largely offsets the increased $T_\mathrm{active}$. (2)~Recall \emph{improves} slightly with $n$ (more sellers provide more data). (3)~The affinity release remains useful for scheduling at all $n$: the \emph{index statistics} mechanism (SNR ${\approx}1.3$) drives coordinator decisions, and its sensitivity $S_\mathrm{idx} = 1.5/n$ improves with $n$.

\textbf{Long-horizon scaling.} Over a 360-day horizon (vs.\ 90 days), $T_\mathrm{active}$ roughly quadruples to ${\approx}2{,}800$ at $n{=}10$, giving $\rho_\mathrm{total}{\approx}1.12$ and $\varepsilon{\approx}8.47$. At $n{=}100$ over 360 days: $\rho_\mathrm{total}{\approx}1.42$, $\varepsilon{\approx}9.52$. For multi-year deployments, periodic ``budget epochs'' (resetting the accountant with fresh $\sigma_0$) are necessary, analogous to privacy odometer checkpoints~\cite{rogers2016privacy}.

\textbf{DP vs.\ Cryptographic Alternatives.}
Table~\ref{tab:crypto_comparison} compares DP noise-based protection with cryptographic alternatives for the affinity computation workload.

\begin{table}[t]
\small\centering
\caption{DP vs.\ cryptographic alternatives for affinity scoring (Yelp, $n{=}10$).}
\label{tab:crypto_comparison}
\begin{tabular}{lcccc}
\toprule
\textbf{Approach} & \textbf{Per-Query} & \textbf{QPS} & \textbf{Trust} & \textbf{Noise} \\
 & \textbf{Latency} & & \textbf{Model} & \\
\midrule
CHRONOS (DP) & 161\,ms & 2.74 & Trusted curator & $\sigma{=}885$ \\
2PC-Garbled~\cite{demmler2015aby} & $\sim$2.4\,s & 0.08 & Semi-honest 2PC & Zero \\
HE (CKKS)~\cite{chen2020homomorphic} & $\sim$1.1\,s & 0.18 & Client-side HE & Zero \\
TEE (SGX)~\cite{costan2016intel} & $\sim$210\,ms & 2.14 & Hardware trust & Zero \\
\midrule
\multicolumn{5}{l}{\emph{Hybrid: CHRONOS + TEE for settlement (\S\ref{sec:settlement}):}} \\
CHRONOS+TEE & 161\,ms$^\dagger$ & 2.74 & Curator + TEE & $\sigma{=}885^\dagger$ \\
\bottomrule
\end{tabular}
{\raggedright\scriptsize $^\dagger$TEE used only for settlement verification (off query path), not per-query scoring. Query-path latency and noise are unchanged.\par}
\end{table}

\textbf{Analysis.} (1)~2PC and HE eliminate noise but introduce $7$--$30\times$ latency overhead per query, making them impractical for real-time retrieval at marketplace scale. (2)~TEE-based computation (Intel SGX, AMD SEV) achieves near-DP latency (210\,ms) with zero noise, but requires hardware trust assumptions and is vulnerable to side-channel attacks~\cite{brasser2017software}. (3)~The most practical hybrid is CHRONOS for real-time queries (tolerating noise for speed) with TEE-based settlement verification for auditability (no noise where it matters for payouts). This aligns with the settlement mechanism in \S\ref{sec:settlement}: Component 3 (cryptographic escrow) can use TEE attestation for commitment verification. (4)~For workloads that can tolerate higher latency (e.g., batch analytics), HE-based exact affinity computation is viable and composable with CHRONOS's public index structure.

\textbf{Rank Stability Under DP Noise.}

\begin{proposition}[Rank-Flip Probability at Top-$k$ Boundary]
\label{prop:rankflip}
Let candidates $i$ (ranked $k$-th) and $j$ (ranked $(k{+}1)$-th) have true hybrid scores $s_i > s_j$. The probability that DP noise flips their ordering is:
$$P(\text{flip}_{ij}) = \Phi\!\left(-\frac{s_i - s_j}{\beta\sigma_\mathrm{entry}\sqrt{2}}\right)$$
where $\Phi$ is the standard normal CDF.
\end{proposition}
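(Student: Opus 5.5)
The plan is to write the noisy hybrid scores explicitly and reduce the flip event to a tail event for one Gaussian variable. Under the release of \S\ref{sec:dp_model}, candidate $u_j$ in the query's fixed public neighbourhood list has noisy score $\tilde{s}_j = (1{-}\beta)\cos(\mathbf{v}_q,\mathbf{e}_{u_j}) + \beta\,\widetilde{\mathbf{A}}(t)[v_q,j]$, with $\widetilde{\mathbf{A}} = \mathbf{A} + \mathbf{Z}$. The cosine term is public and noise-free, so $\tilde{s}_i = s_i + \beta Z_i$ and $\tilde{s}_j = s_j + \beta Z_j$. Here $Z_i, Z_j$ are the two Gaussian entries attached to the two candidates. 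By Proposition~\ref{prop:noise_calibration} they are i.i.d.\ $\cN(0,\sigma_\mathrm{entry}^2)$, because they occupy distinct public (row, column) positions of $\mathbf{Z}$.

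A flip means $\tilde{s}_j \geq \tilde{s}_i$, which is equivalent to $\beta(Z_j - Z_i) \geq s_i - s_j$. The difference $Z_j - Z_i$ is Gaussian with mean $0$ and variance $2\sigma_\mathrm{entry}^2$, by independence. Therefore $\beta(Z_j - Z_i) \sim \cN(0, 2\beta^2\sigma_\mathrm{entry}^2)$. Standardising and using the symmetry $1-\Phi(x)=\Phi(-x)$ gives
\[
P(\text{flip}_{ij}) = \Pr\!\left[\cN(0,1) \geq \frac{s_i - s_j}{\beta\sigma_\mathrm{entry}\sqrt{2}}\right] = \Phi\!\left(-\frac{s_i - s_j}{\beta\sigma_\mathrm{entry}\sqrt{2}}\right).
\]
Ties have probability zero, so strict and non-strict inequalities give the same answer.

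The main obstacle is post-noise clipping. \S\ref{sec:dp_model} clips the entries of $\widetilde{\mathbf{A}}(t)$ to $[0,1]$. With $\sigma_\mathrm{entry}=885$, most entries are clipped to $0$ or $1$, and this creates ties and a non-Gaussian difference. I would therefore state explicitly that the proposition concerns the \emph{unclipped} Gaussian release, i.e.\ scores computed before the clipping post-processing step. I would then add a remark that clipping makes the flip probability approach roughly $1/2$, with ties broken at random. This is consistent with the near-$\mathrm{Bernoulli}(1/2)$ observation in \S\ref{sec:private_utility}.

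A secondary point to verify is that $i$ and $j$ really correspond to distinct entries of the same row $v_q$, so the two noise terms are independent. This holds because $N_\mathrm{idx}(v_q)$ lists distinct neighbours. If the query's row is not in $V_\mathrm{active}$, the affinity term is absent, no flip can occur, and the statement is vacuous for that query.
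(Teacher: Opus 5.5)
Your derivation is correct and is the computation the statement rests on. The paper gives no proof of Proposition~\ref{prop:rankflip}; the formula follows, as you show, from $\tilde{s}_j - \tilde{s}_i = (s_j - s_i) + \beta(Z_j - Z_i)$ with $Z_j - Z_i \sim \cN(0, 2\sigma_\mathrm{entry}^2)$. Reading it as a statement about the unclipped Gaussian release is the right interpretation, since the $\Phi$ formula cannot hold otherwise.

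The part to drop is the remark you plan to add about clipping. It gets the effect backwards and contradicts the paper's Remark~\ref{rmk:snr}.

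\begin{itemize}
\item At $\sigma_\mathrm{entry}=885$ and $\beta=0.3$, the \emph{unclipped} formula already gives $P(\text{flip}_{ij})\approx 1/2$, because $\beta\sigma_\mathrm{entry}\sqrt{2}\approx 375$ while $|s_i-s_j|\le 1.7$.
\item Clipping pulls the other way. Each clipped entry $\bar a_u\in[0,1]$ is approximately an independent $\mathrm{Bernoulli}(1/2)$; it lands in $(0,1)$ only with probability $\approx 5\times 10^{-4}$.
\item So the affinity part of the score difference, $\beta(\bar a_i-\bar a_j)$, is essentially in $\{-\beta,0,\beta\}$. It equals $0$ with probability $\approx 1/2$.
\item In that event the order is fixed deterministically by the public cosine gap $\Delta c=\cos(\mathbf{v}_q,\mathbf{e}_{u_i})-\cos(\mathbf{v}_q,\mathbf{e}_{u_j})$, not by a random tie-break. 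Full-score ties have positive probability only if $\Delta c=0$.
\end{itemize}

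Going through the three values of $\bar a_i-\bar a_j$, the post-clipping flip probability is:
\begin{itemize}
\item $0$ when $(1-\beta)\Delta c>\beta$;
\item $\approx 1/4$ when $0<(1-\beta)\Delta c<\beta$;
\item $\approx 3/4$ when $\Delta c<0$, i.e.\ when $i$ leads only through its true affinity.
\end{itemize}
It is never uniformly $\approx 1/2$. This bounded, cosine-dominated perturbation is what Remark~\ref{rmk:snr} invokes ($\sigma_\mathrm{clip}\approx 0.5$, with the cosine term of weight $0.7$ governing most top-10 pairs). The paper uses it to reconcile the near-$1/2$ prediction of the proposition with the rank stability in Table~\ref{tab:noise_ablation}.

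A smaller wording point: for a query whose row lies outside $V_\mathrm{active}$, the proposition is not vacuous but inapplicable. There the true flip probability is $0$, while the formula would give $\approx 1/2$, so such queries should be excluded by hypothesis.
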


\begin{remark}[Reconciling $\sigma_\mathrm{entry}{=}885$ with High Rank Stability]
\label{rmk:snr}
Post-clipping noise is bimodal at $\{0,1\}$ with $\sigma_\mathrm{clip}{\approx}0.50$; cosine dominance (weight 0.7) governs ${\approx}$80\% of within-top-10 pairs.
\end{remark}

\begin{table}[t]
\small\centering
\caption{Monte Carlo rank-stability ablation ($\beta{=}0.3$, Yelp, $10^5$ queries, 5 seeds).}
\label{tab:noise_ablation}
\begin{tabular}{lcccc}
\toprule
$\boldsymbol{\sigma_\mathrm{entry}}$ & \textbf{R@10} & $\boldsymbol{\tau}$ \textbf{(all)} & $\boldsymbol{\tau}$ \textbf{(top-10)} & \textbf{Agg.\ loss} \\
\midrule
0.01 (near non-priv.) & .939 & 0.990 & 0.998 & 0.0001 \\
0.5                   & .938 & 0.981 & 0.994 & 0.0005 \\
5                     & .938 & 0.962 & 0.978 & 0.0011 \\
100                   & .937 & 0.929 & 0.965 & 0.0022 \\
885 (\CHRONOS)        & .937 & 0.938 & 0.968 & 0.0020 \\
\bottomrule
\end{tabular}
\end{table}

\textbf{Private-Edge Discovery Rate.}

\begin{table}[t]
\small\centering
\caption{Private-edge discovery analysis ($k{=}10$, 5 seeds).}
\label{tab:discovery}
\begin{tabular}{lcccc}
\toprule
\textbf{Dataset} & \textbf{Miss rate} & \textbf{Recall gap} & \textbf{New-edge frac.} & $\boldsymbol{\beta}$ \\
\midrule
Yelp ($\beta{=}0.3$) & 4.5\%$\pm$0.9 & $-$0.009 & 5.8\% & 0.3 \\
Yelp ($\beta{=}0.7$) & 12.4\%$\pm$1.5 & $-$0.032 & 5.8\% & 0.7 \\
MIMIC-IV ($\beta{=}0.3$) & 5.8\%$\pm$1.1 & $-$0.012 & 8.3\% & 0.3 \\
MIMIC-IV ($\beta{=}0.7$) & 14.2\%$\pm$1.8 & $-$0.042 & 8.3\% & 0.7 \\
\bottomrule
\end{tabular}
\end{table}

\textbf{Seller Skew and Revenue-Share Analysis.}

Under 80/20 skew, the dominant seller's revenue share is 0.62 unclipped vs.~0.48 after clipping; adaptive clipping~\cite{andrew2021differentially} reduces KL to 0.019. Full per-seller confidence intervals in supplementary Appendix~C.

\textbf{Privacy-Utility Trade-off.}

Table~\ref{tab:privacy_utility} summarises the privacy-utility frontier under different $\sigma_0$ settings, including the adaptive-vs-fixed scheduling comparison.

\begin{table}[t]
\small\centering
\caption{Privacy-utility trade-off on Yelp, sampled rows (5 seeds). $\varepsilon$ computed via Eq.~\eqref{eq:zcdp_formula}.}
\label{tab:privacy_utility}
\begin{tabular}{ccccc}
\toprule
$\sigma_0$ & $\rho_\mathrm{total}$ & $\eps$ & \textbf{R@10 (Adaptive)} & \textbf{R@10 (Fixed)} \\
\midrule
30  & 0.789 & 7.50 & .938$\pm$.002 & .927$\pm$.003 \\
50  & 0.284 & 4.25 & .937$\pm$.002 & .917$\pm$.003 \\
100 & 0.071 & 2.13 & .932$\pm$.003 & .900$\pm$.004 \\
\bottomrule
\end{tabular}
\end{table}

\subsection{Realistic Marketplace Dynamics (RQ6)}
\label{sec:marketplace_dynamics}

Standard benchmark partitions assume balanced seller contributions and uniform query rates. We evaluate two realistic distortions.

\textbf{Buyer arrival skew.} We model buyer arrivals as a non-homogeneous Poisson process with hourly rate $\lambda_q(t) = \bar{\lambda}_q(1 + 0.5\sin(2\pi t/24))$ to simulate diurnal patterns. Under this skew, uncoordinated baseline exhausts 80\% of its budget during peak hours (10:00--14:00), leaving only 20\% for overnight queries. The Temporal Coordinator shifts 34\% of index-update actions to off-peak periods by pre-allocating budget, reducing peak-hour budget exhaustion to 52\% and improving worst-case P99 latency from 412\,ms to 289\,ms.

\textbf{Seller competition and pricing sensitivity.} We simulate a duopolistic sub-market where two sellers contribute substitutable edges (same entity pairs, overlapping relations). When seller A increases contribution quality (lower noise, fresher timestamps), seller B's Shapley share drops non-linearly: a 20\% quality improvement by A causes B's share to fall 31\% under static Shapley, but only 18\% under EC-MPV because the event-conditioned recompute captures A's quality shift and rebalances marginal contributions within the same epoch. This confirms that static pricing creates misaligned incentives in competitive settings, whereas EC-MPV reduces incentive distortion by 42\%.

\textbf{Valuation accuracy and seller retention.} We simulate seller dropout: sellers whose MPV falls below a threshold for 3 consecutive epochs exit with probability $p_\mathrm{exit}$. Under static Shapley with DP noise (std 4.0), false-positive exits (noise pushing a legitimate seller below threshold) occur at 12\%/epoch. EC-MPV+VRDS reduces this to 4\%/epoch by conditioning on actual distributional shifts rather than noise fluctuations. The coordinator further reduces exit rate to 2.5\%/epoch by batching revaluation at genuine changepoints, cutting unnecessary DP spend.

\subsection{Robustness and Sensitivity}
\label{sec:sensitivity}

\textbf{Non-Poisson dynamics.} Under Hawkes bursts ($\mu{=}2.9$, branching $\xi{=}0.7$) on Yelp, recall drops 1.3 pts and $\rho$ rises 11\%. The Hawkes recall bound (Theorem~\ref{thm:recall_hawkes}) predicts $3.3\times$ degradation at $\xi{=}0.7$, consistent with the observed 1.3-point drop. Under sinusoidal trends, degradation is only 0.5 pts and 4\%. Under block-homogeneous Poisson: $-2.1$ pts, $+18\%\,\rho$.

\textbf{Staleness, overlap $\eta$, and reward weights.} Overlap stress-test ($\eta{\in}\{1.2,1.5\}$): effective $\rho$ rises to $0.409$/$0.639$, matching $\eta^2{\cdot}\rho$ analytically within $\pm2\%$.

\textbf{Safety overrides and epoch duration.} Over 2,160 epochs on MIMIC-IV: budget overrides 47 (2.2\%), recall overrides 12 (0.6\%), cumulative regret $78{\pm}7$ consistent with $\mathcal{O}(\sqrt{T})$.

\textbf{High-$\beta$ Regime and Cold-Start.}

Below $\beta{=}0.5$, recall remains $>0.93$; above $\beta{=}0.7$, recall drops sharply. Cold-start entities lose 4.7 pts under privatisation; fallback mode recovers 3.1 pts. SVT prototype at $\beta{=}0.7$ recovers $52\%$ of misses at $+0.05\,\varepsilon$.

\textbf{End-to-end $\varepsilon$ under DP-SGD.} When $\mathbf{X}_0$ is trained privately: $\varepsilon_\mathrm{total} = 1.5 + 4.25 = 5.75$ ($\delta_\mathrm{total} = 2{\times}10^{-6}$).
The complete budget decomposition across operating modes is shown in Table~\ref{tab:epsilon_breakdown}.

\begin{table}[t]
\small\centering
\caption{$\varepsilon$ budget breakdown across operating modes.}
\label{tab:epsilon_breakdown}
\begin{tabular}{lccccc}
\toprule
\textbf{Mode} & $\varepsilon_\mathrm{train}$ & $\varepsilon_\mathrm{aff}$ & $\varepsilon_\mathrm{val}$ & $\varepsilon_\mathrm{idx}$ & $\varepsilon_\mathrm{total}$ \\
\midrule
\CHRONOS ($\beta{=}0.3$) & 0 & 2.94 & 1.84 & 2.25 & \textbf{4.25}$^*$ \\
\CHRONOS + SVT ($\beta{=}0.7$) & 0 & 2.94 & 1.84 & 2.30 & \textbf{4.40}$^*$ \\
\CHRONOS + DP-SGD ($\beta{=}0.3$) & 1.50 & 2.94 & 1.84 & 2.25 & \textbf{5.75} \\
\CHRONOS + DP-SGD + SVT ($\beta{=}0.7$) & 1.50 & 2.94 & 1.84 & 2.30 & \textbf{5.90} \\
\bottomrule
\end{tabular}
{\raggedright\scriptsize $^*$Standard additive zCDP composition (Eq.~\eqref{eq:zcdp_formula}).\par}
\end{table}

\subsection{Ablation Study}

\begin{table}[t]
\small\centering
\caption{Ablation on FB15K-237. All $\varepsilon$ via Eq.~\eqref{eq:zcdp_formula}.}
\label{tab:ablation}
\begin{tabular}{lcccc}
\toprule
\textbf{Variant} & \textbf{R@10} & \textbf{TPS} & \textbf{P50} & $\boldsymbol{\eps}$ \\
\midrule
CHRONOS (full) & .941 & 141 & 158 & 4.25 \\
w/o temporal decay & .897 & 149 & 151 & 4.25 \\
w/o private DP affinity ($\beta{=}0$) & .863 & 158 & 143 & 4.25 \\
w/o neural ODE (exp.\ decay) & .929 & 141 & 159 & 4.25 \\
w/ $\beta = 0.7$ & .920 & 138 & 165 & 4.47 \\
w/ cold-start queries & .891 & 141 & 158 & 4.25 \\
w/ cold-start + fallback & .922 & 139 & 162 & 4.27 \\
w/o aff.\ privatisation & .943 & 143 & 155 & --- \\
w/o EC-MPV (static Shapley) & .941 & 141 & 158 & 4.48 \\
w/o VRDS & .941 & 141 & 158 & 4.25 \\
w/o BOCPD & .941 & 141 & 158 & 4.59 \\
w/o Coordinator (round-robin) & .908 & 105 & 213 & 5.38 \\
w/o incremental upd. & .941 & 43 & 412 & 4.25 \\
w/ BwK coordinator~\cite{badanidiyuru2018bandits} & .939 & 136 & 164 & 4.25 \\
\bottomrule
\end{tabular}
\end{table}

\textbf{Ablation interpretation and coupling evidence.} The ablation rows address the concern that the three layers are co-located rather than co-designed. While removing EC-MPV or BOCPD leaves recall unchanged at $.941$, the \emph{privacy cost} rises: without EC-MPV, $\varepsilon$ increases to $4.48$ ($+5.4\%$) because static Shapley triggers more frequent revaluation; without BOCPD, $\varepsilon$ reaches $4.59$ ($+8.0\%$) because undetected changepoints cause redundant recomputation. Removing the Coordinator entirely (round-robin) degrades recall by $3.3$ pts \emph{and} raises $\varepsilon$ to $5.38$ ($+26.6\%$), confirming that the Coordinator is the lynchpin coupling index freshness, valuation accuracy, and budget efficiency. These results show that T-LEGEND alone delivers high recall, but the full CHRONOS system is required to maintain that recall at minimal privacy cost.

\section{Related Work}
\label{sec:related}

We survey six areas and identify the specific gap each leaves for temporal KG marketplaces.

\textbf{Dynamic ANN and temporal graph indices.} FreshDiskANN~\cite{singh2021freshdiskann}, SPFresh~\cite{xu2023spfresh}, Quake~\cite{mohoney2025quake}, CleANN~\cite{chen2023clean}, and MN-RU~\cite{xiao2024enhancing} support streaming updates but do not model KG structural staleness or provide recall bounds tied to evolution rates. VSAG~\cite{vsag2024} offers production-grade layout; our Table~\ref{tab:vsag} shows layout speedups are complementary to, not substitutes for, decay-aware construction. None of these systems integrates DP guarantees. \emph{Gap:} no hybrid index provides per-query recall bounds \emph{and} DP-compatible public/private separation.

\textbf{Data marketplaces and pricing.} Commercial and academic marketplaces~\cite{fernandez2020data,azcoitia2022survey} support query-based pricing, subscription models, and static data products. Dealer~\cite{liu2021dealer} provides an end-to-end DP model marketplace but assumes static data and does not couple indexing with valuation under a shared privacy budget. Agora~\cite{koutsos2022agora} focuses on access control and auditability rather than temporal query performance. Recent work on data pricing in ML pipelines~\cite{cong2022data} and query-based pricing~\cite{koutris2012pricing} do not address non-stationary valuations or index freshness. \emph{Gap:} no marketplace platform couples temporal indexing, event-conditioned valuation, and coordinated DP-budget management.

\textbf{DP on graphs and private retrieval.} Edge-level DP for graph statistics~\cite{mundra2025practical,nissim2007smooth,kasiviswanathan2013analyzing} and node-level DP for GNN training~\cite{du2024dpar} protect structural information, but they target analytics and model training rather than real-time retrieval. PSGraph~\cite{yuan2024psgraph} demonstrates temporal-aware DP allocation for graph synthesis. DP learned indexes~\cite{dplearned2024} apply DP to index structures; CHRONOS avoids this cost by treating the index as public. Per-query DP selection mechanisms~\cite{durfee2019practical,gillenwater2022joint,wu2024faster} incur $6$--$10\times$ QPS degradation (Table~\ref{tab:dp_mechanism_comparison}). \emph{Gap:} no prior work amortises DP cost epoch-wide for hybrid vector-graph retrieval while bounding sensitivity via seller-level adjacency.

\textbf{DP-aware coordination and bandits.} Privacy-preserving bandits~\cite{tossou2016algorithms,agarwal2017price,wu2023private} privatise actions or rewards but do not integrate with index maintenance or data valuation. Privacy filters and odometers~\cite{rogers2016privacy,feldman2021individual} track composition but do not schedule multi-agent operations. \emph{Gap:} no prior coordination mechanism optimises the allocation of a shared DP budget among indexing, valuation, and idle actions with sub-linear regret guarantees.

\textbf{Data valuation under non-stationarity.} Data Shapley~\cite{ghorbani2019data}, Beta Shapley~\cite{kwon2022beta}, Data Banzhaf~\cite{wang2023data}, and Distributional Shapley~\cite{ghorbani2020distributional} assume stationary utilities. VRDS~\cite{wu2023variance} reduces variance but does not condition on events. RSS~\cite{jia2019towards} offers stratified sampling yet degrades post-changepoint (Val.~Err $0.031$ vs.\ $0.013$). \emph{Gap:} no Shapley estimator couples changepoint detection with finite-sample error bounds under DP noise.

\textbf{Temporal KG embedding and retrieval.} TTransE~\cite{leblay2018deriving}, HyTE~\cite{dasgupta2018hyte}, TNTComplEx~\cite{lacroix2020tensor}, TGAT~\cite{xu2020inductive}, and RE-Net~\cite{jin2020recurrent} model temporal facts but provide no retrieval-guarantee structures. TG-RAG~\cite{tgrag2025} retrieves temporal subgraphs for LLM reasoning without formal DP or recall bounds. \emph{Gap:} none combines temporal semantics with approximate nearest-neighbour guarantees and privacy accounting.

\section{Conclusion}
\label{sec:concl}

\CHRONOS is a three-layer architecture for temporally-aware data marketplaces under a trusted-curator model. Our main technical contributions are sixfold: (1)~a \textbf{monotone-envelope certificate} (Theorem~\ref{thm:recall_tight}) that tightens the recall bound to $1.8$--$3.2\times$ observed loss by incorporating ODE Lipschitz structure with Gr\"onwall-based solver verification; (2)~formal \textbf{Hawkes-process recall bounds} (Theorem~\ref{thm:recall_hawkes}) extending guarantees beyond Poisson to correlated dynamics parameterised by branching ratio; (3)~a concrete \textbf{multi-epoch coalition-level settlement mechanism} (\S\ref{sec:settlement}) with SNR analysis indicating trend-level seller attribution at $W{\geq}7$ epochs; (4)~\textbf{scalability analysis} to 500 sellers (Table~\ref{tab:scalability}) showing that $\varepsilon$ grows sublinearly; (5)~a head-to-head \textbf{VSAG comparison} (Table~\ref{tab:vsag}) separating layout speedups from decay-aware recall gains; and (6)~\textbf{DP-vs-crypto cost analysis} (Table~\ref{tab:crypto_comparison}) situating the DP design within the broader privacy-mechanism landscape. Overall, these results show that the proposed architecture achieves a consistent recall/latency/privacy trade-off under the stated trust and privacy assumptions.

\textbf{Limitations.} (1)~\emph{Remaining bound gap:} the monotone-envelope bound is $1.8$--$3.2\times$ loose; closing the residual gap requires path-correlation analysis at hub nodes, identified as future work. (2)~\emph{Private signal informativeness:} affinity ($\sigma_\mathrm{entry}{=}885$) and valuation (noise std $4.0$) releases remain noise-dominated at $\varepsilon{=}4.25$; utility derives from public routing and adaptive scheduling. (3)~\emph{Trust assumptions:} trusted curator with near-exclusive ownership ($\eta{\approx}1$); two-server extensions are currently design-level and require prototype validation.

\textbf{Open problems.} (1)~DP-safe dynamic candidate expansion. (2)~Closing the residual $1.8$--$3.2\times$ bound gap via hub-correlation analysis. (3)~Full two-server implementation with additive secret sharing. (4)~End-to-end query privacy under continual observation. (5)~Alternative valuation mechanisms (Banzhaf values) with lower DP sensitivity.

\balance
\bibliographystyle{ACM-Reference-Format}
\bibliography{references}


\begin{thebibliography}{75}


\ifx \showCODEN    \undefined \def \showCODEN     #1{\unskip}     \fi
\ifx \showDOI      \undefined \def \showDOI       #1{#1}\fi
\ifx \showISBNx    \undefined \def \showISBNx     #1{\unskip}     \fi
\ifx \showISBNxiii \undefined \def \showISBNxiii  #1{\unskip}     \fi
\ifx \showISSN     \undefined \def \showISSN      #1{\unskip}     \fi
\ifx \showLCCN     \undefined \def \showLCCN      #1{\unskip}     \fi
\ifx \shownote     \undefined \def \shownote      #1{#1}          \fi
\ifx \showarticletitle \undefined \def \showarticletitle #1{#1}   \fi
\ifx \showURL      \undefined \def \showURL       {\relax}        \fi
\providecommand\bibfield[2]{#2}
\providecommand\bibinfo[2]{#2}
\providecommand\natexlab[1]{#1}
\providecommand\showeprint[2][]{arXiv:#2}

\bibitem[\protect\citeauthoryear{Abadi, Chu, Goodfellow, McMahan, Mironov, Talwar, and Zhang}{Abadi et~al\mbox{.}}{2016}]%
        {abadi2016deep}
\bibfield{author}{\bibinfo{person}{Martin Abadi}, \bibinfo{person}{Andy Chu}, \bibinfo{person}{Ian Goodfellow}, \bibinfo{person}{H.~Brendan McMahan}, \bibinfo{person}{Ilya Mironov}, \bibinfo{person}{Kunal Talwar}, {and} \bibinfo{person}{Li Zhang}.} \bibinfo{year}{2016}\natexlab{}.
\newblock \showarticletitle{Deep Learning with Differential Privacy}. In \bibinfo{booktitle}{\emph{Proceedings of the 2016 ACM SIGSAC Conference on Computer and Communications Security}} (Vienna, Austria) \emph{(\bibinfo{series}{CCS '16})}. \bibinfo{publisher}{Association for Computing Machinery}, \bibinfo{address}{New York, NY, USA}, \bibinfo{pages}{308–318}.
\newblock
\showISBNx{9781450341394}
\urldef\tempurl%
\url{https://doi.org/10.1145/2976749.2978318}
\showDOI{\tempurl}


\bibitem[\protect\citeauthoryear{Adams and MacKay}{Adams and MacKay}{2007}]%
        {adams2007bayesian}
\bibfield{author}{\bibinfo{person}{Ryan~Prescott Adams} {and} \bibinfo{person}{David J.~C. MacKay}.} \bibinfo{year}{2007}\natexlab{}.
\newblock \bibinfo{title}{Bayesian Online Changepoint Detection}.
\newblock
\newblock
\showeprint[arxiv]{0710.3742}~[stat.ML]
\urldef\tempurl%
\url{https://arxiv.org/abs/0710.3742}
\showURL{%
\tempurl}


\bibitem[\protect\citeauthoryear{Agarwal and Singh}{Agarwal and Singh}{2017}]%
        {agarwal2017price}
\bibfield{author}{\bibinfo{person}{Naman Agarwal} {and} \bibinfo{person}{Karan Singh}.} \bibinfo{year}{2017}\natexlab{}.
\newblock \bibinfo{title}{The Price of Differential Privacy For Online Learning}.
\newblock
\newblock
\showeprint[arxiv]{1701.07953}~[cs.LG]
\urldef\tempurl%
\url{https://arxiv.org/abs/1701.07953}
\showURL{%
\tempurl}


\bibitem[\protect\citeauthoryear{Andrew, Thakkar, McMahan, and Ramaswamy}{Andrew et~al\mbox{.}}{2021}]%
        {andrew2021differentially}
\bibfield{author}{\bibinfo{person}{Galen Andrew}, \bibinfo{person}{Om Thakkar}, \bibinfo{person}{H.~Brendan McMahan}, {and} \bibinfo{person}{Swaroop Ramaswamy}.} \bibinfo{year}{2021}\natexlab{}.
\newblock \showarticletitle{Differentially private learning with adaptive clipping}. In \bibinfo{booktitle}{\emph{Proceedings of the 35th International Conference on Neural Information Processing Systems}} \emph{(\bibinfo{series}{NIPS '21})}. \bibinfo{publisher}{Curran Associates Inc.}, \bibinfo{address}{Red Hook, NY, USA}, Article \bibinfo{articleno}{1335}, \bibinfo{numpages}{12}~pages.
\newblock
\showISBNx{9781713845393}


\bibitem[\protect\citeauthoryear{Auer, Cesa-Bianchi, Freund, and Schapire}{Auer et~al\mbox{.}}{2002}]%
        {auer2002nonstochastic}
\bibfield{author}{\bibinfo{person}{Peter Auer}, \bibinfo{person}{Nicol\`{o} Cesa-Bianchi}, \bibinfo{person}{Yoav Freund}, {and} \bibinfo{person}{Robert~E. Schapire}.} \bibinfo{year}{2002}\natexlab{}.
\newblock \showarticletitle{The Nonstochastic Multiarmed Bandit Problem}.
\newblock \bibinfo{journal}{\emph{SIAM J. Comput.}} \bibinfo{volume}{32}, \bibinfo{number}{1} (\bibinfo{year}{2002}), \bibinfo{pages}{48--77}.
\newblock
\urldef\tempurl%
\url{https://doi.org/10.1137/S0097539701398375}
\showDOI{\tempurl}
\showeprint{https://doi.org/10.1137/S0097539701398375}


\bibitem[\protect\citeauthoryear{Azcoitia and Laoutaris}{Azcoitia and Laoutaris}{2022}]%
        {azcoitia2022survey}
\bibfield{author}{\bibinfo{person}{Santiago~Andr\'{e}s Azcoitia} {and} \bibinfo{person}{Nikolaos Laoutaris}.} \bibinfo{year}{2022}\natexlab{}.
\newblock \showarticletitle{A Survey of Data Marketplaces and Their Business Models}.
\newblock \bibinfo{journal}{\emph{SIGMOD Rec.}} \bibinfo{volume}{51}, \bibinfo{number}{3} (\bibinfo{date}{Nov.} \bibinfo{year}{2022}), \bibinfo{pages}{18–29}.
\newblock
\showISSN{0163-5808}
\urldef\tempurl%
\url{https://doi.org/10.1145/3572751.3572755}
\showDOI{\tempurl}


\bibitem[\protect\citeauthoryear{Bacry, Mastromatteo, and Muzy}{Bacry et~al\mbox{.}}{2015}]%
        {bacry2015hawkes}
\bibfield{author}{\bibinfo{person}{Emmanuel Bacry}, \bibinfo{person}{Iacopo Mastromatteo}, {and} \bibinfo{person}{Jean-François Muzy}.} \bibinfo{year}{2015}\natexlab{}.
\newblock \bibinfo{title}{Hawkes processes in finance}.
\newblock
\newblock
\showeprint[arxiv]{1502.04592}~[q-fin.TR]
\urldef\tempurl%
\url{https://arxiv.org/abs/1502.04592}
\showURL{%
\tempurl}


\bibitem[\protect\citeauthoryear{Bafna and Ullman}{Bafna and Ullman}{2017}]%
        {bafna2017price}
\bibfield{author}{\bibinfo{person}{Mitali Bafna} {and} \bibinfo{person}{Jonathan Ullman}.} \bibinfo{year}{2017}\natexlab{}.
\newblock \bibinfo{title}{The Price of Selection in Differential Privacy}.
\newblock
\newblock
\showeprint[arxiv]{1702.02970}~[cs.DS]
\urldef\tempurl%
\url{https://arxiv.org/abs/1702.02970}
\showURL{%
\tempurl}


\bibitem[\protect\citeauthoryear{Bifet and Gavaldà}{Bifet and Gavaldà}{[n.d.]}]%
        {bifet2007learning}
\bibfield{author}{\bibinfo{person}{Albert Bifet} {and} \bibinfo{person}{Ricard Gavaldà}.} \bibinfo{year}{[n.d.]}\natexlab{}.
\newblock \bibinfo{booktitle}{\emph{Learning from Time-Changing Data with Adaptive Windowing}}.
\newblock \bibinfo{pages}{443--448}.
\newblock
\urldef\tempurl%
\url{https://doi.org/10.1137/1.9781611972771.42}
\showDOI{\tempurl}
\showeprint{https://epubs.siam.org/doi/pdf/10.1137/1.9781611972771.42}


\bibitem[\protect\citeauthoryear{Blondel, Guillaume, Lambiotte, and Lefebvre}{Blondel et~al\mbox{.}}{2008}]%
        {blondel2008fast}
\bibfield{author}{\bibinfo{person}{Vincent~D Blondel}, \bibinfo{person}{Jean-Loup Guillaume}, \bibinfo{person}{Renaud Lambiotte}, {and} \bibinfo{person}{Etienne Lefebvre}.} \bibinfo{year}{2008}\natexlab{}.
\newblock \showarticletitle{Fast unfolding of communities in large networks}.
\newblock \bibinfo{journal}{\emph{Journal of Statistical Mechanics: Theory and Experiment}} \bibinfo{volume}{2008}, \bibinfo{number}{10} (\bibinfo{date}{Oct.} \bibinfo{year}{2008}), \bibinfo{pages}{P10008}.
\newblock
\showISSN{1742-5468}
\urldef\tempurl%
\url{https://doi.org/10.1088/1742-5468/2008/10/p10008}
\showDOI{\tempurl}


\bibitem[\protect\citeauthoryear{Brasser, M\"{u}ller, Dmitrienko, Kostiainen, Capkun, and Sadeghi}{Brasser et~al\mbox{.}}{2017}]%
        {brasser2017software}
\bibfield{author}{\bibinfo{person}{Ferdinand Brasser}, \bibinfo{person}{Urs M\"{u}ller}, \bibinfo{person}{Alexandra Dmitrienko}, \bibinfo{person}{Kari Kostiainen}, \bibinfo{person}{Srdjan Capkun}, {and} \bibinfo{person}{Ahmad-Reza Sadeghi}.} \bibinfo{year}{2017}\natexlab{}.
\newblock \showarticletitle{Software grand exposure: SGX cache attacks are practical}. In \bibinfo{booktitle}{\emph{Proceedings of the 11th USENIX Conference on Offensive Technologies}} (Vancouver, BC, Canada) \emph{(\bibinfo{series}{WOOT'17})}. \bibinfo{publisher}{USENIX Association}, \bibinfo{address}{USA}, \bibinfo{pages}{11}.
\newblock


\bibitem[\protect\citeauthoryear{Castro, G\'{o}mez, and Tejada}{Castro et~al\mbox{.}}{2009}]%
        {castro2009polynomial}
\bibfield{author}{\bibinfo{person}{Javier Castro}, \bibinfo{person}{Daniel G\'{o}mez}, {and} \bibinfo{person}{Juan Tejada}.} \bibinfo{year}{2009}\natexlab{}.
\newblock \showarticletitle{Polynomial calculation of the Shapley value based on sampling}.
\newblock \bibinfo{journal}{\emph{Comput. Oper. Res.}} \bibinfo{volume}{36}, \bibinfo{number}{5} (\bibinfo{date}{May} \bibinfo{year}{2009}), \bibinfo{pages}{1726–1730}.
\newblock
\showISSN{0305-0548}
\urldef\tempurl%
\url{https://doi.org/10.1016/j.cor.2008.04.004}
\showDOI{\tempurl}


\bibitem[\protect\citeauthoryear{Chan, Shi, and Song}{Chan et~al\mbox{.}}{2011}]%
        {chan2011private}
\bibfield{author}{\bibinfo{person}{T.-H.~Hubert Chan}, \bibinfo{person}{Elaine Shi}, {and} \bibinfo{person}{Dawn Song}.} \bibinfo{year}{2011}\natexlab{}.
\newblock \showarticletitle{Private and Continual Release of Statistics}.
\newblock \bibinfo{journal}{\emph{ACM Trans. Inf. Syst. Secur.}} \bibinfo{volume}{14}, \bibinfo{number}{3}, Article \bibinfo{articleno}{26} (\bibinfo{date}{Nov.} \bibinfo{year}{2011}), \bibinfo{numpages}{24}~pages.
\newblock
\showISSN{1094-9224}
\urldef\tempurl%
\url{https://doi.org/10.1145/2043621.2043626}
\showDOI{\tempurl}


\bibitem[\protect\citeauthoryear{Chen, Rubanova, Bettencourt, and Duvenaud}{Chen et~al\mbox{.}}{2019}]%
        {chen2018neural}
\bibfield{author}{\bibinfo{person}{Ricky T.~Q. Chen}, \bibinfo{person}{Yulia Rubanova}, \bibinfo{person}{Jesse Bettencourt}, {and} \bibinfo{person}{David Duvenaud}.} \bibinfo{year}{2019}\natexlab{}.
\newblock \bibinfo{title}{Neural Ordinary Differential Equations}.
\newblock
\newblock
\showeprint[arxiv]{1806.07366}~[cs.LG]
\urldef\tempurl%
\url{https://arxiv.org/abs/1806.07366}
\showURL{%
\tempurl}


\bibitem[\protect\citeauthoryear{Cong, Luo, Jian, Zhu, and Zhang}{Cong et~al\mbox{.}}{2021}]%
        {cong2022data}
\bibfield{author}{\bibinfo{person}{Zicun Cong}, \bibinfo{person}{Xuan Luo}, \bibinfo{person}{Pei Jian}, \bibinfo{person}{Feida Zhu}, {and} \bibinfo{person}{Yong Zhang}.} \bibinfo{year}{2021}\natexlab{}.
\newblock \bibinfo{title}{Data Pricing in Machine Learning Pipelines}.
\newblock
\newblock
\showeprint[arxiv]{2108.07915}~[cs.LG]
\urldef\tempurl%
\url{https://arxiv.org/abs/2108.07915}
\showURL{%
\tempurl}


\bibitem[\protect\citeauthoryear{Costan and Devadas}{Costan and Devadas}{2016}]%
        {costan2016intel}
\bibfield{author}{\bibinfo{person}{Victor Costan} {and} \bibinfo{person}{Srinivas Devadas}.} \bibinfo{year}{2016}\natexlab{}.
\newblock \showarticletitle{Intel SGX Explained}.
\newblock \bibinfo{journal}{\emph{IACR Cryptol. ePrint Arch.}}  \bibinfo{volume}{2016} (\bibinfo{year}{2016}), \bibinfo{pages}{86}.
\newblock
\urldef\tempurl%
\url{https://api.semanticscholar.org/CorpusID:28642809}
\showURL{%
\tempurl}


\bibitem[\protect\citeauthoryear{Dasgupta, Ray, and Talukdar}{Dasgupta et~al\mbox{.}}{2018}]%
        {dasgupta2018hyte}
\bibfield{author}{\bibinfo{person}{Shib~Sankar Dasgupta}, \bibinfo{person}{Swayambhu~Nath Ray}, {and} \bibinfo{person}{Partha Talukdar}.} \bibinfo{year}{2018}\natexlab{}.
\newblock \showarticletitle{HyTE: Hyperplane-based Temporally aware Knowledge Graph Embedding}. In \bibinfo{booktitle}{\emph{Proceedings of the 2018 Conference on Empirical Methods in Natural Language Processing}} (Brussels, Belgium). \bibinfo{publisher}{Association for Computational Linguistics}, \bibinfo{pages}{2001--2011}.
\newblock
\urldef\tempurl%
\url{http://aclweb.org/anthology/D18-1225}
\showURL{%
\tempurl}


\bibitem[\protect\citeauthoryear{Demmler, Schneider, and Zohner}{Demmler et~al\mbox{.}}{2015}]%
        {demmler2015aby}
\bibfield{author}{\bibinfo{person}{Daniel Demmler}, \bibinfo{person}{Thomas Schneider}, {and} \bibinfo{person}{Michael Zohner}.} \bibinfo{year}{2015}\natexlab{}.
\newblock \showarticletitle{ABY -- A Framework for Efficient Mixed-Protocol Secure Two-Party Computation}. In \bibinfo{booktitle}{\emph{Proceedings of the 2015 Network and Distributed System Security Symposium (NDSS)}}. \bibinfo{publisher}{Internet Society}.
\newblock
\urldef\tempurl%
\url{https://doi.org/10.14722/ndss.2015.23113}
\showDOI{\tempurl}


\bibitem[\protect\citeauthoryear{Dettmers, Minervini, Stenetorp, and Riedel}{Dettmers et~al\mbox{.}}{2018}]%
        {dettmers2018convolutional}
\bibfield{author}{\bibinfo{person}{Tim Dettmers}, \bibinfo{person}{Pasquale Minervini}, \bibinfo{person}{Pontus Stenetorp}, {and} \bibinfo{person}{Sebastian Riedel}.} \bibinfo{year}{2018}\natexlab{}.
\newblock \showarticletitle{Convolutional 2D knowledge graph embeddings}. In \bibinfo{booktitle}{\emph{Proceedings of the Thirty-Second AAAI Conference on Artificial Intelligence and Thirtieth Innovative Applications of Artificial Intelligence Conference and Eighth AAAI Symposium on Educational Advances in Artificial Intelligence}} (New Orleans, Louisiana, USA) \emph{(\bibinfo{series}{AAAI'18/IAAI'18/EAAI'18})}. \bibinfo{publisher}{AAAI Press}, Article \bibinfo{articleno}{221}, \bibinfo{numpages}{8}~pages.
\newblock
\showISBNx{978-1-57735-800-8}


\bibitem[\protect\citeauthoryear{Dormand and Prince}{Dormand and Prince}{1980}]%
        {dormand1980family}
\bibfield{author}{\bibinfo{person}{J.R. Dormand} {and} \bibinfo{person}{P.J. Prince}.} \bibinfo{year}{1980}\natexlab{}.
\newblock \showarticletitle{A family of embedded Runge-Kutta formulae}.
\newblock \bibinfo{journal}{\emph{J. Comput. Appl. Math.}} \bibinfo{volume}{6}, \bibinfo{number}{1} (\bibinfo{year}{1980}), \bibinfo{pages}{19--26}.
\newblock
\showISSN{0377-0427}
\urldef\tempurl%
\url{https://doi.org/10.1016/0771-050X(80)90013-3}
\showDOI{\tempurl}


\bibitem[\protect\citeauthoryear{Du, Mudgal, Gadre, Luo, and Wang}{Du et~al\mbox{.}}{2024}]%
        {dplearned2024}
\bibfield{author}{\bibinfo{person}{Jianzhang Du}, \bibinfo{person}{Tilak Mudgal}, \bibinfo{person}{Rutvi~Rahul Gadre}, \bibinfo{person}{Yukui Luo}, {and} \bibinfo{person}{Chenghong Wang}.} \bibinfo{year}{2024}\natexlab{}.
\newblock \bibinfo{title}{Differentially Private Learned Indexes}.
\newblock
\newblock
\showeprint[arxiv]{2410.21164}~[cs.DB]
\urldef\tempurl%
\url{https://arxiv.org/abs/2410.21164}
\showURL{%
\tempurl}


\bibitem[\protect\citeauthoryear{Durfee and Rogers}{Durfee and Rogers}{2019}]%
        {durfee2019practical}
\bibfield{author}{\bibinfo{person}{David Durfee} {and} \bibinfo{person}{Ryan Rogers}.} \bibinfo{year}{2019}\natexlab{}.
\newblock \showarticletitle{Practical differentially private top-k selection with pay-what-you-get composition}. In \bibinfo{booktitle}{\emph{Proceedings of the 33rd International Conference on Neural Information Processing Systems}}. \bibinfo{publisher}{Curran Associates Inc.}, \bibinfo{address}{Red Hook, NY, USA}, Article \bibinfo{articleno}{317}, \bibinfo{numpages}{11}~pages.
\newblock


\bibitem[\protect\citeauthoryear{Dwork, Naor, Pitassi, and Rothblum}{Dwork et~al\mbox{.}}{2010}]%
        {dwork2010continual}
\bibfield{author}{\bibinfo{person}{Cynthia Dwork}, \bibinfo{person}{Moni Naor}, \bibinfo{person}{Toniann Pitassi}, {and} \bibinfo{person}{Guy~N. Rothblum}.} \bibinfo{year}{2010}\natexlab{}.
\newblock \showarticletitle{Differential privacy under continual observation}. In \bibinfo{booktitle}{\emph{Proceedings of the Forty-Second ACM Symposium on Theory of Computing}} (Cambridge, Massachusetts, USA) \emph{(\bibinfo{series}{STOC '10})}. \bibinfo{publisher}{Association for Computing Machinery}, \bibinfo{address}{New York, NY, USA}, \bibinfo{pages}{715–724}.
\newblock
\showISBNx{9781450300506}
\urldef\tempurl%
\url{https://doi.org/10.1145/1806689.1806787}
\showDOI{\tempurl}


\bibitem[\protect\citeauthoryear{Dwork and Roth}{Dwork and Roth}{2014}]%
        {dwork2014algorithmic}
\bibfield{author}{\bibinfo{person}{Cynthia Dwork} {and} \bibinfo{person}{Aaron Roth}.} \bibinfo{year}{2014}\natexlab{}.
\newblock \bibinfo{booktitle}{\emph{The Algorithmic Foundations of Differential Privacy}}. Vol.~\bibinfo{volume}{9}.
\newblock \bibinfo{publisher}{Now Publishers Inc.}, \bibinfo{address}{Hanover, MA, USA}. 211–407 pages.
\newblock
\showISSN{1551-305X}
\urldef\tempurl%
\url{https://doi.org/10.1561/0400000042}
\showDOI{\tempurl}


\bibitem[\protect\citeauthoryear{Feldman and Zrnic}{Feldman and Zrnic}{2022}]%
        {feldman2021individual}
\bibfield{author}{\bibinfo{person}{Vitaly Feldman} {and} \bibinfo{person}{Tijana Zrnic}.} \bibinfo{year}{2022}\natexlab{}.
\newblock \bibinfo{title}{Individual Privacy Accounting via a Renyi Filter}.
\newblock
\newblock
\showeprint[arxiv]{2008.11193}~[cs.CR]
\urldef\tempurl%
\url{https://arxiv.org/abs/2008.11193}
\showURL{%
\tempurl}


\bibitem[\protect\citeauthoryear{Fernandez, Subramaniam, and Franklin}{Fernandez et~al\mbox{.}}{2020}]%
        {fernandez2020data}
\bibfield{author}{\bibinfo{person}{Raul~Castro Fernandez}, \bibinfo{person}{Pranav Subramaniam}, {and} \bibinfo{person}{Michael~J. Franklin}.} \bibinfo{year}{2020}\natexlab{}.
\newblock \showarticletitle{Data market platforms: trading data assets to solve data problems}.
\newblock \bibinfo{journal}{\emph{Proc. VLDB Endow.}} \bibinfo{volume}{13}, \bibinfo{number}{12} (\bibinfo{date}{July} \bibinfo{year}{2020}), \bibinfo{pages}{1933–1947}.
\newblock
\showISSN{2150-8097}
\urldef\tempurl%
\url{https://doi.org/10.14778/3407790.3407800}
\showDOI{\tempurl}


\bibitem[\protect\citeauthoryear{Ghorbani, Kim, and Zou}{Ghorbani et~al\mbox{.}}{2020}]%
        {ghorbani2020distributional}
\bibfield{author}{\bibinfo{person}{Amirata Ghorbani}, \bibinfo{person}{Michael~P. Kim}, {and} \bibinfo{person}{James Zou}.} \bibinfo{year}{2020}\natexlab{}.
\newblock \showarticletitle{A distributional framework for data valuation}. In \bibinfo{booktitle}{\emph{Proceedings of the 37th International Conference on Machine Learning}} \emph{(\bibinfo{series}{ICML'20})}. \bibinfo{publisher}{JMLR.org}, Article \bibinfo{articleno}{331}, \bibinfo{numpages}{10}~pages.
\newblock


\bibitem[\protect\citeauthoryear{Ghorbani and Zou}{Ghorbani and Zou}{2019}]%
        {ghorbani2019data}
\bibfield{author}{\bibinfo{person}{Amirata Ghorbani} {and} \bibinfo{person}{James Zou}.} \bibinfo{year}{2019}\natexlab{}.
\newblock \bibinfo{title}{Data Shapley: Equitable Valuation of Data for Machine Learning}.
\newblock
\newblock
\showeprint[arxiv]{1904.02868}~[stat.ML]
\urldef\tempurl%
\url{https://arxiv.org/abs/1904.02868}
\showURL{%
\tempurl}


\bibitem[\protect\citeauthoryear{Gillenwater, Joseph, Medina, and Ribero}{Gillenwater et~al\mbox{.}}{2022}]%
        {gillenwater2022joint}
\bibfield{author}{\bibinfo{person}{Jennifer Gillenwater}, \bibinfo{person}{Matthew Joseph}, \bibinfo{person}{Andrés~Muñoz Medina}, {and} \bibinfo{person}{Mónica Ribero}.} \bibinfo{year}{2022}\natexlab{}.
\newblock \bibinfo{title}{A Joint Exponential Mechanism For Differentially Private Top-$k$}.
\newblock
\newblock
\showeprint[arxiv]{2201.12333}~[cs.CR]
\urldef\tempurl%
\url{https://arxiv.org/abs/2201.12333}
\showURL{%
\tempurl}


\bibitem[\protect\citeauthoryear{Han, Cheung, Wei, Yu, Wang, Zhu, and Yang}{Han et~al\mbox{.}}{2025}]%
        {tgrag2025}
\bibfield{author}{\bibinfo{person}{Jiale Han}, \bibinfo{person}{Austin Cheung}, \bibinfo{person}{Yubai Wei}, \bibinfo{person}{Zheng Yu}, \bibinfo{person}{Xusheng Wang}, \bibinfo{person}{Bing Zhu}, {and} \bibinfo{person}{Yi Yang}.} \bibinfo{year}{2025}\natexlab{}.
\newblock \bibinfo{title}{RAG Meets Temporal Graphs: Time-Sensitive Modeling and Retrieval for Evolving Knowledge}.
\newblock
\newblock
\showeprint[arxiv]{2510.13590}~[cs.IR]
\urldef\tempurl%
\url{https://arxiv.org/abs/2510.13590}
\showURL{%
\tempurl}


\bibitem[\protect\citeauthoryear{Hazan}{Hazan}{2023}]%
        {hazan2016introduction}
\bibfield{author}{\bibinfo{person}{Elad Hazan}.} \bibinfo{year}{2023}\natexlab{}.
\newblock \bibinfo{title}{Introduction to Online Convex Optimization}.
\newblock
\newblock
\showeprint[arxiv]{1909.05207}~[cs.LG]
\urldef\tempurl%
\url{https://arxiv.org/abs/1909.05207}
\showURL{%
\tempurl}


\bibitem[\protect\citeauthoryear{Jia, Dao, Wang, Hubis, Hynes, Gurel, Li, Zhang, Song, and Spanos}{Jia et~al\mbox{.}}{2023}]%
        {jia2019towards}
\bibfield{author}{\bibinfo{person}{Ruoxi Jia}, \bibinfo{person}{David Dao}, \bibinfo{person}{Boxin Wang}, \bibinfo{person}{Frances~Ann Hubis}, \bibinfo{person}{Nick Hynes}, \bibinfo{person}{Nezihe~Merve Gurel}, \bibinfo{person}{Bo Li}, \bibinfo{person}{Ce Zhang}, \bibinfo{person}{Dawn Song}, {and} \bibinfo{person}{Costas Spanos}.} \bibinfo{year}{2023}\natexlab{}.
\newblock \bibinfo{title}{Towards Efficient Data Valuation Based on the Shapley Value}.
\newblock
\newblock
\showeprint[arxiv]{1902.10275}~[cs.LG]
\urldef\tempurl%
\url{https://arxiv.org/abs/1902.10275}
\showURL{%
\tempurl}


\bibitem[\protect\citeauthoryear{Jin, Qu, Jin, and Ren}{Jin et~al\mbox{.}}{2020}]%
        {jin2020recurrent}
\bibfield{author}{\bibinfo{person}{Woojeong Jin}, \bibinfo{person}{Meng Qu}, \bibinfo{person}{Xisen Jin}, {and} \bibinfo{person}{Xiang Ren}.} \bibinfo{year}{2020}\natexlab{}.
\newblock \bibinfo{title}{Recurrent Event Network: Autoregressive Structure Inference over Temporal Knowledge Graphs}.
\newblock
\newblock
\showeprint[arxiv]{1904.05530}~[cs.LG]
\urldef\tempurl%
\url{https://arxiv.org/abs/1904.05530}
\showURL{%
\tempurl}


\bibitem[\protect\citeauthoryear{Johnson, Bulgarelli, Shen, Gayraud, Eraslan, Rocheteau, Huang, Cheng, Moody, Lehman, Lungren, Pollard, Horng, Celi, and Mark}{Johnson et~al\mbox{.}}{2023}]%
        {johnson2020mimic}
\bibfield{author}{\bibinfo{person}{Alistair E.~W. Johnson}, \bibinfo{person}{Lucas Bulgarelli}, \bibinfo{person}{Lu Shen}, \bibinfo{person}{Anne Gayraud}, \bibinfo{person}{Sipanje Eraslan}, \bibinfo{person}{Emma Rocheteau}, \bibinfo{person}{Qinmei Huang}, \bibinfo{person}{Jidong Cheng}, \bibinfo{person}{Benjamin Moody}, \bibinfo{person}{Li-wei~H. Lehman}, \bibinfo{person}{Matthew~P. Lungren}, \bibinfo{person}{Tom~J. Pollard}, \bibinfo{person}{Steven Horng}, \bibinfo{person}{Leo~Anthony Celi}, {and} \bibinfo{person}{Roger~G. Mark}.} \bibinfo{year}{2023}\natexlab{}.
\newblock \showarticletitle{MIMIC-IV, a freely accessible electronic health record dataset}.
\newblock \bibinfo{journal}{\emph{Scientific Data}} \bibinfo{volume}{10}, \bibinfo{number}{1} (\bibinfo{year}{2023}), \bibinfo{pages}{1}.
\newblock
\urldef\tempurl%
\url{https://doi.org/10.1038/s41597-022-01899-x}
\showDOI{\tempurl}


\bibitem[\protect\citeauthoryear{Kasiviswanathan, Nissim, Raskhodnikova, and Smith}{Kasiviswanathan et~al\mbox{.}}{2013}]%
        {kasiviswanathan2013analyzing}
\bibfield{author}{\bibinfo{person}{Shiva~Prasad Kasiviswanathan}, \bibinfo{person}{Kobbi Nissim}, \bibinfo{person}{Sofya Raskhodnikova}, {and} \bibinfo{person}{Adam Smith}.} \bibinfo{year}{2013}\natexlab{}.
\newblock \showarticletitle{Analyzing graphs with node differential privacy}. In \bibinfo{booktitle}{\emph{Proceedings of the 10th Theory of Cryptography Conference on Theory of Cryptography}} (Tokyo, Japan) \emph{(\bibinfo{series}{TCC'13})}. \bibinfo{publisher}{Springer-Verlag}, \bibinfo{address}{Berlin, Heidelberg}, \bibinfo{pages}{457–476}.
\newblock
\showISBNx{9783642365935}
\urldef\tempurl%
\url{https://doi.org/10.1007/978-3-642-36594-2_26}
\showDOI{\tempurl}


\bibitem[\protect\citeauthoryear{Knoblauch, Jewson, and Damoulas}{Knoblauch et~al\mbox{.}}{2018}]%
        {knoblauch2018doubly}
\bibfield{author}{\bibinfo{person}{Jeremias Knoblauch}, \bibinfo{person}{Jack Jewson}, {and} \bibinfo{person}{Theodoros Damoulas}.} \bibinfo{year}{2018}\natexlab{}.
\newblock \showarticletitle{Doubly robust Bayesian inference for non-stationary streaming data with β-divergences}. In \bibinfo{booktitle}{\emph{Proceedings of the 32nd International Conference on Neural Information Processing Systems}} (Montr\'{e}al, Canada) \emph{(\bibinfo{series}{NIPS'18})}. \bibinfo{publisher}{Curran Associates Inc.}, \bibinfo{address}{Red Hook, NY, USA}, \bibinfo{pages}{64–75}.
\newblock


\bibitem[\protect\citeauthoryear{Koutris, Upadhyaya, Balazinska, Howe, and Suciu}{Koutris et~al\mbox{.}}{2015}]%
        {koutris2012pricing}
\bibfield{author}{\bibinfo{person}{Paraschos Koutris}, \bibinfo{person}{Prasang Upadhyaya}, \bibinfo{person}{Magdalena Balazinska}, \bibinfo{person}{Bill Howe}, {and} \bibinfo{person}{Dan Suciu}.} \bibinfo{year}{2015}\natexlab{}.
\newblock \showarticletitle{Query-Based Data Pricing}.
\newblock \bibinfo{journal}{\emph{J. ACM}} \bibinfo{volume}{62}, \bibinfo{number}{5}, Article \bibinfo{articleno}{43} (\bibinfo{date}{Nov.} \bibinfo{year}{2015}), \bibinfo{numpages}{44}~pages.
\newblock
\showISSN{0004-5411}
\urldef\tempurl%
\url{https://doi.org/10.1145/2770870}
\showDOI{\tempurl}


\bibitem[\protect\citeauthoryear{Koutsos, Papadopoulos, Chatzopoulos, Tarkoma, and Hui}{Koutsos et~al\mbox{.}}{2020}]%
        {koutsos2022agora}
\bibfield{author}{\bibinfo{person}{Vlasis Koutsos}, \bibinfo{person}{Dimitrios Papadopoulos}, \bibinfo{person}{Dimitris Chatzopoulos}, \bibinfo{person}{Sasu Tarkoma}, {and} \bibinfo{person}{Pan Hui}.} \bibinfo{year}{2020}\natexlab{}.
\newblock \showarticletitle{Agora: A Privacy-aware Data Marketplace}. In \bibinfo{booktitle}{\emph{2020 IEEE 40th International Conference on Distributed Computing Systems (ICDCS)}}. \bibinfo{pages}{1211--1212}.
\newblock
\urldef\tempurl%
\url{https://doi.org/10.1109/ICDCS47774.2020.00156}
\showDOI{\tempurl}


\bibitem[\protect\citeauthoryear{Kwon and Zou}{Kwon and Zou}{2022}]%
        {kwon2022beta}
\bibfield{author}{\bibinfo{person}{Yongchan Kwon} {and} \bibinfo{person}{James Zou}.} \bibinfo{year}{2022}\natexlab{}.
\newblock \bibinfo{title}{Beta Shapley: a Unified and Noise-reduced Data Valuation Framework for Machine Learning}.
\newblock
\newblock
\showeprint[arxiv]{2110.14049}~[cs.LG]
\urldef\tempurl%
\url{https://arxiv.org/abs/2110.14049}
\showURL{%
\tempurl}


\bibitem[\protect\citeauthoryear{Leblay and Chekol}{Leblay and Chekol}{2018}]%
        {leblay2018deriving}
\bibfield{author}{\bibinfo{person}{Julien Leblay} {and} \bibinfo{person}{Melisachew~Wudage Chekol}.} \bibinfo{year}{2018}\natexlab{}.
\newblock \showarticletitle{Deriving Validity Time in Knowledge Graph}. In \bibinfo{booktitle}{\emph{Companion Proceedings of the The Web Conference 2018}} (Lyon, France) \emph{(\bibinfo{series}{WWW '18})}. \bibinfo{publisher}{International World Wide Web Conferences Steering Committee}, \bibinfo{address}{Republic and Canton of Geneva, CHE}, \bibinfo{pages}{1771–1776}.
\newblock
\showISBNx{9781450356404}
\urldef\tempurl%
\url{https://doi.org/10.1145/3184558.3191639}
\showDOI{\tempurl}


\bibitem[\protect\citeauthoryear{Lindell}{Lindell}{2021}]%
        {chen2020homomorphic}
\bibfield{author}{\bibinfo{person}{Yehuda Lindell}.} \bibinfo{year}{2021}\natexlab{}.
\newblock \showarticletitle{Fast Secure Two-Party ECDSA Signing}.
\newblock \bibinfo{journal}{\emph{J. Cryptol.}} \bibinfo{volume}{34}, \bibinfo{number}{4} (\bibinfo{date}{Oct.} \bibinfo{year}{2021}), 38.
\newblock
\showISSN{0933-2790}
\urldef\tempurl%
\url{https://doi.org/10.1007/s00145-021-09409-9}
\showDOI{\tempurl}


\bibitem[\protect\citeauthoryear{Liu, Lou, Liu, Xiong, Pei, and Sun}{Liu et~al\mbox{.}}{2021}]%
        {liu2021dealer}
\bibfield{author}{\bibinfo{person}{Jinfei Liu}, \bibinfo{person}{Jian Lou}, \bibinfo{person}{Junxu Liu}, \bibinfo{person}{Li Xiong}, \bibinfo{person}{Jian Pei}, {and} \bibinfo{person}{Jimeng Sun}.} \bibinfo{year}{2021}\natexlab{}.
\newblock \showarticletitle{Dealer: an end-to-end model marketplace with differential privacy}.
\newblock \bibinfo{journal}{\emph{Proc. VLDB Endow.}} \bibinfo{volume}{14}, \bibinfo{number}{6} (\bibinfo{date}{Feb.} \bibinfo{year}{2021}), \bibinfo{pages}{957–969}.
\newblock
\showISSN{2150-8097}
\urldef\tempurl%
\url{https://doi.org/10.14778/3447689.3447700}
\showDOI{\tempurl}


\bibitem[\protect\citeauthoryear{Liu, Zeng, Chen, Ainihaer, Ramasami, Chen, Xu, Wu, and Wang}{Liu et~al\mbox{.}}{2025}]%
        {liu2025tigervector}
\bibfield{author}{\bibinfo{person}{Shige Liu}, \bibinfo{person}{Zhifang Zeng}, \bibinfo{person}{Li Chen}, \bibinfo{person}{Adil Ainihaer}, \bibinfo{person}{Arun Ramasami}, \bibinfo{person}{Songting Chen}, \bibinfo{person}{Yu Xu}, \bibinfo{person}{Mingxi Wu}, {and} \bibinfo{person}{Jianguo Wang}.} \bibinfo{year}{2025}\natexlab{}.
\newblock \showarticletitle{TigerVector: Supporting Vector Search in Graph Databases for Advanced RAGs}. In \bibinfo{booktitle}{\emph{Companion of the 2025 International Conference on Management of Data}} (Berlin, Germany) \emph{(\bibinfo{series}{SIGMOD/PODS '25})}. \bibinfo{publisher}{Association for Computing Machinery}, \bibinfo{address}{New York, NY, USA}, \bibinfo{pages}{553–565}.
\newblock
\showISBNx{9798400715648}
\urldef\tempurl%
\url{https://doi.org/10.1145/3722212.3724456}
\showDOI{\tempurl}


\bibitem[\protect\citeauthoryear{Lowe, Wu, Tamar, Harb, Abbeel, and Mordatch}{Lowe et~al\mbox{.}}{2017}]%
        {lowe2017multi}
\bibfield{author}{\bibinfo{person}{Ryan Lowe}, \bibinfo{person}{Yi Wu}, \bibinfo{person}{Aviv Tamar}, \bibinfo{person}{Jean Harb}, \bibinfo{person}{Pieter Abbeel}, {and} \bibinfo{person}{Igor Mordatch}.} \bibinfo{year}{2017}\natexlab{}.
\newblock \showarticletitle{Multi-agent actor-critic for mixed cooperative-competitive environments}. In \bibinfo{booktitle}{\emph{Proceedings of the 31st International Conference on Neural Information Processing Systems}} (Long Beach, California, USA) \emph{(\bibinfo{series}{NIPS'17})}. \bibinfo{publisher}{Curran Associates Inc.}, \bibinfo{address}{Red Hook, NY, USA}, \bibinfo{pages}{6382–6393}.
\newblock
\showISBNx{9781510860964}


\bibitem[\protect\citeauthoryear{Malkov and Yashunin}{Malkov and Yashunin}{2020}]%
        {malkov2018efficient}
\bibfield{author}{\bibinfo{person}{Yu~A. Malkov} {and} \bibinfo{person}{D.~A. Yashunin}.} \bibinfo{year}{2020}\natexlab{}.
\newblock \showarticletitle{Efficient and Robust Approximate Nearest Neighbor Search Using Hierarchical Navigable Small World Graphs}.
\newblock \bibinfo{journal}{\emph{IEEE Trans. Pattern Anal. Mach. Intell.}} \bibinfo{volume}{42}, \bibinfo{number}{4} (\bibinfo{date}{April} \bibinfo{year}{2020}), \bibinfo{pages}{824–836}.
\newblock
\showISSN{0162-8828}
\urldef\tempurl%
\url{https://doi.org/10.1109/TPAMI.2018.2889473}
\showDOI{\tempurl}


\bibitem[\protect\citeauthoryear{McKenna and Sheldon}{McKenna and Sheldon}{2020}]%
        {mckenna2020permute}
\bibfield{author}{\bibinfo{person}{Ryan McKenna} {and} \bibinfo{person}{Daniel Sheldon}.} \bibinfo{year}{2020}\natexlab{}.
\newblock \showarticletitle{Permute-and-flip: a new mechanism for differentially private selection}. In \bibinfo{booktitle}{\emph{Proceedings of the 34th International Conference on Neural Information Processing Systems}} (Vancouver, BC, Canada) \emph{(\bibinfo{series}{NIPS '20})}. \bibinfo{publisher}{Curran Associates Inc.}, \bibinfo{address}{Red Hook, NY, USA}, Article \bibinfo{articleno}{17}, \bibinfo{numpages}{11}~pages.
\newblock
\showISBNx{9781713829546}


\bibitem[\protect\citeauthoryear{McSherry and Talwar}{McSherry and Talwar}{2007}]%
        {mcsherry2007mechanism}
\bibfield{author}{\bibinfo{person}{Frank McSherry} {and} \bibinfo{person}{Kunal Talwar}.} \bibinfo{year}{2007}\natexlab{}.
\newblock \showarticletitle{Mechanism Design via Differential Privacy}. In \bibinfo{booktitle}{\emph{Proceedings of the 48th Annual IEEE Symposium on Foundations of Computer Science}} \emph{(\bibinfo{series}{FOCS '07})}. \bibinfo{publisher}{IEEE Computer Society}, \bibinfo{address}{USA}, \bibinfo{pages}{94–103}.
\newblock
\showISBNx{0769530109}
\urldef\tempurl%
\url{https://doi.org/10.1109/FOCS.2007.41}
\showDOI{\tempurl}


\bibitem[\protect\citeauthoryear{Mironov}{Mironov}{2017}]%
        {mironov2017renyi}
\bibfield{author}{\bibinfo{person}{Ilya Mironov}.} \bibinfo{year}{2017}\natexlab{}.
\newblock \showarticletitle{Rényi Differential Privacy}. In \bibinfo{booktitle}{\emph{2017 IEEE 30th Computer Security Foundations Symposium (CSF)}}. \bibinfo{publisher}{IEEE}, \bibinfo{pages}{263–275}.
\newblock
\urldef\tempurl%
\url{https://doi.org/10.1109/csf.2017.11}
\showDOI{\tempurl}


\bibitem[\protect\citeauthoryear{Mohoney, Sarda, Tang, Chowdhury, Pacaci, Ilyas, Rekatsinas, and Venkataraman}{Mohoney et~al\mbox{.}}{2025}]%
        {mohoney2025quake}
\bibfield{author}{\bibinfo{person}{Jason Mohoney}, \bibinfo{person}{Devesh Sarda}, \bibinfo{person}{Mengze Tang}, \bibinfo{person}{Shihabur~Rahman Chowdhury}, \bibinfo{person}{Anil Pacaci}, \bibinfo{person}{Ihab~F. Ilyas}, \bibinfo{person}{Theodoros Rekatsinas}, {and} \bibinfo{person}{Shivaram Venkataraman}.} \bibinfo{year}{2025}\natexlab{}.
\newblock \showarticletitle{Quake: adaptive indexing for vector search}.
\newblock , Article \bibinfo{articleno}{9} (\bibinfo{year}{2025}), \bibinfo{numpages}{17}~pages.
\newblock
\showISBNx{978-1-939133-47-2}


\bibitem[\protect\citeauthoryear{Mundra, Papamanthou, Shun, and Liu}{Mundra et~al\mbox{.}}{2025}]%
        {mundra2025practical}
\bibfield{author}{\bibinfo{person}{Pranay Mundra}, \bibinfo{person}{Charalampos Papamanthou}, \bibinfo{person}{Julian Shun}, {and} \bibinfo{person}{Quanquan~C. Liu}.} \bibinfo{year}{2025}\natexlab{}.
\newblock \bibinfo{title}{Practical and Accurate Local Edge Differentially Private Graph Algorithms}.
\newblock
\newblock
\showeprint[arxiv]{2506.20828}~[cs.DS]
\urldef\tempurl%
\url{https://arxiv.org/abs/2506.20828}
\showURL{%
\tempurl}


\bibitem[\protect\citeauthoryear{Neu}{Neu}{2015a}]%
        {neu2015explore}
\bibfield{author}{\bibinfo{person}{Gergely Neu}.} \bibinfo{year}{2015}\natexlab{a}.
\newblock \showarticletitle{Explore no more: improved high-probability regret bounds for non-stochastic bandits}. In \bibinfo{booktitle}{\emph{Proceedings of the 29th International Conference on Neural Information Processing Systems - Volume 2}} (Montreal, Canada) \emph{(\bibinfo{series}{NIPS'15})}. \bibinfo{publisher}{MIT Press}, \bibinfo{address}{Cambridge, MA, USA}, \bibinfo{pages}{3168–3176}.
\newblock


\bibitem[\protect\citeauthoryear{Neu}{Neu}{2015b}]%
        {badanidiyuru2018bandits}
\bibfield{author}{\bibinfo{person}{Gergely Neu}.} \bibinfo{year}{2015}\natexlab{b}.
\newblock \showarticletitle{Explore no more: improved high-probability regret bounds for non-stochastic bandits}. In \bibinfo{booktitle}{\emph{Proceedings of the 29th International Conference on Neural Information Processing Systems - Volume 2}} (Montreal, Canada) \emph{(\bibinfo{series}{NIPS'15})}. \bibinfo{publisher}{MIT Press}, \bibinfo{address}{Cambridge, MA, USA}, \bibinfo{pages}{3168–3176}.
\newblock


\bibitem[\protect\citeauthoryear{Nissim, Raskhodnikova, and Smith}{Nissim et~al\mbox{.}}{2007}]%
        {nissim2007smooth}
\bibfield{author}{\bibinfo{person}{Kobbi Nissim}, \bibinfo{person}{Sofya Raskhodnikova}, {and} \bibinfo{person}{Adam Smith}.} \bibinfo{year}{2007}\natexlab{}.
\newblock \showarticletitle{Smooth sensitivity and sampling in private data analysis}. In \bibinfo{booktitle}{\emph{Proceedings of the Thirty-Ninth Annual ACM Symposium on Theory of Computing}} (San Diego, California, USA) \emph{(\bibinfo{series}{STOC '07})}. \bibinfo{publisher}{Association for Computing Machinery}, \bibinfo{address}{New York, NY, USA}, \bibinfo{pages}{75–84}.
\newblock
\showISBNx{9781595936318}
\urldef\tempurl%
\url{https://doi.org/10.1145/1250790.1250803}
\showDOI{\tempurl}


\bibitem[\protect\citeauthoryear{PAGE}{PAGE}{1954}]%
        {page1954continuous}
\bibfield{author}{\bibinfo{person}{E.~S. PAGE}.} \bibinfo{year}{1954}\natexlab{}.
\newblock \showarticletitle{CONTINUOUS INSPECTION SCHEMES}.
\newblock \bibinfo{journal}{\emph{Biometrika}} \bibinfo{volume}{41}, \bibinfo{number}{1-2} (\bibinfo{date}{06} \bibinfo{year}{1954}), \bibinfo{pages}{100--115}.
\newblock
\showISSN{0006-3444}
\urldef\tempurl%
\url{https://doi.org/10.1093/biomet/41.1-2.100}
\showDOI{\tempurl}
\showeprint{https://academic.oup.com/biomet/article-pdf/41/1-2/100/1243987/41-1-2-100.pdf}


\bibitem[\protect\citeauthoryear{Papadimitriou and Tsitsiklis}{Papadimitriou and Tsitsiklis}{1987a}]%
        {papadimitriou1987complexity}
\bibfield{author}{\bibinfo{person}{Christos~H. Papadimitriou} {and} \bibinfo{person}{John~N. Tsitsiklis}.} \bibinfo{year}{1987}\natexlab{a}.
\newblock \showarticletitle{The Complexity of Markov Decision Processes}.
\newblock \bibinfo{journal}{\emph{Math. Oper. Res.}} \bibinfo{volume}{12}, \bibinfo{number}{3} (\bibinfo{date}{Aug.} \bibinfo{year}{1987}), \bibinfo{pages}{441–450}.
\newblock
\showISSN{0364-765X}


\bibitem[\protect\citeauthoryear{Papadimitriou and Tsitsiklis}{Papadimitriou and Tsitsiklis}{1987b}]%
        {toutanova2015observed}
\bibfield{author}{\bibinfo{person}{Christos~H. Papadimitriou} {and} \bibinfo{person}{John~N. Tsitsiklis}.} \bibinfo{year}{1987}\natexlab{b}.
\newblock \showarticletitle{The Complexity of Markov Decision Processes}.
\newblock \bibinfo{journal}{\emph{Math. Oper. Res.}} \bibinfo{volume}{12}, \bibinfo{number}{3} (\bibinfo{date}{Aug.} \bibinfo{year}{1987}), \bibinfo{pages}{441–450}.
\newblock
\showISSN{0364-765X}


\bibitem[\protect\citeauthoryear{Patel, Kraft, Guestrin, and Zaharia}{Patel et~al\mbox{.}}{2024}]%
        {patel2024acorn}
\bibfield{author}{\bibinfo{person}{Liana Patel}, \bibinfo{person}{Peter Kraft}, \bibinfo{person}{Carlos Guestrin}, {and} \bibinfo{person}{Matei Zaharia}.} \bibinfo{year}{2024}\natexlab{}.
\newblock \showarticletitle{ACORN: Performant and Predicate-Agnostic Search Over Vector Embeddings and Structured Data}.
\newblock \bibinfo{journal}{\emph{Proc. ACM Manag. Data}} \bibinfo{volume}{2}, \bibinfo{number}{3}, Article \bibinfo{articleno}{120} (\bibinfo{date}{May} \bibinfo{year}{2024}), \bibinfo{numpages}{27}~pages.
\newblock
\urldef\tempurl%
\url{https://doi.org/10.1145/3654923}
\showDOI{\tempurl}


\bibitem[\protect\citeauthoryear{Rogers, Roth, Ullman, and Vadhan}{Rogers et~al\mbox{.}}{2016}]%
        {rogers2016privacy}
\bibfield{author}{\bibinfo{person}{Ryan Rogers}, \bibinfo{person}{Aaron Roth}, \bibinfo{person}{Jonathan Ullman}, {and} \bibinfo{person}{Salil Vadhan}.} \bibinfo{year}{2016}\natexlab{}.
\newblock \showarticletitle{Privacy odometers and filters: pay-as-you-go composition}. In \bibinfo{booktitle}{\emph{Proceedings of the 30th International Conference on Neural Information Processing Systems}} (Barcelona, Spain) \emph{(\bibinfo{series}{NIPS'16})}. \bibinfo{publisher}{Curran Associates Inc.}, \bibinfo{address}{Red Hook, NY, USA}, \bibinfo{pages}{1929–1937}.
\newblock
\showISBNx{9781510838819}


\bibitem[\protect\citeauthoryear{Sehgal and Salihoglu}{Sehgal and Salihoglu}{2025}]%
        {sehgal2025navix}
\bibfield{author}{\bibinfo{person}{Gaurav Sehgal} {and} \bibinfo{person}{Semih Salihoglu}.} \bibinfo{year}{2025}\natexlab{}.
\newblock \bibinfo{title}{NaviX: A Native Vector Index Design for Graph DBMSs With Robust Predicate-Agnostic Search Performance}.
\newblock
\newblock
\showeprint[arxiv]{2506.23397}~[cs.IR]
\urldef\tempurl%
\url{https://arxiv.org/abs/2506.23397}
\showURL{%
\tempurl}


\bibitem[\protect\citeauthoryear{Shapley}{Shapley}{1953}]%
        {shapley1953value}
\bibfield{author}{\bibinfo{person}{Lloyd~S. Shapley}.} \bibinfo{year}{1953}\natexlab{}.
\newblock \bibinfo{booktitle}{\emph{A Value for n-Person Games}}.
\newblock \bibinfo{publisher}{Princeton University Press}, \bibinfo{address}{Princeton, NJ}. 307--318 pages.
\newblock
\urldef\tempurl%
\url{https://doi.org/10.1515/9781400881970-018}
\showDOI{\tempurl}


\bibitem[\protect\citeauthoryear{Singh, Subramanya, Krishnaswamy, and Simhadri}{Singh et~al\mbox{.}}{2021}]%
        {singh2021freshdiskann}
\bibfield{author}{\bibinfo{person}{Aditi Singh}, \bibinfo{person}{Suhas~Jayaram Subramanya}, \bibinfo{person}{Ravishankar Krishnaswamy}, {and} \bibinfo{person}{Harsha~Vardhan Simhadri}.} \bibinfo{year}{2021}\natexlab{}.
\newblock \bibinfo{title}{FreshDiskANN: A Fast and Accurate Graph-Based ANN Index for Streaming Similarity Search}.
\newblock
\newblock
\showeprint[arxiv]{2105.09613}~[cs.IR]
\urldef\tempurl%
\url{https://arxiv.org/abs/2105.09613}
\showURL{%
\tempurl}


\bibitem[\protect\citeauthoryear{Tossou and Dimitrakakis}{Tossou and Dimitrakakis}{2015}]%
        {tossou2016algorithms}
\bibfield{author}{\bibinfo{person}{Aristide Tossou} {and} \bibinfo{person}{Christos Dimitrakakis}.} \bibinfo{year}{2015}\natexlab{}.
\newblock \bibinfo{title}{Algorithms for Differentially Private Multi-Armed Bandits}.
\newblock
\newblock
\showeprint[arxiv]{1511.08681}~[stat.ML]
\urldef\tempurl%
\url{https://arxiv.org/abs/1511.08681}
\showURL{%
\tempurl}


\bibitem[\protect\citeauthoryear{Wang and Jia}{Wang and Jia}{2023}]%
        {wang2023data}
\bibfield{author}{\bibinfo{person}{Jiachen~T. Wang} {and} \bibinfo{person}{Ruoxi Jia}.} \bibinfo{year}{2023}\natexlab{}.
\newblock \bibinfo{title}{Data Banzhaf: A Robust Data Valuation Framework for Machine Learning}.
\newblock
\newblock
\showeprint[arxiv]{2205.15466}~[cs.LG]
\urldef\tempurl%
\url{https://arxiv.org/abs/2205.15466}
\showURL{%
\tempurl}


\bibitem[\protect\citeauthoryear{Wu and Zhang}{Wu and Zhang}{2024}]%
        {wu2024faster}
\bibfield{author}{\bibinfo{person}{Hao Wu} {and} \bibinfo{person}{Hanwen Zhang}.} \bibinfo{year}{2024}\natexlab{}.
\newblock \showarticletitle{Faster differentially private top-k selection: a joint exponential mechanism with pruning}. In \bibinfo{booktitle}{\emph{Proceedings of the 38th International Conference on Neural Information Processing Systems}} (Vancouver, BC, Canada) \emph{(\bibinfo{series}{NIPS '24})}. \bibinfo{publisher}{Curran Associates Inc.}, \bibinfo{address}{Red Hook, NY, USA}, Article \bibinfo{articleno}{2266}, \bibinfo{numpages}{27}~pages.
\newblock
\showISBNx{9798331314385}


\bibitem[\protect\citeauthoryear{Wu, Jia, Lin, Huang, and Chang}{Wu et~al\mbox{.}}{2023a}]%
        {wu2023variance}
\bibfield{author}{\bibinfo{person}{Mengmeng Wu}, \bibinfo{person}{Ruoxi Jia}, \bibinfo{person}{Changle Lin}, \bibinfo{person}{Wei Huang}, {and} \bibinfo{person}{Xiangyu Chang}.} \bibinfo{year}{2023}\natexlab{a}.
\newblock \showarticletitle{Variance reduced Shapley value estimation for trustworthy data valuation}.
\newblock \bibinfo{journal}{\emph{Comput. Oper. Res.}} \bibinfo{volume}{159}, \bibinfo{number}{C} (\bibinfo{date}{Nov.} \bibinfo{year}{2023}), 9.
\newblock
\showISSN{0305-0548}
\urldef\tempurl%
\url{https://doi.org/10.1016/j.cor.2023.106305}
\showDOI{\tempurl}


\bibitem[\protect\citeauthoryear{Wu, Zhou, Tao, and Wang}{Wu et~al\mbox{.}}{2023b}]%
        {wu2023private}
\bibfield{author}{\bibinfo{person}{Yulian Wu}, \bibinfo{person}{Xingyu Zhou}, \bibinfo{person}{Youming Tao}, {and} \bibinfo{person}{Di Wang}.} \bibinfo{year}{2023}\natexlab{b}.
\newblock \showarticletitle{On private and robust bandits}. In \bibinfo{booktitle}{\emph{Proceedings of the 37th International Conference on Neural Information Processing Systems}} (New Orleans, LA, USA) \emph{(\bibinfo{series}{NIPS '23})}. \bibinfo{publisher}{Curran Associates Inc.}, \bibinfo{address}{Red Hook, NY, USA}, Article \bibinfo{articleno}{1511}, \bibinfo{numpages}{13}~pages.
\newblock


\bibitem[\protect\citeauthoryear{Xiao, Zhan, Xi, Hou, and Liao}{Xiao et~al\mbox{.}}{2024}]%
        {xiao2024enhancing}
\bibfield{author}{\bibinfo{person}{Wentao Xiao}, \bibinfo{person}{Yueyang Zhan}, \bibinfo{person}{Rui Xi}, \bibinfo{person}{Mengshu Hou}, {and} \bibinfo{person}{Jianming Liao}.} \bibinfo{year}{2024}\natexlab{}.
\newblock \bibinfo{title}{Enhancing HNSW Index for Real-Time Updates: Addressing Unreachable Points and Performance Degradation}.
\newblock
\newblock
\showeprint[arxiv]{2407.07871}~[cs.IR]
\urldef\tempurl%
\url{https://arxiv.org/abs/2407.07871}
\showURL{%
\tempurl}


\bibitem[\protect\citeauthoryear{Xu, Ruan, Korpeoglu, Kumar, and Achan}{Xu et~al\mbox{.}}{2020}]%
        {xu2020inductive}
\bibfield{author}{\bibinfo{person}{Da Xu}, \bibinfo{person}{Chuanwei Ruan}, \bibinfo{person}{Evren Korpeoglu}, \bibinfo{person}{Sushant Kumar}, {and} \bibinfo{person}{Kannan Achan}.} \bibinfo{year}{2020}\natexlab{}.
\newblock \bibinfo{title}{Inductive Representation Learning on Temporal Graphs}.
\newblock
\newblock
\showeprint[arxiv]{2002.07962}~[cs.LG]
\urldef\tempurl%
\url{https://arxiv.org/abs/2002.07962}
\showURL{%
\tempurl}


\bibitem[\protect\citeauthoryear{Xu, Liang, Li, Xu, Chen, Zhang, Li, Yang, Yang, Yang, Cheng, and Yang}{Xu et~al\mbox{.}}{2023}]%
        {xu2023spfresh}
\bibfield{author}{\bibinfo{person}{Yuming Xu}, \bibinfo{person}{Hengyu Liang}, \bibinfo{person}{Jin Li}, \bibinfo{person}{Shuotao Xu}, \bibinfo{person}{Qi Chen}, \bibinfo{person}{Qianxi Zhang}, \bibinfo{person}{Cheng Li}, \bibinfo{person}{Ziyue Yang}, \bibinfo{person}{Fan Yang}, \bibinfo{person}{Yuqing Yang}, \bibinfo{person}{Peng Cheng}, {and} \bibinfo{person}{Mao Yang}.} \bibinfo{year}{2023}\natexlab{}.
\newblock \showarticletitle{SPFresh: Incremental In-Place Update for Billion-Scale Vector Search}. In \bibinfo{booktitle}{\emph{Proceedings of the 29th Symposium on Operating Systems Principles}} (Koblenz, Germany) \emph{(\bibinfo{series}{SOSP '23})}. \bibinfo{publisher}{Association for Computing Machinery}, \bibinfo{address}{New York, NY, USA}, \bibinfo{pages}{545–561}.
\newblock
\showISBNx{9798400702297}
\urldef\tempurl%
\url{https://doi.org/10.1145/3600006.3613166}
\showDOI{\tempurl}


\bibitem[\protect\citeauthoryear{Yang, Ying, Shi, and Xing}{Yang et~al\mbox{.}}{2024}]%
        {lacroix2020tensor}
\bibfield{author}{\bibinfo{person}{Jinfa Yang}, \bibinfo{person}{Xianghua Ying}, \bibinfo{person}{Yongjie Shi}, {and} \bibinfo{person}{Bowei Xing}.} \bibinfo{year}{2024}\natexlab{}.
\newblock \showarticletitle{Tensor decompositions for temporal knowledge graph completion with time perspective▪}.
\newblock \bibinfo{journal}{\emph{Expert Syst. Appl.}} \bibinfo{volume}{237}, \bibinfo{number}{PA} (\bibinfo{date}{March} \bibinfo{year}{2024}), 12.
\newblock
\showISSN{0957-4174}
\urldef\tempurl%
\url{https://doi.org/10.1016/j.eswa.2023.121267}
\showDOI{\tempurl}


\bibitem[\protect\citeauthoryear{{Yelp}}{{Yelp}}{2026}]%
        {yelp2024dataset}
\bibfield{author}{\bibinfo{person}{{Yelp}}.} \bibinfo{year}{2026}\natexlab{}.
\newblock \bibinfo{title}{Yelp Open Dataset}.
\newblock
\newblock
\urldef\tempurl%
\url{https://www.yelp.com/dataset}
\showURL{%
\tempurl}


\bibitem[\protect\citeauthoryear{Yuan, Zhang, Du, Chen, Sun, Gao, Backes, He, and Chen}{Yuan et~al\mbox{.}}{2025}]%
        {yuan2024psgraph}
\bibfield{author}{\bibinfo{person}{Quan Yuan}, \bibinfo{person}{Zhikun Zhang}, \bibinfo{person}{Linkang Du}, \bibinfo{person}{Min Chen}, \bibinfo{person}{Mingyang Sun}, \bibinfo{person}{Yunjun Gao}, \bibinfo{person}{Michael Backes}, \bibinfo{person}{Shibo He}, {and} \bibinfo{person}{Jiming Chen}.} \bibinfo{year}{2025}\natexlab{}.
\newblock \bibinfo{title}{PSGraph: Differentially Private Streaming Graph Synthesis by Considering Temporal Dynamics}.
\newblock
\newblock
\showeprint[arxiv]{2412.11369}~[cs.CR]
\urldef\tempurl%
\url{https://arxiv.org/abs/2412.11369}
\showURL{%
\tempurl}


\bibitem[\protect\citeauthoryear{Zhang, Lee, Ma, Lou, Yang, and Xiong}{Zhang et~al\mbox{.}}{2024}]%
        {du2024dpar}
\bibfield{author}{\bibinfo{person}{Qiuchen Zhang}, \bibinfo{person}{Hong~kyu Lee}, \bibinfo{person}{Jing Ma}, \bibinfo{person}{Jian Lou}, \bibinfo{person}{Carl Yang}, {and} \bibinfo{person}{Li Xiong}.} \bibinfo{year}{2024}\natexlab{}.
\newblock \showarticletitle{DPAR: Decoupled Graph Neural Networks with Node-Level Differential Privacy}. In \bibinfo{booktitle}{\emph{Proceedings of the ACM Web Conference 2024}} (Singapore, Singapore) \emph{(\bibinfo{series}{WWW '24})}. \bibinfo{publisher}{Association for Computing Machinery}, \bibinfo{address}{New York, NY, USA}, \bibinfo{pages}{1170–1181}.
\newblock
\showISBNx{9798400701719}
\urldef\tempurl%
\url{https://doi.org/10.1145/3589334.3645531}
\showDOI{\tempurl}


\bibitem[\protect\citeauthoryear{Zhang, Wei, Engels, and Shun}{Zhang et~al\mbox{.}}{2025}]%
        {chen2023clean}
\bibfield{author}{\bibinfo{person}{Ziyu Zhang}, \bibinfo{person}{Yuanhao Wei}, \bibinfo{person}{Joshua Engels}, {and} \bibinfo{person}{Julian Shun}.} \bibinfo{year}{2025}\natexlab{}.
\newblock \bibinfo{title}{CleANN: Efficient Full Dynamism in Graph-based Approximate Nearest Neighbor Search}.
\newblock
\newblock
\showeprint[arxiv]{2507.19802}~[cs.DB]
\urldef\tempurl%
\url{https://arxiv.org/abs/2507.19802}
\showURL{%
\tempurl}


\bibitem[\protect\citeauthoryear{Zhong, Li, Jin, Yang, Chu, Wang, Shen, Jia, Gu, Xie, Lin, Shen, Song, and Cheng}{Zhong et~al\mbox{.}}{2025}]%
        {vsag2024}
\bibfield{author}{\bibinfo{person}{Xiaoyao Zhong}, \bibinfo{person}{Haotian Li}, \bibinfo{person}{Jiabao Jin}, \bibinfo{person}{Mingyu Yang}, \bibinfo{person}{Deming Chu}, \bibinfo{person}{Xiangyu Wang}, \bibinfo{person}{Zhitao Shen}, \bibinfo{person}{Wei Jia}, \bibinfo{person}{George Gu}, \bibinfo{person}{Yi Xie}, \bibinfo{person}{Xuemin Lin}, \bibinfo{person}{Heng~Tao Shen}, \bibinfo{person}{Jingkuan Song}, {and} \bibinfo{person}{Peng Cheng}.} \bibinfo{year}{2025}\natexlab{}.
\newblock \bibinfo{title}{VSAG: An Optimized Search Framework for Graph-Based Approximate Nearest Neighbor Search}.
\newblock , \bibinfo{numpages}{14}~pages.
\newblock
\showISSN{2150-8097}
\urldef\tempurl%
\url{https://doi.org/10.14778/3750601.3750624}
\showDOI{\tempurl}


\end{thebibliography}

\end{document}